\renewcommand{\bibsection}{%
  \section*{References}
  \vspace{-0.5em}
}
\pgfplotsset{compat=1.18} 
\newcommand{\de}{\mathop{}\!\mathrm{d}}
\def\w{\omega}
\def\l{\lambda}
\def\m{\mu}
\def\t{\tau}
\def\e{\varepsilon}
\def\D{\Delta}
\def\W{\Omega}
\def\R{\mathbb{R}}
\def\BV{\mathbf{V}}
\def\TT{\mathcal{T}}
\def\VV{\mathcal{V}}
\newcommand{\set}[1]{\{ #1 \}} 
\DeclareMathOperator*{\cav}{cav}
\DeclareMathOperator*{\co}{co}
\DeclareMathOperator{\Cov}{Cov}
\DeclareMathOperator{\Var}{Var}
\DeclareMathOperator*{\interior}{int}
\DeclareMathOperator{\marg}{marg}
\DeclareMathOperator*{\Gr}{Gr}
\DeclareMathOperator*{\argmin}{argmin}
\DeclareMathOperator*{\argmax}{argmax}
\DeclareMathOperator{\supp}{supp}
\newcommand{\oV}{\mkern2mu \overline{\mkern-2mu V \mkern-2mu} \mkern2mu}
\newcommand{\uV}{\mkern2mu \underline{ V \mkern-3mu} \mkern3mu}
\newtheorem{theorem}{Theorem}
\newtheorem{claim}{Claim}
\newtheorem{corollary}{Corollary}
\newtheorem{lemma}{Lemma}
\newtheorem{proposition}{Proposition}
\theoremstyle{definition}
\newtheorem{definition}{Definition}
\newtheorem{Remark}{Remark}
\newenvironment{thmp}[1]{
  
  \thmalt
}{\endthmalt}
\newenvironment{defp}[1]{
  
  \defalt
}{\enddefalt}
\renewcommand{\blacksquare}{\rule{0.5em}{0.5em}}
\renewenvironment{proof}[1][\proofname]{\noindent  \par\pushQED{\qed}\normalfont  \topsep6\p@\@plus6\p@\relax
  \trivlist\item[\hskip\labelsep\bfseries#1\@addpunct{.}]  \ignorespaces
}{\ \hfill\blacksquare\endtrivlist\@endpefalse
}
\newcommand{\pushright}[1]{\ifmeasuring@#1\else\omit\hfill$\displaystyle#1$\fi\ignorespaces}
\newcommand{\pushleft}[1]{\ifmeasuring@#1\else\omit$\displaystyle#1$\hfill\fi\ignorespaces}
\newcommand\blfootnote[1]{%
   \begingroup
   \renewcommand\thefootnote{}\footnote{#1}%
   \addtocounter{footnote}{-1}%
   \endgroup
}
\begin{document}

\newpage

\setcounter{page}{1}

   \title{The Bounds of Mediated Communication}
    
    \author{Roberto Corrao\thanks{Department of Economics, Stanford, \href{mailto:rcorrao@stanford.edu}{rcorrao@stanford.edu}} \\ Stanford \\\and Yifan Dai\thanks{Department of Economics, MIT, \href{mailto:yfdai@mit.edu}{yfdai@mit.edu}} \\ MIT
}
     \date{\vspace{-0.5em} June\ 2026 \vspace{-1em}}
\onehalfspacing

\maketitle
\thispagestyle{plain}

\begin{abstract}
We study sender--receiver games with transparent motives, where an uninformed, sender-aligned mediator commits to a communication mechanism but cannot verify the sender's report. We compare mediated communication with cheap talk and Bayesian persuasion. A belief-value distribution is implementable by mediation exactly when it satisfies Bayes plausibility, receiver obedience, and zero covariance between posterior beliefs and sender values. This formulation separates the roles of commitment and verifiability. We show that mediation attains the persuasion value only when cheap talk also does; thus, whenever persuasion strictly outperforms cheap talk, unverifiable reports create a strict loss. We characterize when mediation strictly improves on cheap talk: the sender's value must exhibit countervailing effects along some direction in belief space. In finite-action, binary-state environments, this characterization yields a tractable description of optimal mediation. Applications cover platforms, lobbying, acceptance games, and a reinterpretation of the model as matching with externalities, where we study the efficiency-fairness tradeoff.
\end{abstract}

\quad JEL codes: D82, D83

\quad
\vspace{-0.5cm}Keywords: Sender-receiver games; communication equilibrium; value of mediation

\blfootnote{We are particularly indebted to Drew Fudenberg, Stephen Morris, and Alexander Wolitzky for many suggestions that substantially improved the paper. We are grateful to Charles Angelucci, Ian Ball, Alessandro Bonatti, Simone Cerreia-Vioglio, Laura Doval, Bob Gibbons, Marina Halac, Nicole Immorlica, Navin Kartik, Andrew Koh, Haruki Kono, Giacomo Lanzani, Xiao Lin, Elliot Lipnowski, Joel Flynn, Teddy Mekonnen, David Pearce, Doron Ravid, Sivakorn Sanguanmoo, Vasiliki Skreta, Alex Smolin, Juuso Toikka, Giacomo Weber, and Leeat Yariv for helpful comments and conversations. Roberto Corrao gratefully acknowledges the Gordon B. Pye Dissertation Fellowship for financial support. An earlier version of this paper was circulated under the title ``Mediated Communication with Transparent Motives."}

\thispagestyle{empty} 
\clearpage

\newpage

\setcounter{page}{1}

\section{Introduction}
Information intermediaries mediate communication between an informed party and an uninformed decision maker. A platform may stand between a seller who knows which buyers a product is best suited for and buyers who must decide whether to purchase the product. Similarly, a trade association may channel information from interested groups to policymakers who must choose among competing policy alternatives. In other cases, the same role is played not by a literal third party, but by a communication device or automated algorithm that replicates the commitment normally provided by an intermediary. In many of these examples, the informed party has an agenda over the receiver's actions that does not itself depend on the state. For example, a seller wants to maximize profits, and a lobbyist wants new favorable policies to be implemented. In addition, information is often soft: the mediator can process and disclose what the sender reports, but cannot directly verify the underlying state.

This paper studies the limits of mediated communication in these settings. Consider sender--receiver games with transparent motives \citep{LR20}: the sender is privately informed about the state, the receiver takes an action, and the sender's payoff depends on the receiver's action but not directly on the state. We compare three communication protocols. In cheap talk \citep{CS82}, the sender communicates directly with the receiver and cannot commit. In mediated communication \citep{myerson1982optimal,forges1986approach}, an uninformed mediator can commit to a stochastic communication mechanism that maps the sender's report into messages for the receiver. 
In Bayesian persuasion \citep{kamenica2011bayesian}, the sender can commit to an experiment about the true state, or equivalently, a mediator can freely verify that the sender's report corresponds to the true state. Assuming that the sender-preferred equilibrium is selected, persuasion is weakly preferred to mediation, which is weakly preferred to cheap talk.

Both persuasion and mediation rely on commitment to randomized mechanisms; what differs is whether the rule is applied to the true state or to the sender's unverifiable report. We interpret such randomized rules broadly, as algorithmic disclosure policies, smart contracts, or institutional procedures fixed in advance. Under this interpretation, the possible gaps between persuasion, mediation, and cheap talk yield two questions. First, when can communication based on unverifiable reports replicate what direct verification would achieve, that is, when is \emph{verifiability valuable}? Second, when can a non-verifying mediator improve on unmediated cheap talk, that is, when is \emph{mediation valuable}?

We first give a belief-based characterization of mediated communication: a joint distribution of the receiver's beliefs and the sender's interim values is implementable via a communication mechanism if and only if it satisfies three conditions: Bayes plausibility, receiver obedience, and a zero-covariance restriction between beliefs and values. This condition is the belief-space form of the sender's incentive constraint. It says that the sender's value may vary across the receiver's realized posterior beliefs, but this variation cannot be systematically correlated with the sender's report. Differently, cheap talk imposes the stronger restriction that the sender's value is constant across all realized posterior beliefs, and persuasion removes any incentive constraint on the sender's side.

Our second result uses this characterization to show that verifiability is valuable if and only if commitment is valuable. Thus, whenever commitment to a verifiable experiment strictly improves on cheap talk, the inability to verify the sender's report necessarily entails a strict loss relative to persuasion. The result rules out the possibility that mediation closes the entire gap between cheap talk and persuasion.

\begin{figure}[t]
\centering
\resizebox{0.4\textwidth}{!}{%
\begin{tikzpicture}[
    x=1cm,
    y=1cm,
    line join=round,
    line cap=round,
    simplex/.style={draw=black, line width=0.75pt, fill=black!2},
    direction/.style={draw=black, line width=0.75pt},
    extension/.style={draw=black!55, line width=0.6pt},
    benchmark/.style={draw=black!60, line width=0.7pt, densely dashed},
    valuebar/.style={draw=black!70, line width=0.55pt},
    valuecurve/.style={draw=black, line width=0.7pt, densely dashed},
    valuepoint/.style={circle, draw=black, fill=white, inner sep=1.25pt},
    beliefpoint/.style={circle, fill=black, inner sep=1.45pt},
    belieflabel/.style={font=\footnotesize, inner sep=1.2pt, fill opacity=.92, text opacity=1},
    valuelabel/.style={font=\footnotesize, inner sep=1.2pt, fill opacity=.92, text opacity=1}
]

\coordinate (A) at (0,0);
\coordinate (B) at (7.1,0);
\coordinate (C) at (2.75,4.65);
\draw[simplex] (A) -- (B) -- (C) -- cycle;

\coordinate (L) at (-0.35,0.49);
\coordinate (R) at (7.35,3.03);
\draw[extension] (L) -- (R);

\coordinate (p)       at (1.45,1.08);
\coordinate (muplus)  at (3.05,1.61);
\coordinate (muminus) at (4.05,1.94);

\draw[direction] (p) -- (muminus);

\coordinate (vp)       at (1.45,2.00);
\coordinate (vmuplus)  at (3.05,3.16);
\coordinate (vmuminus) at (4.05,2.37);

\draw[valuebar] (p) -- (vp);
\draw[valuebar] (muplus) -- (vmuplus);
\draw[valuebar] (muminus) -- (vmuminus);

\coordinate (benchL) at (-0.05,1.50);
\coordinate (benchR) at (5.95,3.48);
\draw[benchmark] (benchL) -- (benchR);


\node[valuepoint] at (vp) {};
\node[valuepoint] at (vmuplus) {};
\node[valuepoint] at (vmuminus) {};

\node[beliefpoint] at (p) {};
\node[beliefpoint] at (muplus) {};
\node[beliefpoint] at (muminus) {};

\node[belieflabel, anchor=north] at (1.42,0.95) {$p$};
\node[belieflabel, anchor=north] at (3.05,1.53) {$\mu^+$};
\node[belieflabel, anchor=north] at (4.13,1.85) {$\mu^-$};

\node[valuelabel, anchor=south west] at (3.10,3.18)
    {$ v^+$};
\node[valuelabel, anchor=south east] at (1.40,2.02)
    {$\overline V_{CT}(p)$};
\node[valuelabel, anchor=south west] at (4.13,2.40)
    {$v^-$};


\end{tikzpicture}
}
\caption{Directional improvability.}
\begin{minipage}{0.93\textwidth}
\small \emph{Note.} The triangle is the belief simplex over three states. $p$ is the prior; $\mu^+$ is a closer, more favorable posterior, and $\mu^-$ is a farther, less favorable one. The vertical levels are cheap-talk values: $v^+$ at $\m^+$, $\overline V_{CT}(p)$ at $p$, and $v^-$ at $\mu^-$, with \(v^+>\overline V_{CT}(p)>v^-\).
\end{minipage}
\label{fig:directional-improvability}
\end{figure}

Our third result characterizes when mediation improves on cheap talk through a condition we call directional improvability. Geometrically, this is a failure of \emph{weak single crossing} of the cheap-talk value correspondence along some one-dimensional direction through the prior. Specifically, along such a direction,
there is a favorable posterior near the prior yielding a value above the cheap-talk value at the prior, and a farther unfavorable one yielding a value below it (Figure \ref{fig:directional-improvability}).
The unfavorable posterior disciplines the sender's report, allowing the mediator to sustain the favorable one that direct communication could not credibly induce. The relative positions of the two posteriors ensure that this incentive balance leaves a gain over cheap talk. In finite-action, binary-state environments, the same construction yields a tractable characterization of sender-optimal mediation, reducing the problem to comparing finitely many candidate mechanisms.

These results show that the value of mediation must come from a specific kind of noise. The mediator exploits \emph{countervailing effects} in how the receiver's beliefs affect the sender's payoff: some belief changes are attractive to the sender, while others are unattractive but useful for credibility. Directional improvability is the condition that these countervailing effects exist and can be balanced in the sender's favor. Furthermore, we show that the same logic applies to costly signaling games with transparent motives (cf. \cite{koessler2024belief}). Here, the mediator has an even greater edge over unmediated communication due to the possibility of randomizing over both cheap messages and costly signals.

We develop two direct applications of our results. The first is lobbying through an intermediary. A policymaker chooses between a safe status quo and several active policies whose payoffs depend on the state. The lobbyist favors higher active policies and dislikes the status quo, whereas the policymaker wants the chosen policy to fit the state but reverts to the status quo when uncertainty is sufficiently high.  Countervailing effects arise because the lobbyist prefers the policymaker to be certain enough to avoid the status quo, while favoring higher active policies. We show that the set of priors where both verifiability and mediation are valuable expands when the policymaker's cost of choosing the wrong policies increases. This rationalizes the use of political mediators under severe negative consequences of choosing the wrong policy.

The second application is a class of acceptance games in which the receiver accepts or rejects a risky prospect and has a privately known outside option.
A non-monotonicity condition on the receiver's value of the prospect creates the countervailing effects that make mediation valuable. Under log-concavity of the outside-option distribution, this translates into a strict ex-ante Pareto improvement for both players.

We finally reinterpret the model as a reduced-form matching problem with externalities and study the tradeoff between efficiency and fairness therein. Agents with heterogeneous traits are assigned to groups, and the value generated by a group depends on its composition. We assume that all agents matched in the same group receive the value, that is, there are \emph{aligned preferences}. In this language, a distribution over posteriors becomes a distribution over group compositions, and the sender's interim value becomes the value generated by a group. We consider Pareto efficiency and two notions of fairness.
 Ex-ante fairness requires that the value generated by a group cannot covary with the group composition in a way that systematically benefits some traits. Ex-post fairness requires that all realized groups generate the same value. The distinction between mediation and cheap talk becomes the distinction between ex-ante and ex-post fairness constraints. Extending our results, we show that a feasible allocation is Pareto optimal and ex-ante fair if and only if it is Pareto optimal and ex-post fair: Under ex-ante fairness, high-value groups cannot systematically contain different traits from low-value groups. Hence, if realized group values are not already equal, one can put more weight on high-value groups and less weight on low-value groups without changing the aggregate trait distribution, making some traits better off and no trait worse off.

\paragraph{Literature review}
This paper studies mediated communication \citep{myerson1982optimal,forges1986approach} using a belief-based formulation, as in Bayesian persuasion \citep{kamenica2011bayesian} and cheap talk with transparent motives \citep{LR20}.\footnote{\cite{aumann1995repeated} and \cite{aumann2003long} use belief-based arguments to study, respectively, zero-sum repeated games with asymmetric information and long cheap talk. More recently, \cite{koessler2024belief} apply a belief-based approach to characterize PBE payoffs in signaling games.} The belief-based formulation allows us to characterize feasible distributions over posterior beliefs under mediation and compare mediation directly with persuasion and cheap talk. Related duality arguments appear in the Bayesian persuasion literature \citep{DM19,DK22,kolotilin2022persuasion}.

A recent literature compares mediation with other communication protocols in the uniform-quadratic environment of \cite{CS82}. \cite{blume2007noisy} compare noisy cheap talk with cheap talk, while \cite{goltsman2009mediation} compare mediation, cheap talk, and delegation. We instead compare persuasion, mediation, and cheap talk under state-independent sender preferences, but without imposing additional parametric assumptions.  Relatedly, \cite{rudov2026extreme} study when correlated equilibria improve on Nash equilibria in general complete information games. Unlike this paper, we focus on cheap-talk games with asymmetric information.

The closest paper to ours is \cite{salamanca2021value}, which studies mediated communication in \emph{finite} games through a recommendation-based formulation close to \cite{myerson1982optimal}. Our analysis differs in three respects. First, the models are not nested: we focus on transparent motives, but allow for infinitely many actions and states. Second, we work directly with posterior beliefs rather than recommendations. Importantly, this different approach is what we leverage to explicitly compare mediation with persuasion and cheap talk. Third, the main results are different. While \cite{salamanca2021value} establishes strong duality for the recommendation-based mediation problem, we use perturbation arguments to characterize when verifiability and mediation are valuable.\footnote{\cite{salamanca2021value} provides a binary-state example with transparent motives in which both verifiability and mediation are valuable, but does not characterize when this happens.}

The paper is also related to work on Bayesian persuasion with limited commitment \citep{LL22,lipnowski2022persuasion,koessler2021information,le2024mediated}. Like mediation, these protocols can be viewed as intermediate between persuasion and cheap talk. Under transparent motives, however, some of them are equivalent to these two benchmarks.\footnote{For example, the credible information structures in \cite{LL22} coincide with those feasible under persuasion, and the sender's optimal payoff in long cheap-talk (\cite{aumann2003long}, \cite{LR20}) equals the single-round cheap-talk one.} By contrast, mediation can deliver a sender value strictly between the persuasion and cheap-talk bounds; we characterize when this occurs.

When interpreted as a matching model, our contribution lies at the intersection of several literatures. Aligned preferences have been applied, among other things, to study stability and efficiency in both partnerships, e.g., \cite{farrell1988partnerships}, and one-to-one matching markets, e.g., \cite{ferdowsian2025strategic}. Unlike these papers, we allow for arbitrary externalities/spillover effects and focus on the efficiency-fairness tradeoff.\footnote{Notably, \cite{echenique2024stable} consider aligned preferences in one-to-one matching markets and connect both stability and fairness (in a Rawlsian sense) to solutions of optimal transport problems. Again, they do not consider externalities and nonlinear distributional effects.} More closely related to our paper, \cite{doval2021information} and \cite{kolotilin2022persuasion} start from a pure information-design problem to derive implications on efficiency, welfare, and assortative patterns in models with spillover effects. Differently, we focus on aligned preferences and the interpretation of the incentive constraints of mediation and cheap talk as fairness conditions.\footnote{\cite{kolotilin2022persuasion} generalize the models of spillover effects in firms of \cite{saint2001distribution} and in schools of \cite{epple1998competition}.}
Finally, in the context of algorithmic design, \cite{liang2026algorithm} have also employed information-design tools to study the accuracy-fairness tradeoff.

\paragraph{Outline of the paper}
Section \ref{Subsec: Illustrative Example} presents a platform-mediated disclosure example that previews our results. Section \ref{sec:model} formally introduces the model. Section \ref{sec:imple} develops the belief-based characterization of mediated communication. Section \ref{sec:main_results} characterizes when verifiability and mediation are valuable, and solves the optimal mediation problem in binary-state environments. Section \ref{sec: model interpretation} discusses interpretations of the model. Section \ref{sec:applications} applies the results to lobbying through an intermediary and to acceptance games. Section \ref{sec:matching} reinterprets the framework as a matching problem with externalities and derives implications for the efficiency-fairness tradeoff. Finally, Section \ref{sec:signaling} extends the analysis to signaling games with transparent motives. All proofs are relegated to the Appendix.

\section{Illustration: Platform-Mediated Disclosure}
\label{Subsec: Illustrative Example}

A seller lists a product on an online platform whose appeal depends on horizontal fit, such as a laptop that may be better suited to designers or to gamers, or software that may be better suited to general users or to specialists. Consider a binary-state model with three buyers: two of type $0$, and one of type $1$. The seller privately observes which buyer type the product is better matched to, summarized by a state $\w\in \{0,1\}$. A state-$0$ product is better suited to type-$0$ buyers, while a state-$1$ product is better suited to type-$1$ buyers. Buyers' common prior is \(p=\Pr(\omega=1) \in (0,1)\). The price is normalized to
$1/3$, and the payoffs from buying are $(1-\w) -1/3$ for type-$0$ buyers and $\w-1/3$ for type-$1$ buyers.\footnote{Section \ref{sec:signaling} allows the seller to choose prices endogenously, turning the game into a costly signaling one.}

A platform mediates communication by committing in advance to a public information policy, possibly stochastic, that maps the seller's report into a public message. The platform does not observe the state or buyers' types and relies on the seller's unverifiable report.\footnote{Since communication is public and the price is common, it is without loss to assume the platform does not elicit buyers' types.} We compare the seller's optimal payoff under mediation with two benchmarks. Under cheap talk, the seller communicates directly with buyers. Under persuasion, the platform can verify the state directly, so it does not rely on the seller's report. Thus, the value of mediation is the gain from platform commitment, while the value of verifiability is the additional gain from allowing the platform to verify the state. 

In all cases, communication is public: there are no targeted messages to different buyers. Buyers observe public information about the product and update their belief from \(p\) to a common posterior \(\mu \in [0,1]\). Buying is a best response for a type-$1$ buyer if and only if $\mu \geq 1/3$, and for a type-$0$ buyer if and only if $1-\mu \geq 1/3$. Since the price is fixed, we rescale the seller's payoff to be the number of purchasing buyers. The seller's indirect utility correspondence $\BV(\mu)$ in terms of the buyers' posterior is presented in Figure \ref{Fig: illustrative}. The key feature is that \(\BV\) is non-monotone: a posterior that makes the product more attractive to one consumer type can make it less attractive to the other. This countervailing effect creates a wedge between cheap talk and mediation.

The comparison depends on the prior, as shown in Figure \ref{Fig: illustrative}(a). If the prior is intermediate,
\(p\in[1/3,2/3]\), both buyer groups already buy, so all three protocols attain the maximal payoff. If the prior is low, \(p\in(0,1/3)\), both mediation and verifiability are valuable. If the prior is high, \(p\in(2/3,1)\), verifiability is valuable, but mediation is not.

We focus on the case $p \in (0, 1/3)$, where both mediation and verifiability are valuable. 
Under cheap talk, every message used in equilibrium must give the seller the same payoff. In the belief-based representation, the seller's payoff must be constant across all induced posteriors, whose distribution must average to $p$. Hence, the seller's optimal payoff under cheap talk at $p$ is $2$, which corresponds to the quasiconcave envelope of \(\BV\) evaluated at \(p\).

\begin{figure}[t]
\centering
\caption{Illustrative Example}
\label{Fig: illustrative}

\def\Vscale{1} 
\definecolor{mutwocol}{RGB}{190,45,45}     
\definecolor{muthreecol}{RGB}{55,105,180}  

\tikzset{
    frontier/.style={black,line width=1.08pt,line cap=round,line join=round},
    guide/.style={black!30,densely dotted,line width=0.65pt},
    point/.style={only marks, mark=*, mark size=2.35pt},
    hintone/.style={draw=black,line width=0.95pt,-stealth},
    hinttwo/.style={draw=mutwocol,line width=0.95pt,-stealth},
    hinttwovert/.style={draw=mutwocol,line width=0.95pt,-stealth},
    hintthree/.style={draw=muthreecol,line width=0.95pt,-stealth}
}
\begin{minipage}[t]{0.48\textwidth}
\centering
\begin{tikzpicture}
\begin{axis}[
    width=\textwidth,
    height=5.9cm,
    xmin=0, xmax=1.08,
    ymin=0.6*\Vscale, ymax=3.3*\Vscale,
    axis lines=middle,
    axis line style={black,-{Latex[length=1.8mm]}},
    xtick={0,0.33,0.66,1}, 
    xticklabels={}, 
    ytick={\Vscale,2*\Vscale,3*\Vscale},
    yticklabels={$1$,$2$,$3$}, 
    tick label style={font=\scriptsize},
    xlabel={$\mu$}, 
    ylabel={$\BV(\mu)$}, 
    xlabel style={ at={(axis description cs:0.985,0)}, anchor=west, font=\scriptsize, color=black }, 
    ylabel style={ at={(axis description cs:0.02,1.005)}, anchor=south, rotate=0, font=\scriptsize, color=black }, 
    tick style={black}, 
    clip=false
]

\draw[frontier]
    (axis cs:0,2)       -- (axis cs:0.33,2)
    -- (axis cs:0.33,3) -- (axis cs:0.66,3)
    -- (axis cs:0.66,1) -- (axis cs:1,1);

\node[font=\scriptsize, anchor=base, yshift=-2ex, inner sep=0pt] at (axis cs:0,0.6*\Vscale) {$0$};
\node[font=\scriptsize, anchor=base, yshift=-2ex, inner sep=0pt] at (axis cs:0.33,0.6*\Vscale) {$1/3$};
\node[font=\scriptsize, anchor=base, yshift=-2ex, inner sep=0pt] at (axis cs:0.66,0.6*\Vscale) {$2/3$};
\node[font=\scriptsize, anchor=base, yshift=-2ex, inner sep=0pt] at (axis cs:1,0.6*\Vscale) {$1$};

\addplot[
    orange!85!black, very thick,
    domain=0:0.333333,
    samples=200
] {\Vscale*((2-3*x)/(1-2*x))};

\addplot[
    orange!85!black, very thick
] coordinates {
    (0.333333,{\Vscale*3})
    (0.666667,{\Vscale*3})
};

\addplot[
    orange!85!black, very thick
] coordinates {
    (0.666667,{\Vscale*1})
    (1,{\Vscale*1})
};

\addplot[
    blue!80!black, very thick, dashed, domain=0:0.75, samples=2
] coordinates {(0,2) (0.33,3)};
\addplot[
    blue!80!black, very thick, dashed, domain=0.75:1, samples=100
] coordinates {(0.33,3) (0.66,3)};
\addplot[
    blue!80!black, very thick, dashed, domain=0.75:1, samples=100
] coordinates {(0.66,3) (1,1)};

\addplot[
    red!75!black, very thick, dashed
] coordinates {(0,2) (0.33,2)};
\addplot[
    red!75!black, very thick, dashed
] coordinates {(0.33,3) (0.66,3)};
\addplot[
    red!75!black, very thick, dashed
] coordinates {(0.66,1) (1,1)};

\node[
    font=\scriptsize,
    text=blue!80!black,
    fill=white,
    inner sep=1pt
] at (axis cs:0.15,2.7) {BP};

\node[
    font=\scriptsize,
    text=orange!85!black,
    inner sep=1pt
] at (axis cs:0.25,2.25) {MD};

\node[
    font=\scriptsize,
    text=red!75!black,
    inner sep=1pt
] at (axis cs:0.2,1.8) {CT};

\node[font=\scriptsize] at (axis cs:0.2,1.3)
    {\textcolor{blue!80!black}{BP} $>$ \textcolor{orange!85!black}{MD} $>$ \textcolor{red!75!black}{CT}};

\node[font=\scriptsize] at (axis cs:0.5, 3.3)
    {\textcolor{blue!80!black}{BP} $=$ \textcolor{orange!85!black}{MD} $=$ \textcolor{red!75!black}{CT}};

\node[font=\scriptsize] at (axis cs:0.9,1.3)
    {\textcolor{blue!80!black}{BP} $>$ \textcolor{orange!85!black}{MD} $=$ \textcolor{red!75!black}{CT}};

\end{axis}
\end{tikzpicture}

\vspace{0.35em}
{\small (a) Comparison of Bayesian persuasion, mediation, and cheap talk}
\end{minipage}
\hfill
\begin{minipage}[t]{0.48\textwidth}
\centering
\begin{tikzpicture}

\def\pstar{0.166667}
\def\mup{0.333333}
\def\mum{1}

\begin{axis}[
    width=\textwidth,
    height=5.9cm,
    xmin=0, xmax=1.08,
    ymin=0.6*\Vscale, ymax=3.3*\Vscale,
    axis lines=middle,
    axis line style={black,-{Latex[length=1.8mm]}},
    xtick={0,0.166667,0.333333,0.666667,1}, 
    xticklabels={}, 
    ytick={\Vscale,2*\Vscale,3*\Vscale},
    yticklabels={$1$,$2$,$3$}, 
    tick label style={font=\scriptsize},
    xlabel={$\mu$}, 
    ylabel={$\BV(\mu)$}, 
    xlabel style={ at={(axis description cs:0.985,0)}, anchor=west, font=\scriptsize, color=black }, 
    ylabel style={ at={(axis description cs:0.02,1.005)}, anchor=south, rotate=0, font=\scriptsize, color=black }, 
    tick style={black}, 
    clip=false
]

\node[font=\scriptsize, anchor=base, yshift=-2ex, inner sep=0pt] at (axis cs:0,0.6*\Vscale) {$0$};
\node[font=\scriptsize, anchor=base, yshift=-2ex, inner sep=0pt] at (axis cs:0.166667,0.6*\Vscale) {$p$};
\node[font=\scriptsize, anchor=base, yshift=-2ex, inner sep=0pt] at (axis cs:0.333333,0.6*\Vscale) {$\mu^+$};
\node[font=\scriptsize, anchor=base, yshift=-2ex, inner sep=0pt] at (axis cs:1,0.6*\Vscale) {$\mu^-$};

\draw[frontier]
    (axis cs:0,{2*\Vscale}) 
    -- (axis cs:0.333333,{2*\Vscale})
    -- (axis cs:0.333333,{3*\Vscale}) 
    -- (axis cs:0.666667,{3*\Vscale})
    -- (axis cs:0.666667,{1*\Vscale}) 
    -- (axis cs:1,{1*\Vscale});

\draw[black!35, densely dotted]
    (axis cs:\pstar,{0.65*\Vscale}) -- (axis cs:\pstar,{3.25*\Vscale});

\draw[black!35, densely dotted]
    (axis cs: 0 , 2*\Vscale) -- (axis cs: 1 , 2*\Vscale);
    
\addplot[
    point,
    mark options={draw=black,fill=black}
] coordinates {(0,{2*\Vscale})};

\addplot[
    point,
    mark options={draw=black,fill=mutwocol}
] coordinates {(\mup,{3*\Vscale})};

\addplot[
    point,
    mark options={draw=black,fill=muthreecol}
] coordinates {(\mum,{1*\Vscale})};

\node[
    font=\scriptsize,
    align=center,
    text=mutwocol,
    inner sep=1pt
] at (axis cs:0.46,{2.55*\Vscale})
    {closer\\favorable\\posterior};

\node[
    font=\scriptsize,
    align=center,
    text=muthreecol,
    inner sep=1pt
] at (axis cs:0.85,{1.4*\Vscale})
    {farther\\unfavorable\\posterior};

\draw[
    decorate,
    decoration={brace,amplitude=3.5pt},
    black!55
]
    (axis cs:\mup,{3.08*\Vscale}) -- (axis cs:\mum,{3.08*\Vscale});

\node[
    font=\scriptsize,
    align=center,
    text=black
] at (axis cs:0.7,{3.28*\Vscale})
    {zero covariance};

\draw[
    decorate,
    decoration={brace,amplitude=3.5pt},
    black!55
]
    (axis cs:0,{3.33*\Vscale}) -- (axis cs:0.6667,{3.33*\Vscale});

\node[
    font=\scriptsize,
    align=center,
    text=black
] at (axis cs:0.38,{3.53*\Vscale})
    {Bayes plausibility};

\end{axis}
\end{tikzpicture}

\vspace{0.35em}
{\small (b) Construction of a strictly improving mediation plan}
\end{minipage}

\par\medskip
\begin{minipage}{0.93\textwidth}
\small
\emph{Notes.}
Panel (a) illustrates the seller's optimal payoff from Bayesian persuasion (blue dashed), mediation (orange solid), and cheap talk (red dashed, coincides with the upper envelope of $\BV$). Panel (b) illustrates the three-posterior mediation construction in the text with $p \in (0,1/3)$.
\end{minipage}

\end{figure}

Mediated communication relaxes this constraint. The seller need not be indifferent across all realized posteriors, but only across reports to the platform. Theorem \ref{Thm: Implementability} shows that a mediated outcome can be represented by a distribution over buyer beliefs and seller values $(\mu,v)$ with $v\in \BV(\mu)$. It satisfies Bayes plausibility, and the seller's incentive constraint becomes
\[
    \Cov[\mu, v] = \mathbb{E}\big[(\mu-p)v\big]=0 .
\]
Intuitively, state-1 reports make high posteriors (on $\w=1$) more likely, while state-0 reports make low posteriors more likely. If the seller's value $v$ were systematically higher at high posteriors, the seller would strictly prefer the state-1 report; if it were lower, the seller would prefer the state-0 report. Truth-telling therefore requires $v$ to be uncorrelated with $\mu$, but not constant across posteriors.

This additional flexibility is valuable when the platform can use an unfavorable posterior $\mu^-$ to discipline a favorable one $\mu^+$. In this example, since the sender's payoff under cheap talk is $2$, it is enough to find \((\mu^+,v^+)\) and $(\mu^-, v^-)$ in the graph of $\BV$ such that
\[
    v^+> 2 >v^-
    \qquad\text{and}\qquad
    \mu^+\in(p,\mu^-).
\]
The first condition allows gains and losses to offset each other in the seller's incentive constraint. The second makes it possible to satisfy zero covariance while placing relatively more weight on the favorable posterior, because it is closer to the prior \(p\). 

In words, the platform sometimes sends messages that reduce expected demand for the seller, i.e., they induce $\m^-$.
These unfavorable messages are the credibility cost that makes
truthful reporting incentive-compatible. With this, the platform
can also send favorable messages that cheap talk could not sustain, i.e., they induce $\m^+$. The posterior at $0$ restores Bayes plausibility. Mediation is valuable
exactly when the favorable effect dominates the credibility cost, that is, when $\mu^+\in(p,\mu^-)$. In this case, 
 relative to direct communication, the seller has a strictly positive ex-ante willingness to pay for access to the platform.\footnote{For simplicity, we assume that the seller chooses whether to use the platform before observing \(\w\). Under transparent motives, however, the highest sender's equilibrium payoff in the augmented game where the sender decides whether to hire a mediator after observing \(\w\) coincides with the sender's highest communication-equilibrium payoff. See \cite{KS2024infomediated}.} The construction is illustrated in Figure \ref{Fig: illustrative}(b).

Beyond the binary-state example, Theorem \ref{Thm: MD vs CT} shows that such one-dimensional perturbations characterize when mediation strictly benefits the sender in general. In binary-state environments, a similar construction also delivers the sender-optimal mediation plan (Proposition \ref{prop: opt MD binary}). In this example, the optimum is attained by a distribution supported on posteriors \(\{0,\mu^+=1/3,\mu^-= 1\}\). The posterior $\mu^+=1/3$ is the closest posterior that raises the payoff above 2; moving farther right does not increase demand, but makes the incentive constraint harder to satisfy. The posterior $\mu^-=1$ is the farthest posterior with a payoff below $2$, and is therefore the most effective credibility cost.

Now consider persuasion, where the platform can verify the product's state and hence implement any Bayes-plausible distribution of posteriors. For $p\in (0, 1/3)$, the optimal verifiable experiment mixes between posteriors $0$ and $1/3$, strictly improving on cheap talk. Without verification, this is not implementable because there is a positive covariance between the posterior and seller value. Conversely, if the optimal verifiable experiment does not exploit the covariance, as for $p\in [1/3,2/3]$, cheap talk can attain the same payoff. This illustrates Theorem \ref{Thm: BP = MD iff BP = CT}: commitment to an experiment strictly improves on cheap talk if and only if verifiability is valuable.

Mediation can also benefit buyers. Aggregate buyer utility is convex and piecewise linear in the common posterior. Seller-optimal mediation induces posteriors supported on $\{0,1/3,1\}$, while an informative seller-optimal cheap-talk equilibrium induces posteriors supported on $\{0,1/3\}$. The former is a mean-preserving spread of the latter, so relative to this equilibrium, mediation benefits both sides. For $p \in (0,1/6)$, mediation benefits buyers relative to every cheap-talk equilibrium. Section \ref{ssec:accepta_pareto} extends this Pareto observation to a general class of games with a privately informed receiver; see Proposition \ref{pro:accep_Pareto}.

\section{The Model}
\label{sec:model}
We present our model and main results in the context of mediated communication.
In Section \ref{sec:matching}, we rephrase the model as matching with externalities and aligned preferences, and use our results to study the efficiency-fairness tradeoff therein.

We consider sender-receiver games under transparent motives. The sender is privately informed about the state $\w \in \W$ drawn from a full-support prior $p\in\D(\Omega)$, where $\W$ is finite with $|\W|= n$.
The receiver is uninformed about $\w$ and takes an action $a \in A$. The sender has a \emph{state-independent} utility function $u_S: A \to \R$ and the receiver has utility $u_R: \W \times A \to \R$. The action set $A$ is a compact metric space, and $u_S$ and $u_R$ are continuous. The receiver can be interpreted as either an individual player or as multiple players with common information about the state, separate actions, and no strategic externalities (e.g., Sections \ref{Subsec: Illustrative Example} and \ref{ssec:accepta_pareto}).\footnote{We consider finitely many states only for simplicity, and we show in Appendix \ref{Sec: infinite state} that Theorem \ref{Thm: Implementability}, Theorem \ref{Thm: BP = MD iff BP = CT}, Point (i) of Theorem \ref{Thm: MD vs CT}, and Theorem \ref{Thm: multidim quasiconvex} hold when $\W$ is an arbitrary compact metric space. The transparent-motive assumption has bite, and we defer its discussion to Section \ref{sec: model interpretation}.}

The sender and receiver communicate through an uninformed mediator, who commits to a communication mechanism $\sigma: M_S \to \D(M_R)$. $M_S$ is the reporting space for the sender, and $ M_R$ is the space of messages for the receiver; both are rich enough. After observing $\w$, the sender sends a report $m_S\in M_S$ to the mediator. Given the report, the receiver observes a message $m_R \in M_R$ drawn according to $\sigma$ and takes an action $a \in A$. 

 We consider the communication game induced by $\sigma$ and focus on its Bayes-Nash equilibria (BNE), also known as the \emph{communication equilibria} (see \cite{myerson1982optimal} and  \cite{forges1986approach}).
A mechanism $\sigma$ and a communication equilibrium induce an outcome distribution $\pi \in \D(\W\times A)$. The Revelation Principle \citep{myerson1982optimal, forges1986approach} implies that 
    an outcome distribution $\pi$ is induced by some communication equilibrium if and only if:
    \begin{itemize}
        \item [(i)] Consistency: $ \marg_{\W}{\pi}=p$
 
        \item [(ii)]  Obedience: For $\pi$-almost all $a \in A$, $\mathbb{E}_{\pi^a}[u_R(\w,a)]=\max_{a' \in A} \mathbb{E}_{\pi^a}[u_R(\w,a')]$,
         where $\pi^a \in \D(\W)$ is a version of the conditional probability given $a\in A$;
        \item [(iii)]  Honesty: For all $\w,\w' \in \W$, 
        $\mathbb{E}_{\pi^{\w}}[u_S(a)] \ge \mathbb{E}_{\pi^{\w'}}[u_S(a)]$, where $\pi^{\w} \in \D(A)$ is the conditional probability given $\w\in \W$.
    \end{itemize}
We say that $\pi$ is a communication equilibrium (CE) outcome if (i), (ii), and (iii) hold. In this paper, we select the CE outcome that maximizes the sender's ex-ante payoff.

\paragraph{Cheap talk and Bayesian persuasion}
We compare mediated communication with cheap talk (i.e., unmediated communication) and Bayesian persuasion.

Under cheap talk, there is a sufficiently rich message space $M$ and the following timing: First, the sender observes $\w$ and sends a message $m\in M$ to the receiver, then the latter takes an action $a$. No player can commit to a contingent strategy, and we select the BNE that maximizes the sender's ex-ante payoff. The equilibrium outcomes are those that satisfy (i), (ii), and a strengthening of (iii) where the sender has no strict incentive to deviate from sending the equilibrium message.\footnote{See \cite{CS82} and \cite{LR20} for the formal definition of BNE under cheap talk.}

Under Bayesian persuasion, before observing the state, the sender designs an experiment $\sigma : \W \to \Delta(M)$ that takes as input the true state and commits to revealing its realization $m$ to the receiver. The receiver takes an action $a$ after observing $m$. We select the BNE of this game that maximizes the sender's ex-ante payoff. The feasible equilibrium outcomes are those that satisfy (i) and (ii) without any incentive compatibility requirement for the sender.

The sender weakly prefers persuasion to mediation, and mediation to cheap talk. We call the gaps in the sender's payoff respectively the \emph{value of verifiability} and the \emph{value of mediation}. We say that verifiability, respectively mediation, is \emph{valuable} at $p$ when the corresponding gap is strictly positive. We next present our main results and then, in Section \ref{sec: model interpretation}, discuss the interpretation of the model and its assumptions.

\section{Belief-based Approach to Mediation}
\label{sec:imple}
In this section, we characterize the feasible distributions of receiver beliefs and sender values under mediation, as well as the sender's maximum payoff.
We start by defining the interim value correspondence  $\mathbf{V}: \D(\W) \rightrightarrows \R$ by
        \begin{equation}
        \label{eq:value_corresp}
            \mathbf{V}(\m):=\co\left(u_S \left(\argmax_{a \in A}\mathbb{E}_{\m}[u_R(\w,a)]\right )\right).
        \end{equation}
For every posterior $\mu \in \D(\W)$, the set $\mathbf{V}(\m)$ collects all the possible (expected) sender payoffs that can be attained by some (potentially mixed) best response of the receiver at posterior $\m$.\footnote {It is standard to show that $\mathbf{V}$ is a Kakutani correspondence. Alternatively, we can take any Kakutani correspondence $\BV$ as a primitive capturing the set of the sender's continuation values given each posterior. This setting is strictly more general than the one presented here, and we will exploit it in Section \ref{ssec:accepta_pareto}.} Define the graph of this correspondence $\Gr(\mathbf{V}) \subseteq \Delta(\W) \times \mathbb{R}$ and  the functions $\oV(\m)=\max\BV(\m)$ and $\uV(\m)=\min\BV(\m)$.

Any CE outcome $\pi$ induces a distribution $\eta^\pi \in \Delta(\Delta(\W) \times \mathbb{R})$ over pairs $(\m,v) \in \D(\W) \times \mathbb{R}$ of receiver posterior beliefs and sender expected values. Formally,
\begin{equation*}
    \eta^\pi (S) = \int_{\W\times A} \mathbb{I}[(\pi^a, u_S(a))\in S] \de \pi(\w,a)
\end{equation*}
for every Borel measurable $S \subseteq \Delta(\W) \times \mathbb{R}$.
\begin{definition}   \label{def: induced by CE}
   A distribution of posteriors and values $\eta\in \D(\D(\W)\times \R)$ is \emph{induced by} some CE outcome $\pi \in \D(\W \times A)$ if $\eta = \eta^\pi$. 
\end{definition}

Our first result characterizes implementable distributions over pairs of posteriors and interim values in terms of conditions parallel to Consistency, Obedience, and Honesty. 

\begin{theorem} \label{Thm: Implementability}
If a distribution of receiver beliefs and sender values $\eta \in \D(\D(\W)\times \R)$ is induced by some CE outcome, then it satisfies
        \begin{itemize}
        \item [(i)]  Consistency*:
            \begin{equation}
            \label{Eq: Bayes-plausibility}
                \mathbb{E}_{\eta}[\m]=p;
                \tag{BP}
            \end{equation}
            \item [(ii)] Obedience*:
            \begin{equation}
            \label{Eq: Obedience}
                \eta(\Gr(\mathbf{V}))=1;
                \tag{OB}
            \end{equation}
            \item [(iii)] Honesty*:
            \begin{equation}
            \label{eq:zero_cov}
                \Cov_{\eta}[v,\m] =\mathbf{0}, \tag{zeroCov}
            \end{equation}  
            that is, $\Cov_\eta[v,\mu(\w)] = 0$ for every $\w \in \W$.
        \end{itemize}
Conversely, if $\eta$ satisfies (i), (ii), and (iii), then there exists a CE outcome $\pi \in \D(\W \times A)$ such that $\mathbb{E}_{\eta}[v] = \mathbb{E}_{\pi}[u_S]$.
\end{theorem}

The zero-covariance condition \eqref{eq:zero_cov} has a simple incentive interpretation. Conditional on $\w$, Bayes' rule tilts the distribution over induced posteriors and values toward those that put higher probability on $\omega$: $\de \eta^{\w}(\mu, v) = \frac{\mu(\w)}{p(\w)}\de \eta(\mu, v)$. Thus, if $\Cov_\eta[v,\mu(\omega)]>0$ under the unconditional distribution $\eta$, then the outcomes that are more likely after report $\w$ are also the outcomes that are better for the sender. Equivalently, report $\w$ gives the sender an above-average continuation payoff:
\begin{align} \label{Eq: alternative zerocov}
    \Cov_{\eta}[v,\mu(\w)] = \mathbb{E}_{\eta}[v\mu(\w)] - \mathbb{E}_\eta[v] \mathbb{E}_{\eta}[\mu(\w)] = p(\w) (\mathbb{E}_{\eta^\w}[v] - \mathbb{E}_\eta[v]).
\end{align}
Since the ex ante payoff is the $p$-weighted average of the payoffs from all reports, some other report $\omega'$ must then give a lower payoff. The sender in state $\omega'$ would prefer to report $\omega$, violating Honesty. Therefore, Honesty requires zero covariance for every $\omega\in\Omega$.

Applying our Theorem \ref{Thm: Implementability}, we can rewrite the mediator's problem in the belief space. The mediator chooses $\eta \in \D(\D(\W)\times \R)$ to maximize the sender's expected payoff:
\begin{align} \label{Eq: MD problem}
    \VV_{MD}(p) \coloneqq \sup_{\eta \in \D(\D(\W)\times \R)} & \int_{\D(\W)\times \R} v \, \de \eta(\mu,v) \tag{MD}\\
    \textnormal{subject to:} &\; \eqref{Eq: Bayes-plausibility}, \eqref{Eq: Obedience}, \eqref{eq:zero_cov}. \notag
\end{align}

\begin{Remark}\label{Rmk: Existence}
    \eqref{Eq: MD problem} admits a solution $\eta^*$ supported on no more than $2n-1$ points. The solution exists since the feasible set is compact and the objective function is continuous in the weak topology. The constraints \eqref{Eq: Bayes-plausibility} and \eqref{eq:zero_cov} are in the form of moment conditions \`a la \cite{Win88}, which implies that optimal mediation can be achieved with no more than $2n-1$ messages. We discuss the technical details in Appendix \ref{App: existence and value}.
\end{Remark}

Bayesian persuasion and cheap talk can also be analyzed via the belief-based approach. The sender's preferred Bayesian persuasion value $\VV_{BP}(p)$ is obtained by maximizing over $\eta \in \D(\D(\W)\times \R)$ that satisfies \eqref{Eq: Bayes-plausibility} and \eqref{Eq: Obedience}, without the zero-covariance constraint. Under cheap talk, the sender's highest equilibrium payoff $\VV_{CT}(p)$ is obtained by maximizing over $\eta \in \D(\D(\W)\times \R)$ that satisfies \eqref{Eq: Bayes-plausibility} and \eqref{Eq: Obedience}, while replacing \eqref{eq:zero_cov} by the stronger zero-variance constraint $\Var_\eta[v] = 0$.\footnote{This zero-variance condition is equivalent to the requirement that the sender's value is constant over the support of the distribution of receiver beliefs, that is, the condition in \cite{LR20}.} This is equivalent to requiring that there is no stochastic dependence between $v$ and $\m$, a stricter condition than the zero-covariance constraint that in turn only imposes no \emph{linear} dependence between $v$ and $\m$.
\begin{Remark}[Analysis of variance]
\label{rk: anova}
Given a distribution of beliefs and values \(\eta\) satisfying \eqref{Eq: Bayes-plausibility} and \eqref{Eq: Obedience}, the law of total variance gives
\[
    \Var_\eta[v]
    =
    \Var_p\!\left[\mathbb{E}_{\eta^{\omega}}[v]\right]
    +
    \mathbb{E}_p\!\left[\Var_{\eta^{\omega}}[v]\right].
\]
The first term is the variation in the sender's expected value across state reports; the second is the average variation in value induced by the mediator after a report. Cheap talk imposes \(\Var_\eta[v]=0\). Mediation instead imposes only $\Var_p\!\left[\mathbb{E}_{\eta^{\omega}}[v]\right]=0$,
so the sender's value cannot vary across reports, but it may vary across the receiver's realized posteriors.
\end{Remark}

Thus, mediation can improve on cheap talk only by creating payoff dispersion $\Var_{\eta^{\omega}}[v]>0$ that is orthogonal to reporting incentives: some posterior realizations give the sender more than the cheap-talk value, others give less, and the mediator balances them so that no type wants to misreport. The next section formalizes this idea through \emph{directional improvability}, which characterizes when such offsetting variation can be constructed.

\section{Comparison of communication protocols}
\label{sec:main_results}
We start with a fundamental result that leverages the assumption of transparent motives to simplify the comparison of persuasion, mediation, and cheap talk in the rest of the section. While the following result is simple in nature, we shall see that it has far-reaching consequences, even beyond communication in games (see Section \ref{sec:matching}).

\begin{theorem} \label{Thm: BP = MD iff BP = CT} Verifiability is valuable at $p$ if and only if commitment is valuable at $p$, that is,
$\VV_{BP}(p) > \VV_{MD}(p) $ if and only if $\VV_{BP}(p) > \VV_{CT}(p)$.
\end{theorem} 
The only if direction is immediate. For the if direction, fix a feasible distribution under mediation and perturb its probabilities in proportion to the payoff deviations from the mean. This puts more weight on above-average values and less weight on below-average values. Condition \eqref{eq:zero_cov} says that this redistribution does not affect the posterior mean. Such a perturbation would raise the sender's payoff unless the sender's payoff is constant almost surely, yielding the desired result.

\paragraph{Full-dimensionality.}
We introduce a local full-dimensionality condition for cheap talk that (i) holds generically over priors in finite games, and (ii) when it does, it significantly simplifies the comparison of communication protocols.

Let $\oV_{CT},\uV_{CT}:\D(\W)\to \R$ respectively denote the quasiconcave  envelope of $\oV$ and quasiconvex envelope of $\uV$, and define the cheap-talk correspondence by $\BV_{CT}(\mu)=[\uV_{CT}(\m),\oV_{CT}(\m)]$. The sender's value under a cheap-talk equilibrium lies in $\BV_{CT}$.\footnote{See \cite{LR20} Appendix C.2.1, which defines the quasiconcave and quasiconvex envelopes with an extra semi-continuity assumption. Our definition is the same since our state space $\W$ is finite. Alternatively, as shown in \cite{aumann2003long} and \cite{LR20}, $\BV_{CT}$ is the correspondence whose graph coincides with the di-convexification of the graph of $\BV$.} For all $\mu, \mu' \in \D(\W)$, let $(\mu, \mu'] \coloneqq \{ (1-t) \mu + t \mu' : t \in (0,1]\}$ denote the half-open line segment and $(\mu, \mu')$ denote the corresponding open line segment.

\begin{definition}
\label{def:CT_hull}
    The cheap talk hull at $p$ is defined as
    \begin{equation*}\label{eq: cheap talk hull}
        H^*(p) \coloneqq \{\mu \in \D(\W): \exists \mu_0 \in \D(\W) \text{ such that }  \oV_{CT}(p) \in \BV_{CT}(\mu_0) \text{ and } p\in (
        \mu_0,\mu]\}.
    \end{equation*}
     The full-dimensionality condition holds at $p$ if $H^*(p)=\D(\W)$. In particular, $p\in H^*(p)$.
\end{definition}
The set $H^*(p)$ collects beliefs $\mu$ such that the cheap-talk value at $p$ remains attainable when the prior is slightly perturbed in the opposite direction of $\m$. 
Thus, the full-dimensionality condition requires that $\oV_{CT}(p)$ can still be attained when the prior is slightly perturbed toward \emph{any} arbitrary direction.

\begin{Remark}\label{Lem: full dim cond}
The full-dimensionality condition trivially holds at $p$ if $\oV_{CT}$ is constant around $p$. When the state is binary, this happens when $\oV_{CT}(p)>\oV(p)$, that is, when direct communication is valuable at $p$. For an arbitrary state space, if the action set $A$ is finite, then $\oV_{CT}$ is locally constant for almost every prior (Corollary 2, \cite{LR20}), hence the full-dimensionality condition holds generically.
\end{Remark}

\subsection{Valuable verifiability}
\label{SEC:BP vs MD}
Theorem \ref{Thm: BP = MD iff BP = CT} immediately yields that verifiability is valuable if and only if the concave and quasiconcave envelopes of the sender's value function do not coincide at the prior. Therefore,  if the sender cannot achieve the optimal persuasion value using single-round cheap talk, then they cannot attain it via any communication mechanism without sender commitment (e.g., multiple-round cheap talk, noisy cheap talk, or in general, mediation). 

When $\W$ is non-binary, comparing the concave envelope and the quasiconcave envelope is not always easy. Thus, we take a constructive approach and provide a sufficient condition for persuasion to strictly outperform mediation, which also becomes necessary when the full-dimensionality condition holds.

\begin{proposition}\label{Prop: sufficient cond for BP > MD}
    The following hold:
\begin{enumerate}
    \item If $\max_{\mu \in  H^*(p)}\oV_{CT}(\mu) > \oV_{CT}(p)$, then verifiability is valuable at $p$.
    \item If verifiability is valuable at $p$ then  $\max\oV > \oV_{CT}(p)$.
\end{enumerate}
 Moreover, if the full-dimensionality condition holds at $p$, then verifiability is valuable at $p$ if and only if $\max\oV>\oV_{CT}(p)$.
\end{proposition}

This result establishes that verifiability is valuable under very weak conditions. In fact, it suffices to show that there exists a cheap-talk equilibrium at some prior $\mu  \in H^*(p)$ that the sender strictly prefers to the optimal cheap-talk equilibrium with prior $p$. This condition is particularly relevant in finite games.

\begin{corollary}
    Assume that $A$ is finite. For almost all priors $p$, verifiability is valuable at $p$ if and only if $p \notin \co \argmax \oV$.
\end{corollary}

The proof of Proposition \ref{Prop: sufficient cond for BP > MD}  is constructive and relies on Theorem \ref{Thm: BP = MD iff BP = CT}. Starting from $\mu$, one can construct a distribution $\eta \in \D(\Gr(\mathbf{V}))$ whose average over beliefs is $p$ (due to $\mu \in H^*(p)$) and whose average over values is larger than  $\oV_{CT}(p)$ (due to $\oV_{CT}(\mu) > \oV_{CT}(p)$). This shows that commitment is valuable, hence, that verifiability is valuable.

\subsection{Valuable mediation} \label{Sec: MD vs CT}
This section provides separate sufficient and necessary conditions for the mediator to strictly outperform direct communication. These conditions collapse under full dimensionality, yielding a tight geometric characterization of when mediation is valuable.
\begin{definition}
\label{def:improv}
    Cheap talk is directionally improvable at $p$ if there exist $(\mu^+,v^+), (\mu^-,v^-) \in \Gr(\mathbf{V}_{CT})$ such that 
    \begin{equation*}
       \m^+ \in (p,\m^-) \qquad \text{and} \qquad  v^+>\oV_{CT}(p)>v^-.
    \end{equation*}
    If in addition, $\mu^- \in H^*(p)$, then cheap talk is hull-directionally improvable at $p$.
\end{definition}

Geometrically, Definition \ref{def:improv} amounts to the failure of a form of \emph{weak single crossing} of the cheap-talk correspondence $\mathbf{V}_{CT}$ at $\oV_{CT}(p)$ along a line segment through $p$. Reading the segment from the side opposite to $\mu^-$ toward $\mu^-$, the cheap-talk correspondence is first weakly below $\oV_{CT}(p)$, then strictly above it at $\mu^+$, and finally strictly below it at $\mu^-$. Thus, the above-benchmark payoffs arise in the middle of the segment rather than on one side of $p$. This formalizes the countervailing effects that mediation exploits: the closer belief $\mu^+$ raises the sender's payoff relative to $\oV_{CT}(p)$, while the farther belief $\mu^-$ provides the offset needed for incentive compatibility. The next result shows that countervailing effects allow us to construct a strictly improving mediation plan.

\begin{theorem} \label{Thm: MD vs CT}
The following hold:
\begin{enumerate}
    \item If cheap talk is hull-directionally improvable at $p$, then mediation is valuable at $p$. 
    \item If mediation is valuable at $p$, then cheap talk is directionally improvable at $p$.
\end{enumerate}
 Moreover, if the full-dimensionality condition holds at $p$, then mediation is valuable at $p$ if and only if cheap talk is directionally improvable at $p$.
\end{theorem}
Theorem \ref{Thm: MD vs CT} reduces the comparison between mediation and cheap talk with arbitrarily many states to a simple one-dimensional problem, \emph{as if} the state is binary. Similarly to Proposition \ref{Prop: sufficient cond for BP > MD}, this theorem yields the following immediate corollary for finite games.
\begin{corollary} \label{Cor: MD vs CT finite}
        Assume that $A$ is finite. For almost all priors $p$, mediation is valuable at $p$ if and only if cheap talk is directionally improvable at $p$.
\end{corollary}

It is natural to ask whether, under the sufficient condition of Theorem \ref{Thm: MD vs CT}, mediation also strictly improves the expected utility of the receiver. This is indeed the case provided that the sender's and receiver's indirect payoffs are aligned enough in the sense that the receiver's indirect utility is an increasing and convex transformation of $\overline{V}$. In Section \ref{ssec:accepta_pareto}, we analyze a class of games satisfying this condition.

\paragraph{Intuition of Theorem \ref{Thm: MD vs CT}.}
Suppose first that cheap talk is
hull-directionally improvable at \(p\). Then there are beliefs
\(\mu^- \in H^*(p)\) and \(\mu^+\in(p,\mu^-)\), together with cheap-talk
equilibria at priors \(\mu^+\) and \(\mu^-\), that yield sender payoffs
\(v^+>\oV_{CT}(p)>v^-\), respectively. Since \(\mu^-\in H^*(p)\), there is
also a belief \(\mu_0\) on the opposite side of \(p\) and a cheap-talk
equilibrium at prior \(\mu_0\) that yields a payoff \(\oV_{CT}(p)\).

The mediator randomizes over these three cheap-talk equilibria. The benchmark equilibrium at $\mu_0$ restores Bayes plausibility. The unfavorable equilibrium at $\mu^-$ provides the incentive discipline needed to sustain the favorable equilibrium at $\mu^+$. Since \(\mu^+\) lies between \(p\) and \(\mu^-\), the unfavorable equilibrium has more leverage in the covariance constraint: it can discipline the favorable equilibrium at an expected-payoff cost smaller than the gain generated by the favorable equilibrium. Hence, the mediator can choose weights so that the zero-covariance constraint is satisfied while the expected payoff remains strictly above $\oV_{CT}(p)$. 

Conversely, suppose that mediation is valuable at $p$. Then any improving mediation
plan must sometimes generate sender values strictly above \(\oV_{CT}(p)\).
The zero-covariance constraint implies that these high-value realizations
cannot stand alone: they must be balanced by realizations with values strictly
below \(\oV_{CT}(p)\). Weighting the high-value realizations by their gains and the low-value realizations by their losses, zero covariance implies that the two weighted average beliefs lie on a common line through $p$, with the high-value average closer to $p$.
Thus an improving mediation plan necessarily reveals a direction along
which the cheap-talk correspondence crosses the level \(\oV_{CT}(p)\) from
above to below. This is exactly directional improvability. Therefore, if cheap
talk is not directionally improvable at \(p\), no mediation plan can improve on
cheap talk.

Finally, under full dimensionality,
directional improvability and hull-directional improvability coincide, and the
sufficient and necessary conditions collapse to an equivalence.

\subsection{Optimal mediation with binary states} \label{ssec: opt MD binary state}
When cheap talk is directionally improvable, the construction in the proof of Theorem \ref{Thm: MD vs CT} yields a communication mechanism that outperforms any cheap-talk equilibrium. Even though this construction need not yield the optimal communication mechanism in general, we next show that a similar one does in the binary-state setting. This is intuitive since the idea of our construction is to convert the problem from a multidimensional one to a one-dimensional one and then find an improving perturbation. When the environment is one-dimensional to begin with, this approach yields optimality overall.

Let $\W = \{\w_0,\w_1\}$ be binary and let mediation strictly improve on cheap talk, which implies that cheap talk is directionally improvable at $p$. We let $\mu \in [0,1]$ denote the posterior probability of $\w_1$. Throughout, we assume without loss that 
\[
\oV(\mu) \leq \oV_{CT}(p) \qquad \textrm{for all } \mu \in [0, p].
\]
We next provide a characterization of the solution of \eqref{Eq: MD problem}. To improve on cheap talk, mediation must sometimes generate values above $\oV_{CT}(p)$. The truth-telling constraint then requires these high-value realizations to be balanced by lower-value realizations to ensure zero covariance. In the binary-state case, this tradeoff can be represented using three posterior-value pairs $\left\{(\mu_i,v_i)\right\}_{i=1}^3$ ordered as $\mu_1<p<\mu_2<\mu_3$.

The point $(\mu_2,v_2)$ delivers the value above $\oV_{CT}(p)$, while $(\mu_3,v_3)$ delivers the lower value needed to satisfy the truth-telling constraint. The point $(\mu_1,v_1)$ provides the left-side posterior needed for Bayes plausibility. Optimality pushes these selected values to the relevant boundary of $\Gr(\BV)$: at $\mu_1$ and $\mu_2$, the selected values are maximal, $v_i=\oV(\mu_i)$ for $i=1,2$; at $\mu_3$, the selected value is minimal, $v_3=\uV(\mu_3)$. Moreover, the support points can be chosen to satisfy the corresponding frontier conditions: no point further left than $\mu_1$ reaches $v_1$ on the upper envelope, no point between $p$ and $\mu_2$ reaches $v_2$ on the upper envelope, and no point to the right of $\mu_3$ falls below $v_3$ on the lower envelope.

\begin{proposition} \label{prop: opt MD binary}
\eqref{Eq: MD problem} admits a solution $\eta^*$ with
\begin{equation*}
    \supp(\eta^*)=\{(\mu_1,\oV(\mu_1)),(\mu_2,\oV(\mu_2)),(\mu_3,\uV(\mu_3))\}
\end{equation*}
where
\[
\mu_1<p<\mu_2<\mu_3,\qquad 
\uV(\mu_3) < \oV(\mu_1) \le \oV_{CT}(p)< \oV(\mu_2),
\]
and
\[
\oV(\mu)<\oV(\mu_1)\ \ \forall\,\mu<\mu_1,\qquad
\oV(\mu)<\oV(\mu_2)\ \ \forall\,\mu\in(p,\mu_2),\qquad
\uV(\mu)>\uV(\mu_3)\ \ \forall\,\mu>\mu_3.
\]
\end{proposition}

The frontier condition intuitively follows from \eqref{eq:zero_cov}. The favorable posterior $\m_2$ raises the sender's payoff, but also induces a positive covariance between beliefs and payoffs; for a given payoff gain, this effect is smaller when $\m_2$ is closer to the prior. The unfavorable posterior $\m_3$ supplies the offsetting negative covariance; for a given payoff loss, this offset is stronger when $\m_3$ is farther from the prior. $\m_1$ restores Bayes plausibility; placing it farther from the prior requires less mass on it. In the example of Section \ref{Subsec: Illustrative Example}, these forces select a unique triple $\m_2=1/3$, $\m_3=1$, and $\m_1=0$, see Figure \ref{fig:binary-frontier-panels}(a).

\begin{figure}[t]
\centering
\caption{
Optimal Mediation with Binary States
}
\label{fig:binary-frontier-panels}

\definecolor{mutwocol}{RGB}{190,45,45}     
\definecolor{muthreecol}{RGB}{55,105,180}  

\tikzset{
    frontier/.style={black,line width=1.08pt,line cap=round,line join=round},
    guide/.style={black!30,densely dotted,line width=0.65pt},
    point/.style={only marks, mark=*, mark size=2.35pt},
    hintone/.style={draw=black,line width=0.95pt,-stealth},
    hinttwo/.style={draw=mutwocol,line width=0.95pt,-stealth},
    hinttwovert/.style={draw=mutwocol,line width=0.95pt,-stealth},
    hintthree/.style={draw=muthreecol,line width=0.95pt,-stealth}
}

\begin{minipage}[t]{0.48\textwidth}
\centering
\begin{tikzpicture}[>=stealth]
\begin{axis}[
    width=\textwidth,
    height=5.05cm,
    xmin=0, xmax=1.07,
    ymin=-1.85, ymax=3.35,
    axis x line*=bottom,
    axis y line*=left,
    axis line style={black,-{Latex[length=1.8mm]}},
    xtick=\empty,
    ytick={-1,1,3},
    yticklabels={$1$,$2$,$3$},
    tick style={draw=none},
    yticklabel style={font=\scriptsize},
    clip=false
]

\node[font=\scriptsize, anchor=south, rotate=0]
    at (axis description cs:0,1) {$\BV(\mu)$};
\node[font=\scriptsize, anchor=west]
    at (axis description cs:1,0) {$\mu$};

\draw[guide] (axis cs:0,1) -- (axis cs:1.02,1);

\draw[guide] (axis cs:0.1667,-1.75) -- (axis cs:0.1667,3);
\node[font=\scriptsize, anchor=north, inner sep=1pt]
    at (axis cs:0.1667,-2) {$p$};

\draw[frontier]
    (axis cs:0,1)       -- (axis cs:0.33,1)
    -- (axis cs:0.33,3) -- (axis cs:0.66,3)
    -- (axis cs:0.66,-1) -- (axis cs:1,-1);

\addplot[
    point,
    mark options={draw=black,fill=black}
] coordinates {(0,1)};

\addplot[
    point,
    mark options={draw=black,fill=mutwocol}
] coordinates {(0.33,3)};

\addplot[
    point,
    mark options={draw=black,fill=muthreecol}
] coordinates {(1,-1)};

\node[font=\scriptsize, anchor=north, inner sep=1pt]
    at (axis cs:0.02,-2) {$\mu_1$};
\node[font=\scriptsize, anchor=north, inner sep=1pt]
    at (axis cs:0.33,-2) {$\mu_2$};
\node[font=\scriptsize, anchor=north, inner sep=1pt]
    at (axis cs:1.02,-2) {$\mu_3$};

\draw[hintone]   (axis cs:0.125,1.2) -- (axis cs:0.025,1.2);
\draw[hinttwo]   (axis cs:0.46,3.2) -- (axis cs:0.36,3.2);
\draw[hinttwovert] (axis cs:0.36,1.8) -- (axis cs:0.36,2.8);
\draw[hintthree] (axis cs:0.86,-0.8) -- (axis cs:0.96,-0.8);

\end{axis}
\end{tikzpicture}

\vspace{0.25em}
{\small (a) Frontier conditions pin down the triple}
\end{minipage}
\hfill
\begin{minipage}[t]{0.48\textwidth}
\centering
\begin{tikzpicture}[>=stealth]
\begin{axis}[
    width=\textwidth,
    height=5.05cm,
    xmin=0, xmax=1.07,
    ymin=-1.85, ymax=2.50,
    axis x line*=bottom,
    axis y line*=left,
    axis line style={black,-{Latex[length=1.8mm]}},
    xtick=\empty,
    ytick={-1,0,1,2},
    yticklabels={$1$,$2$,$3$,$4$},
    tick style={draw=none},
    yticklabel style={font=\scriptsize},
    clip=false
]

\node[font=\scriptsize, anchor=south, rotate=0]
    at (axis description cs:0,1) {$\BV(\mu)$};
\node[font=\scriptsize, anchor=west]
    at (axis description cs:1,0) {$\mu$};

\draw[guide] (axis cs:0,0) -- (axis cs:1.02,0);

\draw[guide] (axis cs:0.125,-1.75) -- (axis cs:0.125,2);
\node[font=\scriptsize, anchor=north, inner sep=1pt]
    at (axis cs:0.125,-2) {$p$};

\draw[frontier]
    (axis cs:0,0)        -- (axis cs:0.25,0)
    -- (axis cs:0.25,1)  -- (axis cs:0.50,1)
    -- (axis cs:0.50,2)  -- (axis cs:0.75,2)
    -- (axis cs:0.75,-1) -- (axis cs:1,-1);

\addplot[
    point,
    mark options={draw=black,fill=black}
] coordinates {(0,0)};

\addplot[
    point,
    mark options={draw=black,fill=mutwocol}
] coordinates {(0.25,1) (0.50,2)};

\addplot[
    point,
    mark options={draw=black,fill=muthreecol}
] coordinates {(1,-1)};

\node[font=\scriptsize, anchor=north, inner sep=1pt]
    at (axis cs:0.02,-2) {$\mu_1$};
\node[font=\scriptsize, anchor=north, inner sep=1pt]
    at (axis cs:0.25,-2) {$\mu_2$};
\node[font=\scriptsize, anchor=north, inner sep=1pt]
    at (axis cs:0.50,-1.9) {$\mu_2'$};
\node[font=\scriptsize, anchor=north, inner sep=1pt]
    at (axis cs:1.02,-2) {$\mu_3$};

\end{axis}
\end{tikzpicture}

\vspace{0.25em}
{\small (b) Finite search over candidate triples}
\end{minipage}

\par\medskip
\begin{minipage}{0.93\textwidth}
\small
\emph{Notes.}
Grey dotted lines mark the prior \(p\) and the benchmark \(\oV_{CT}(p)=2\). Dots are frontier points allowed by Proposition \ref{prop: opt MD binary}: black for \(\mu_1\), red for \(\mu_2\), and blue for \(\mu_3\). Arrows in panel (a) indicate the objective-improving directions before the frontier conditions bind. In panel (b), the two red points are alternative candidates for the second support point, so it remains to compare the two induced triples $\{\mu_1,\mu_2,\mu_3\}$ and $\{\mu_1,\mu_2',\mu_3\}$.
\end{minipage}
\end{figure}

When $A$ is finite, Proposition \ref{prop: opt MD binary} turns the binary-state mediation problem into a finite search. There is a finite partition of $[0,1]$ such that $\BV$ is constant on the interior of each interval. The frontier conditions imply that an optimal solution can be chosen with support on the boundaries of these intervals. 
Hence, there are finitely many candidate triples of posterior-value pairs. For each feasible triple, the Bayes-plausibility and covariance constraints uniquely determine the probabilities, so the optimum is obtained by comparing the induced mediation values. 
Figure \ref{fig:binary-frontier-panels}(b) illustrates this reduction: there are two candidates for the second support point, while the other two support points are fixed, and the problem reduces to comparing the two induced distributions.

For singleton-valued smooth payoff functions, differentiability further yields first-order necessary conditions for the three support points; see Proposition \ref{Prop: opt MD cont} in Appendix \ref{sec: OLA Binary}.

\section{Discussion and interpretation of the model} 
\label{sec: model interpretation}
In this section, we discuss the model's assumptions and interpretations.

\paragraph{Communication and correlated equilibria}
The mediator need not be interpreted as a benevolent or strategic third party. It can represent any communication technology, institutional procedure, certification protocol, advisory process, or correlation device that can commit ex ante to how messages and recommendations are generated. In fact, CE outcomes provide the broadest reduced-form benchmark for what can be achieved through communication, since they correspond exactly to the payoff outcomes induced by correlated equilibria of the strategic-form cheap talk game (\cite{forges2020games}). 

From this perspective, understanding when mediation is valuable allows us to determine when correlation is beneficial by enhancing strategic communication (cf. \cite{rudov2026extreme}). Likewise, studying the value of verifiability tells us when correlation is still not enough to attain what full direct commitment by the sender could attain.

\paragraph{Mediated communication as noise design}
Mediated communication can also be interpreted as a cheap talk game in which the sender or a third party commits ex ante to a noisy channel between the sender's original message and what the receiver observes.\footnote{Formally, starting from the cheap talk game above with rich message space $M$, assume that a designer can commit to a mechanism $\sigma :M \to \Delta(M')$, for some set $M'$, and the receiver knows $\sigma$ and observes only the realization of $m'$.}
It is well-known that adding an \emph{exogenous} noisy channel to a cheap talk game can expand the set of equilibrium payoffs (see \cite{blume2007noisy}). Instead, we study when the sender is willing to pay to add a designed, and hence \emph{endogenous}, noisy channel to their communication with the receiver.

This interpretation also clarifies the relation between persuasion and mediation. In both cases, a designer commits to a possibly random mapping from inputs to messages for the receiver. The difference is that, under mediation, the mapping is applied to a strategically chosen report, whereas under persuasion it is applied to the true state. Thus, the distinction is not technological, since both require commitment to randomization, but concerns \emph{verifiability}. In persuasion, unlike in mediation, it is \emph{as if} the mediator can \emph{perfectly} verify that the input is the true state.

\paragraph{Smart contracts and informed mediated communication}
The assumption that an individual can commit to a randomized mechanism, regardless of the input, may seem strong. One possible justification comes from long-run reputation arguments (see, for example, \cite{best2024persuasion}). Another is to interpret the mediator as an automated system that can be programmed to execute, possibly with randomness, the transmission of information or more general transactions. In this sense, the sender-aligned mediator can be viewed as a \emph{smart contract} designed by the sender.

Recent papers such as \cite{brzustowski2023smart} and \cite{drakopoulos2022blockchain} propose a similar interpretation of the Myersonian mediator. The mediator has the three main features of a smart contract: \emph{immutability}, since the program cannot be manipulated once designed; \emph{automatic execution}, since the output cannot be manipulated given the input; and \emph{encryption}, since the input cannot be recovered from the output. In our baseline model, the mediator is a smart contract whose only role is to convey information. In our extension to signaling games (Section \ref{sec:signaling}), it also conveys payoff-relevant outcomes. In both cases, we study when the sender is willing to pay to design a smart contract rather than interact directly with the receiver.

This interpretation also points to a natural robustness question. Suppose the sender can reprogram the smart contract after learning the state but before choosing the input, that is, immutability fails.\footnote{This is exactly the case considered in \cite{brzustowski2023smart}.} Then the model becomes one of \emph{informed mediation} as in \cite{KS2024infomediated}. In principle, this requires a separate analysis. However, \cite{KS2024infomediated} show that in sender--receiver games with transparent motives, equilibrium payoffs under mediation and informed mediation coincide. It follows that all our results extend unchanged to that case.\footnote{This follows from the fact that, under transparent motives, the Honesty condition (iii) implies that every type of the sender gets the same interim payoff. A similar argument also applies when the sender can choose whether to hire an aligned mediator \emph{after} learning the state.} The value of mediation then corresponds to the sender's willingness to pay for programmable, but not necessarily immutable, smart contracts.

The value of verifiability also has a natural interpretation in this perspective. An experiment can be viewed as a smart contract connected to an \emph{oracle} that perfectly verifies the truthfulness of the input by accessing external data. Thus, the value of verifiability is the sender's willingness to pay to append such an oracle to the smart contract.

\paragraph{Transparent Motives}
Even though in the main model we assume that the sender's payoff is state-independent, we can relax this assumption by requiring that, for every state $\omega \in \Omega$, the \emph{ordinal preference} of the sender over the receiver's mixed actions is the same. This amounts to saying that there exist vectors $\alpha,\beta \in \mathbb{R}^{\W}$, with $\alpha$ strictly positive, such that $u_S(\w,a)=\alpha(\w)u_S(a)+\beta(\omega)$. In this case, the Honesty condition is completely unaffected by $\alpha$ and $\beta$. Perhaps more surprisingly, the mediator's maximization problem is also unaffected by this change. To see this, observe that the constraints on $\eta$ imply
\begin{equation*}
   \mathbb{E}_{\eta}[\langle\alpha,\mu\rangle v+\langle\beta,\mu\rangle]=\mathbb{E}_{\eta}[\langle\alpha,\mu\rangle]\mathbb{E}_{\eta}[v] +\mathbb{E}_{\eta}[\langle\beta,\mu\rangle]=\langle \alpha, p\rangle \mathbb{E}_{\eta}[v]+\langle \beta, p\rangle
\end{equation*}
where the first equality follows by \eqref{eq:zero_cov} and the second by \eqref{Eq: Bayes-plausibility}. With this, the mediator's objective function is $\mathbb{E}_{\eta}[v]$, as in our main baseline model, and all our results apply. By contrast, when the sender's ordinal preferences depend on the state, the mediator can exploit this heterogeneity to elicit the state truthfully, implying that the value of mediation is often higher. In Appendix \ref{subApp: trilemma} we show how to extend Theorem \ref{Thm: Implementability} to the general state-dependent case and, via a simple example, that Theorem \ref{Thm: BP = MD iff BP = CT} \emph{does not} extend to this general case. We leave the full comparison among persuasion, mediation, and cheap talk in this case for future research.

\section{Applications}
\label{sec:applications}
In this section, we apply our results to show (i) when a lobbyist strictly benefits from intermediation, and (ii) when sender-preferred mediated communication strictly improves both the sender's and receiver's payoffs in a class of binary-action games with a privately informed receiver that we call \emph{acceptance games}.
\subsection{Lobbying through an intermediary}
\label{ssec:lobbying}
There are $n\geq 3$ possible states of the world, $\W=\{1,\dots,n\}$, and the receiver chooses an action in $A=\{0,1,\dots,n\}$. Action $0$ is a safe outside option, while each action $a\neq 0$ is an active policy whose payoff depends on how well it matches the state. The receiver's payoff is
\[
u_R(\omega,a)=\lambda-\phi(|\omega-a|)\quad\text{for }a\neq 0,
\qquad
u_R(\omega,0)=0,
\]
where $\lambda>0$ and $\phi:\mathbb R_+\to\mathbb R_+$ is increasing, with $\phi(0)=0$ and $\phi(1)>2\lambda$. Thus, active policies can be beneficial when they fit the state, but become worse than the status quo whenever there is a mismatch. By contrast, the sender prefers higher actions: $u_S(a)=a$.

A natural interpretation is a lobbying environment with an interested sender, a policymaker, and a communication intermediary. The sender is a firm or lobbyist with private information about which policy is most favorable to its interests. The receiver is a policymaker or regulator who values adopting a policy well-suited to the underlying state, but can always keep the status quo. The mediator may be interpreted, in reduced form, as a communication intermediary, such as a chamber of commerce or a think tank.\footnote{For example, \citet{thrall2025informational} studies how American Chambers of Commerce aggregate and transmit commercially relevant information to diplomats, providing a useful reduced-form analogue of an organized communication intermediary on the sender side. Relatedly, \citet{FraussenHalpin2017} study think tanks as contributors to policy advisory systems, offering a useful analogue of organizations that structure and channel policy-relevant information to policymakers.}

The example can also be interpreted as mediation in veto bargaining, with the sender playing the role of proposer and the receiver the role of veto player. The difference is the communication technology. Rather than giving the proposer direct control over information disclosure, as in \citet{kim2025persuasion}, or giving the veto player direct discretion over a menu of policies, as in \citet{kartik2021delegation}, communication here is structured through an intermediary that can support equilibrium outcomes unavailable under direct cheap talk.\footnote{We could formally allow for an endogenous proposal from the lobbyist, as in classical veto-bargaining models, through the extension to signaling games in Section \ref{sec:signaling}. The same results as in this section would go through. A further difference from the classical veto-bargaining model is that here the status quo is not merely a reference policy but a safe outside option for the policymaker. }

The receiver would choose an active policy $i > 0$ if and only if the posterior places sufficiently high probability on state $i$. For every active policy $i > 0$ and every state $k \neq i$, let $\beta_i^{k} \in \D(\W)$ denote the belief supported on $\{i,k\}$ with 
\[
\beta_{i}^k(k) = \frac{\l}{\phi(|i-k|)} , \qquad \beta_{i}^k(i) = 1 - \frac{\l}{\phi(|i-k|)}.
\]
At $\beta_i^k$, the receiver is indifferent between policy $i$ and the status quo. Since $\phi(1) > 2\l$, this belief is well-defined and assigns probability less than $1/2$ to the mismatched state $k\neq i$. For every $i > 0$, define 
\[
I_i \coloneqq \co \{\beta_i^k : k \neq i\}, \qquad D_i \coloneqq \co(\{\delta_i\} \cup I_i).
\]
The set $I_i$ is the set of posterior beliefs where the receiver is indifferent between policy $i$ and the status quo, while $D_i$ is the set of posteriors where policy $i$ is a best response. The condition $\phi(1)>2\lambda$ also implies that the active-policy regions are disjoint: $D_i\cap D_j=\emptyset$ for every $i\neq j$. Therefore, the sender's indirect payoff correspondence is 
\begin{align*}
        \BV(\mu) = 
        \begin{cases}
            \{i\} & \text{if } \mu \in D_i \setminus I_i  \text{ for some $i\in \W$}\\
            [0,i] & \text{if } \mu \in I_i  \text{ for some $i\in \W$}\\
            \{0\} & \text{otherwise}
        \end{cases}
    \end{align*}
Applying the quasi-concavification argument as in \cite{LR20}, the sender can secure a payoff of at least $i > 0$ under cheap talk if and only if the belief is in 
\[
S_i = \co\left(\bigcup_{j=i}^n D_j\right).
\]
The sets $S_i$ are nested: $S_{i+1}\subseteq S_i$ for every $i > 0$. The sender's optimal cheap talk payoff under prior $p$ is thus $\oV_{CT}(p) = \max\{i : p \in S_i\}$. 

We say that $p$ is non-boundary if it is not on the boundary of $S_i$ for every $i > 0$. At every non-boundary prior, $\oV_{CT}$ is locally constant, so the full-dimensionality condition holds. Applying Proposition \ref{Prop: sufficient cond for BP > MD} and Theorem \ref{Thm: MD vs CT}, verifiability is valuable if and only if the cheap talk value at $p$ is not globally maximal, while mediation is valuable if and only if cheap talk is directionally improvable at $p$.

\begin{proposition}\label{prop: lobbying}
For every non-boundary prior $p$:
\begin{enumerate}
    \item[(i)] If $p\in S_n$, then neither verifiability nor mediation is valuable.
    \item[(ii)] If $p\in S_{n-1}\setminus S_n$, then verifiability is valuable but mediation is not.
    \item[(iii)] If $p \notin S_{n-1}$, then both verifiability and mediation are valuable.
\end{enumerate}
\end{proposition}

\newcommand{\drawfourstatepanel}[5]{%
\begin{scope}[shift={(#1,#2)}]

  \coordinate (#4d1) at ( 3.50,-1.45);
  \coordinate (#4d3) at (-3.10,-1.65);
  \coordinate (#4d4) at ( 0.20, 5.35);

  \coordinate (#4d2) at (-0.55, 0.45);

  \pgfmathsetmacro{\tau}{#3}

  \pgfmathsetmacro{\tfortyone}{1/(2+3*\tau)}   
  \pgfmathsetmacro{\tfortytwo}{1/(2+2*\tau)}   
  \pgfmathsetmacro{\tfortythree}{1/(2+\tau)}   
  \pgfmathsetmacro{\tthirtyone}{1/(2+2*\tau)}  
  \pgfmathsetmacro{\tthirtytwo}{1/(2+\tau)}    

  \coordinate (#4b41) at ($(#4d4)!\tfortyone!(#4d1)$);
  \coordinate (#4b42) at ($(#4d4)!\tfortytwo!(#4d2)$);
  \coordinate (#4b43) at ($(#4d4)!\tfortythree!(#4d3)$);

  \coordinate (#4b31) at ($(#4d3)!\tthirtyone!(#4d1)$);
  \coordinate (#4b32) at ($(#4d3)!\tthirtytwo!(#4d2)$);


  \fill[green!7] (#4d1)--(#4d3)--(#4d4)--cycle;
  \fill[pattern=north east lines, pattern color=green!30!black]
      (#4d1)--(#4d3)--(#4d4)--cycle;

  \fill[green!35, opacity=.12]
      (#4b32)--(#4b31)--(#4b41)--(#4b42)--cycle;

  \fill[blue!8] (#4d3)--(#4b31)--(#4b41)--(#4d4)--cycle;
  \fill[pattern=north west lines, pattern color=blue!42]
      (#4d3)--(#4b31)--(#4b41)--(#4d4)--cycle;

  \fill[red!8] (#4d4)--(#4b41)--(#4b43)--cycle;
  \fill[pattern=horizontal lines, pattern color=red!55]
      (#4d4)--(#4b41)--(#4b43)--cycle;

  \draw[black!45, dashed, line width=.40pt] (#4d2)--(#4d1);
  \draw[black!45, dashed, line width=.40pt] (#4d2)--(#4d3);
  \draw[black!45, dashed, line width=.40pt] (#4d2)--(#4d4);

  \draw[blue!55!black, dashed, line width=.33pt] (#4b32)--(#4b31);
  \draw[blue!55!black, dashed, line width=.33pt] (#4b32)--(#4b42);

  \draw[red!65!black, dashed, line width=.33pt] (#4b42)--(#4b41);
  \draw[red!65!black, dashed, line width=.33pt] (#4b42)--(#4b43);

  \draw[blue!70!black, line width=.70pt]
      (#4d3)--(#4b31)--(#4b41)--(#4d4)--cycle;

  \draw[red!75!black, line width=.70pt]
      (#4d4)--(#4b41)--(#4b43)--cycle;

  \draw[black, line width=.82pt] (#4d1)--(#4d3)--(#4d4)--cycle;

  \fill[white] (#4d2) circle (1.20pt);
  \draw[black!65, dashed, line width=.40pt] (#4d2) circle (1.20pt);

  \fill (#4d1) circle (1.00pt);
  \fill (#4d3) circle (1.00pt);
  \fill (#4d4) circle (1.00pt);

  \node[innerlabel] at ($(#4d2)+(-0.42,0.16)$) {$\delta_2$};

  \node[right, font=\scriptsize] at (#4d1) {$\delta_1$};
  \node[below, font=\scriptsize] at (#4d3) {$\delta_3$};
  \node[above, font=\scriptsize] at (#4d4) {$\delta_4$};

  \node[paneltitle] at (0.15,6.55) {#5};

\end{scope}
}

\begin{figure}[t]
\centering

\resizebox{\textwidth}{!}{%
\begin{tikzpicture}[
  x=0.58cm,
  y=0.58cm,
  line join=round,
  line cap=round
]

  \tikzset{
    innerlabel/.style={
      font=\scriptsize,
      inner sep=1.0pt,
      fill=white,
      fill opacity=.94,
      text opacity=1,
      rounded corners=.7pt
    },
    paneltitle/.style={
      font=\footnotesize
    }
  }

  \drawfourstatepanel{0.0}{0.0}{0.5}{A}{(a) Low cost}
  \drawfourstatepanel{8.3}{0.0}{1}{B}{(b) Intermediate cost}
  \drawfourstatepanel{16.6}{0.0}{2}{C}{(c) High cost}

  \begin{scope}[shift={(8.3,-3.95)}]

    \fill[red!8] (-7.35,0.05) rectangle (-6.80,0.32);
    \fill[pattern=horizontal lines, pattern color=red!55]
      (-7.35,0.05) rectangle (-6.80,0.32);
    \draw[black!75] (-7.35,0.05) rectangle (-6.80,0.32);
    \node[right, font=\scriptsize] at (-6.58,0.185)
      {$\mathrm{BP}=\mathrm{MD}=\mathrm{CT}$};

    \fill[blue!8] (-1.75,0.05) rectangle (-1.20,0.32);
    \fill[pattern=north west lines, pattern color=blue!42]
      (-1.75,0.05) rectangle (-1.20,0.32);
    \draw[black!75] (-1.75,0.05) rectangle (-1.20,0.32);
    \node[right, font=\scriptsize] at (-0.98,0.185)
      {$\mathrm{BP}>\mathrm{MD}=\mathrm{CT}$};

    \fill[green!7] (4.10,0.05) rectangle (4.65,0.32);
    \fill[pattern=north east lines, pattern color=green!30!black]
      (4.10,0.05) rectangle (4.65,0.32);
    \draw[black!75] (4.10,0.05) rectangle (4.65,0.32);
    \node[right, font=\scriptsize] at (4.88,0.185)
      {$\mathrm{BP}>\mathrm{MD}>\mathrm{CT}$};

  \end{scope}

\end{tikzpicture}%
}
\caption{Comparative statics in mismatch costs.}
\label{fig:four-state-comparative-statics}
\smallskip
\small
$n=4$, $\lambda=1$, $\phi(d) = 2 + kd$, $k=1/2$ in (a), $k=1$ in (b), $k=2$ in (c). 
\end{figure}

In particular, both verifiability and mediation are valuable when the prior cannot be decomposed into posteriors that justify the sender's two most preferred policies. As mismatch costs increase, the receiver demands more confidence before adopting any active policy, so the regions $S_{n-1}$ and $S_n$ shrink, making $p\notin S_{n-1}$ easier to satisfy. Figure \ref{fig:four-state-comparative-statics} illustrates this with a four-state example. Given the mismatch cost, increasing the number of states (and policies) has a similar effect: the high-policy region $S_{n-1}$ occupies a smaller share of the simplex. Therefore, strict value from verifiability and mediation becomes more prevalent when active policies are harder to justify or when the decision problem is more complex.

\subsection{Acceptance games and Pareto improving mediation}
\label{ssec:accepta_pareto}
In this section, we apply our results to a class of \emph{acceptance games}
nesting the salesperson examples in
\citet{chakraborty2010persuasion} and \citet{LR20}. We provide sufficient
conditions under which sender-preferred mediation is strictly (ex-ante) Pareto
improving relative to \emph{any} cheap-talk equilibrium.

The receiver has a binary choice: whether to accept or reject a risky prospect. The prospect's value depends on a multidimensional moment function $T:\D(\Omega)\to\R^k$ with $k \geq 2$. Given  a posterior belief
$\mu \in \D(\W)$, let $x=T(\mu)\in X$ denote the relevant moment vector, where $X$ is the set of all relevant moments which has dimension $k$. The receiver's value of the prospect is $R: X \to \R$, which is continuous and strictly quasiconvex.

The receiver compares $R(x)$ to an outside option with value
$ r \in\R$, which is their private information and is drawn from a
strictly increasing and continuously differentiable CDF $G$. Assume that
$R(X)\subseteq \supp(G)$, so the outside option is competitive. The receiver's expected payoff given moment $x$ is
\[
v_R(x)\coloneqq H(R(x)),
\qquad
H(y)\coloneqq \int \max\{ r ,y\}\,dG( r ).
\]
Given moment
 $x$, the sender's payoff is the probability that the receiver accepts
the prospect, that is,
\[
v(x)\coloneqq G(R(x)).
\]
Hence, $\BV(\m) = \{v(T(\m))\}$ for all $\m\in\D(\W)$. For any $\eta\in\D(\D(\Omega)\times\R)$ that is feasible under
\eqref{Eq: MD problem}, only the marginal distribution $\tau = \marg_{\D(\Omega)}\eta$
matters for payoffs, and we can equivalently describe outcomes in this section
through distributions of posteriors.

We say that mediation is \emph{strictly (ex-ante) Pareto improving} at prior
$p$ if there exists a distribution $\tau^*\in\D(\D(\Omega))$ feasible under
mediation such that
\[
\mathbb{E}_{\tau^*}[v\circ T]>\mathbb{E}_\tau[v\circ T]
\qquad\text{and}\qquad
\mathbb{E}_{\tau^*}[v_R\circ T]>\mathbb{E}_\tau[v_R\circ T]
\]
for every distribution $\tau$ feasible under cheap talk.

Even though both players rank posterior moments through the same index $R(x)$,
the sender's incentives can still be non-monotone in posterior beliefs. We
capture this by requiring non-monotonicity of $R$ along the edges issuing
from a worst degenerate belief. Let $X_T\coloneqq \{T(\delta_\w): \w \in \W\}$
denote the set of moment vectors induced by degenerate beliefs. To rule out the
trivial case in which unmediated communication already attains the global
maximum payoff, we maintain throughout that $\min_{x\in X_T}R(x)<\max_{x\in X}R(x)$.

\begin{definition}\label{def: minimally edge non-mon}
We say that $R$ is \emph{minimally edge non-monotone given $T$} if there
exists $\underline x\in \arg\min_{\tilde x\in X_T}R(\tilde x)$
such that, for every $x\in X_T\setminus\{\underline x\}$, the one-dimensional
function
\[
\hat R_x(\lambda):=R\bigl(\lambda x+(1-\lambda)\underline x\bigr)
\]
is not weakly increasing in $\lambda\in[0,1]$.
\end{definition}

Our next result combines this condition with log-concavity of the outside option distribution to obtain a Pareto comparison between mediation and cheap talk. Recall that $G$ is log-concave if $\log G$ is concave. Log-concavity is a standard assumption satisfied by many standard unimodal distributions on the real line.

\begin{proposition}
\label{pro:accep_Pareto}
    If $R$ is minimally edge non-monotone given $T$, and $G$ is log-concave, then
    there exists an $(n-1)$--simplex $\tilde{\D} \subseteq \D(\W)$ such that verifiability is valuable and mediation is strictly Pareto improving at $p$ for all $p \in \interior \tilde{\D}$.
\end{proposition}

Since $G$ is strictly increasing, the sender's payoff
$v(x)=G(R(x))$ inherits the non-monotonicity and strict quasiconvexity of $R$. This ensures the sender is strictly
better off under mediation than under cheap talk for priors in $\interior \tilde{\D}$. At the same time, the receiver's payoff can be written as
\[
v_R(x)=\psi(v(x)),
\qquad
\psi(z):=H(G^{-1}(z)),
\]
and log-concavity of \(G\) implies that \(\psi\) is convex. Therefore, whenever mediation raises the sender's ex-ante payoff, Jensen's inequality
implies that it also raises the receiver's ex-ante payoff relative to cheap talk, where the sender's payoff must be constant across posterior realizations.

The assumptions on \(R\) are only needed to obtain a strict sender improvement on a
non-trivial set of priors. Once such a strict improvement exists, log-concavity of $G$ upgrades it to a strict Pareto improvement. The same logic also applies to the comparison between Bayesian persuasion and cheap talk. For example, in the
salesperson model of \citet{LR20}, when $G$ is log-concave, and their Claim 3 applies, commitment is strictly (ex-ante) Pareto improving.

\paragraph{Uncertain projects and costly implementation.}
We close this section by giving a concrete illustration of an acceptance game and of our result.
An innovator (the sender) proposes to a firm (the receiver) a bundle of \(k\)
projects with uncertain payoffs \(\w\in\W\subseteq [0,1]^k\setminus\{\mathbf 0\}\).
The coordinate \(\w_i\) is the payoff generated by project \(i\) if that project
is successfully implemented. The
firm chooses whether to accept or reject it. If it rejects, it obtains an outside
option \( r \), where \( r \) is privately observed by the firm and
drawn from a log-concave distribution \(G\). If it accepts, it chooses
implementation efforts \(a\in[0,1]^k\), incurs cost
\(\frac12\sum_{i=1}^k a_i^2\), and project \(i\) succeeds with probability \(a_i\).

Given posterior belief \(\mu\in\D(\W)\), only the posterior mean $T(\mu) = \mathbb{E}_{\mu}[\w] \in \R^k$ matters for the firm's decision. The set of posterior means $X = T(\D(\W)) = \co(\W)$ has dimension at least $2$, that is, posterior means vary in at least two
affinely independent directions. Since \(\W\subseteq [0,1]^k\), every posterior
mean \(x=T(\mu)\) also lies in \([0,1]^k\), so the firm's optimal effort choice is
\(a_i=x_i\) for each \(i\). Hence, the firm's expected payoff from accepting is
\[
R(x)
=
\max_{a\in[0,1]^k}\sum_{i=1}^k a_i x_i-\frac12\sum_{i=1}^k a_i^2
=
\frac12\sum_{i=1}^k x_i^2
=
\frac12\|x\|^2.
\]
Thus $R$ is strictly convex, and therefore strictly quasiconvex. We maintain the standing assumptions that $R(X) \subseteq \supp G$ and $\min_{x\in X_T}R(x) < \max_{x\in X}R(x)$. In this example, the latter condition is equivalent to $\|\w\|^2$ being non-constant over $\w\in \W$, which rules out the degenerate case in which all states induce the same acceptance value for the firm.

\begin{proposition}\label{prop: project investment}
\leavevmode
\begin{enumerate}[(i)]
    \item $R$ is minimally edge non-monotone given \(T\) if and only if there exists \(\underline{\w}\in\arg\min_{\w\in\W}\|\w\|\) such that
    \[
    \langle \w,\underline{\w}\rangle<\|\underline{\w}\|^2
    \qquad\text{for every }\w\neq\underline{\w}.
    \]

    \item If the condition in (i) holds, then there exists an $(n-1)$-simplex
    $\tilde{\D}\subseteq\D(\W)$ such that mediation is strictly Pareto improving
    for every $p\in\interior\tilde{\D}$.

    \item The condition in (i) holds if each state $\w \in \W$ has exactly one project with a positive payoff, and there exists  $\underline{\w}\in\arg\min_{\w\in\W}\|\w\|$ that is not Pareto dominated.
\end{enumerate}
\end{proposition}

Part (i) shows that, in this example, edge non-monotonicity is a \emph{misalignment} condition. Suppose the firm were certain that the state is $\underline{\omega}$, it would then choose effort $\underline{\omega}$, yielding acceptance value $R(\underline{\omega})=\frac12\|\underline{\omega}\|^2$. The condition in (i) says that every other state $\omega$ is sufficiently misaligned with this effort profile: if the true state were $\omega$ but the firm still chose the effort tailored to $\underline{\omega}$, the resulting expected payoff would be strictly lower than under state $\underline{\omega}$ itself. Therefore, a small amount of prior uncertainty around $\underline{\omega}$ can reduce the value of accepting, because the firm may incur effort costs on projects that are poorly matched to the true payoff profile.

Part (iii) gives the cleanest illustration of this force. If only one project pays off in each state, then investing in the wrong project generates no return but still entails an effort cost. Thus, when the worst state is not simply a lower-payoff version of another state, the other states are naturally misaligned with it. In this case, mediation exploits the countervailing effects implied by the misalignment condition to generate a strict Pareto improvement for all players over cheap talk.

\section{Matching with externalities: Efficiency vs. fairness}
\label{sec:matching}
In this section, we give a more general interpretation of our framework in terms of assigning agents with heterogeneous traits to units or groups and describe the implications of our results therein. Many matching environments feature \emph{externalities} such as peer or spillover effects: the value generated by a firm, classroom, neighborhood, or hospital unit depends not only on the characteristics of one agent, but on the entire composition of the group. This creates a basic tension between efficiency and fairness. Efficient allocations typically exploit variation in group composition in order to generate more value, whereas fair allocations seek to equalize the benefits received by different types of agents. A reinterpretation of our model and results allows us to study this tension in a tractable reduced-form model that isolates the role of externalities.

Our model is intended to capture environments in which agents are grouped into productive or socially relevant units with \emph{aligned preferences}, that is, agents assigned to the same unit obtain the same payoff.\footnote{The assumption of aligned preferences here mirrors the transparent-motives assumption in our sender-receiver interpretation. Similarly to our discussion in Section \ref{sec: model interpretation}, this assumption can be relaxed to allow for agents with different traits to have different \emph{cardinal utility} over distributions of units, provided that they all have the same \emph{ordinal preference}.} This common-payoff structure is natural, for instance, when workers in a partnership equally share the output generated by their team (\cite{farrell1988partnerships}), when students in a classroom experience the same schooling environment, or when residents in a neighborhood benefit from the same local amenity value.

There is a unit mass of agents, each with a trait $\omega \in \Omega$,  and the population trait distribution is $p \in \Delta(\Omega)$, which has full support. Each group, or unit, is summarized by its composition $\mu \in \Delta(\Omega)$, interpreted as the relative frequencies of traits within that unit, and total mass $m \in [0,1]$. We focus on normalized compositions as we assume that the production technology has constant returns to scale.

Each unit composition $\mu$ can generate a set of feasible values. Formally, let $\mathbf V:\Delta(\Omega)\rightrightarrows \mathbb R$ be a (Kakutani) value-possibility correspondence, where $\mathbf V(\mu)$ is the set of values that can be generated by a unit with composition $\mu$. The interpretation is that, once a unit with composition $\mu$ is formed, some feasible activity, policy, or technology is selected, and this generates a value $v\in \mathbf V(\mu)$ that is equally shared by all agents in that unit and that encompasses all the externalities.

An allocation is a joint distribution $\eta \in \Delta\big(\Delta(\Omega)\times \mathbb R\big)$ which describes both the frequency of different unit compositions and the value assigned to each such unit. Feasibility requires two conditions. First, the allocation must preserve the aggregate trait distribution, that is, $\mathbb{E}_\eta[\mu] = p$.
Second, every realized pair $(\mu,v)$ must lie on the graph of the value correspondence, that is, $\eta(\Gr(\mathbf V))=1$.
Thus, a feasible allocation is simply a way to decompose the population into groups and assign a technologically feasible value to each realized group. 

In order to define our notions of efficiency and fairness, one must also keep track of the payoff received by each trait in expectation. For every feasible allocation $\eta$, the expected payoff of a trait-$\omega$ agent under allocation $\eta$ is
\[
U_\omega(\eta):=\mathbb E_\eta[v\mid \omega]
=\int v\,\frac{\mu(\omega)}{p(\omega)}\, d\eta(\mu,v).
\]
This is the ex-ante payoff received by an agent of trait $\omega$ after averaging over all units in which that trait may end up.

\begin{definition}
A feasible allocation $\eta$ is \emph{Pareto optimal} if there is no feasible allocation $\eta'$ such that
\[
U_\omega(\eta')\ge U_\omega(\eta)\qquad\forall \omega\in\Omega,
\]
with strict inequality for some $\omega'\in\Omega$.
\end{definition}
This is the standard notion of Pareto optimality across traits. It is weaker than \emph{utilitarian efficiency} (i.e., being a solution to the Bayesian persuasion problem) but stronger than the \emph{weak Pareto optimality} property considered in \cite{doval2021information}.\footnote{This definition of Pareto optimality should not be confused with the ex-ante notion used for the sender and the receiver in Section \ref{ssec:accepta_pareto}.}

We next distinguish two notions of fairness.
\begin{definition}
A feasible allocation $\eta$ is \emph{ex-ante fair} if
\[
U_\omega(\eta)=U_{\omega'}(\eta)\qquad\forall \omega,\omega'\in\Omega.
\]
A feasible allocation $\eta$ is \emph{ex-post fair} if $v$ is constant for $\eta$-almost all $(\mu,v)$.
\end{definition}
Our notions of fairness are value-based. Ex-ante fairness requires all traits to obtain the same value \emph{in expectation}, leaving room for realized payoff inequalities. Ex-post fairness is much stronger: it requires all units to yield the same \emph{realized} value. Theorem \ref{Thm: Implementability} and the results in \cite{LR20} can be used to obtain simple moment characterizations of these notions.

\begin{Remark}
   A feasible allocation $\eta$ is ex-ante fair if and only if $\Cov_\eta[v,\mu]=\mathbf 0$, or equivalently, if and only if $\Var_p\!\left[\mathbb{E}_{\eta^{\omega}}[v]\right]=0$. A feasible allocation $\eta$ is ex-post fair if and only if $\Var_\eta[v]=0$.
\end{Remark}

The first condition says that realized value cannot covary with group composition in a way that systematically benefits some traits more than others. As observed in Remark \ref{rk: anova}, this amounts to requiring that the between-trait component of the total variance is $0$. The second says that ex-post fairness eliminates all dispersion in realized values. Under the communication interpretation, these conditions correspond respectively to the truth-telling constraint in mediation and the sender's optimality requirement in cheap talk.

\begin{theorem}
\label{th:impossibility}
A feasible allocation $\eta$ is Pareto optimal and ex-ante fair if and only if it is Pareto optimal and ex-post fair.
\end{theorem}
Theorem \ref{th:impossibility} delivers a sharp incompatibility result: unless efficiency is already compatible with constant realized payoffs, there is no efficient allocation satisfying ex-ante fairness. The proof of this result generalizes that of Theorem \ref{Thm: BP = MD iff BP = CT} but still relies on a similar perturbation argument. The intuition is that, if an allocation is ex-ante fair but still produces unequal realized values across groups, the planner can tilt the allocation toward higher-value groups without systematically favoring any trait. This yields a Pareto improvement.
In addition, the incompatibility result becomes even stronger when the full-dimensionality condition holds.\footnote{In this setting, the full-dimensionality condition holds at almost every trait distribution $p$ provided that the value-possibility correspondence $\mathbf{V}$ is defined as in \eqref{eq:value_corresp} for some \emph{finite} set $A$. This can be interpreted as a model where each group can be assigned to one of finitely many feasible tasks $a \in A$ to complete.}
  \begin{corollary}
  If the full-dimensionality condition holds at $p$, then either $p \in \co \argmax \overline{V}$ or there does not exist a feasible, Pareto optimal, and ex-ante fair allocation at $p$.
    \end{corollary}
   Unless the initial trait composition lies in the convex hull of value-maximizing compositions, an allocation that is both Pareto optimal and ex-ante fair does not exist.

An alternative way to address the efficiency-fairness tradeoff would be to study the efficiency loss when we constrain the social planner problem with fairness requirements. Under the interpretation of this section, mediation and cheap talk exactly coincide with the problem of a social planner with a utilitarian objective that picks the assignment subject to ex-ante and ex-post fairness, respectively. With this, Theorem \ref{Thm: MD vs CT} (for the general model) and Corollary \ref{Cor: MD vs CT finite} (for the finite-task model) characterize those value correspondences $\mathbf{V}$ such that an ex-post fairness constraint would entail a strict loss in efficiency with respect to an ex-ante fairness constraint. 

Here, directional improvability captures the idea that there does not exist a clear direction of monotonicity of preferences over group composition. For example, this happens in a model where groups with a uniform distribution over traits generate lower values, while groups with a composition skewed towards some traits generate larger values, such as the correspondence considered in Section \ref{ssec:lobbying}.

\section{Signaling games with transparent motives}
\label{sec:signaling}
In this section, we extend our analysis by allowing the sender to use payoff-relevant
signals in addition to payoff-irrelevant messages. The formal statements and proofs are relegated to Appendix \ref{app:signaling-proofs}. We follow the belief-based
formulation of signaling games in \cite{koessler2024belief}. The sender observes
\(\omega\), chooses a signal \(x\in X\), and may also send a cheap-talk message;
the receiver observes the signal-message pair, forms a belief, and chooses an
action. We maintain transparent motives: payoffs are \(u_S(x,a)\) and \(u_R(\omega,x,a)\).

The mediator commits to a signaling mechanism
\(\sigma:M_S\to\Delta(X\times M_R)\). After the sender reports \(m_S\), the mechanism
draws a pair \((x,m_R)\), observed by the receiver. This remains within the
framework of \cite{myerson1982optimal}. The smart-contract interpretation from
Section \ref{sec: model interpretation} is especially natural here, since the
mediator may now commit not only to information transmission but also to
payoff-relevant outcomes, such as prices or quantities.\footnote{For example,
this is the role played by smart contracts in \cite{brzustowski2023smart}.}

The only change in the belief-based approach is that the sender's interim value
correspondence must incorporate the signal. For each posterior \(\mu\) and
signal \(x\), let
\begin{equation*}
    \mathbf W(\mu,x):=\co\big(u_S(x,\argmax_{a\in A}\mathbb E_\mu[u_R(\omega,x,a)])\big)
\end{equation*}
be the set of sender payoffs attainable when the receiver best responds to
\((\mu,x)\), and define
\(\mathbf V^S(\mu):=\bigcup_{x\in X}\mathbf W(\mu,x)\) and
\(\overline V^S(\mu):=\max \mathbf V^S(\mu)\). Thus \(\mathbf V^S(\mu)\) records all
sender values attainable at posterior \(\mu\), allowing for the choice of a
payoff-relevant signal.

With this substitution, the characterization of mediated communication is
unchanged in substance. A mediated outcome induces a distribution over triples
\((\mu,x,v)\), or its projection over posterior-value pairs \((\mu,v)\).
Consistency* is still \(\mathbb E[\mu]=p\), obedience* now requires
\(v\in\mathbf W(\mu,x)\) almost everywhere, and honesty* is still
\(\Cov(v,\mu)=\mathbf 0\). The reason is the same as in Theorem
\ref{Thm: Implementability}: all sender types rank
signal-action pairs in the same way, so truth-telling requires every report to
generate the same expected sender payoff. In belief space, this amounts to the
orthogonality of \(v\) and \(\mu\).\footnote{Formally, the proof is the same as the proof of Theorem
\ref{Thm: Implementability}, with the outcome distribution pushed forward to
\((\mu,x,v)\) rather than \((\mu,v)\). Necessity follows from Bayes plausibility,
receiver optimality conditional on the observed signal, and sender indifference
across reports; sufficiency follows by selecting, for each \((\mu,x)\), a mixed
receiver best response that delivers the required value.}

The direct-signaling benchmark has the analogous flatness requirement. Without
mediation, every posterior-signal-message realization used on path must give
the sender the same payoff; otherwise, all sender types would deviate to the more
profitable realization. By Theorem 2 in \cite{koessler2024belief}, the
sender-preferred equilibrium payoff in signaling games with transparent motives
is the quasiconcave envelope of \(\overline V^S\) evaluated at the prior.\footnote{
More precisely, \cite{koessler2024belief} state the result for the
sender-preferred PBE value. For our BNE benchmark, their INTIR condition is not
relevant for the sender-preferred value.} Bayesian persuasion instead
removes the sender's incentive constraint and allows any Bayes-plausible
distribution over posterior-signal pairs.

Mediation improves on direct signaling through the same
countervailing-incentive logic as before. The mediator can improve on direct
signaling when it can assign a payoff above the direct-signaling value to a
posterior closer to the prior and a payoff below that value to a more extreme
posterior, while satisfying Bayes plausibility and zero covariance. Equivalently,
using the signal-contingent di-convexification \(\mathbf W_{CT}\), one looks
for \((\mu^+,x^+,v^+),(\mu^-,x^-,v^-)\in\Gr(\mathbf W_{CT})\) such that
\begin{equation*}
    \mu^+\in(p,\mu^-) \quad \text{and} \quad v^+>\overline V_{CT}(p)>v^-.
\end{equation*}
The high-value
realization raises the sender's payoff; the low-value realization is the
credibility cost that keeps reports incentive compatible. The same averaging argument used in the proof of Theorem
\ref{Thm: MD vs CT} gives necessity under full dimensionality.\footnote{Once outcomes are represented by triples \((\mu,x,v)\), the proof of
Theorem \ref{Thm: MD vs CT} applies to their projection on \((\mu,v)\), with
\(\mathbf V\) replaced by the enlarged correspondence
\(\mathbf V^S(\mu)=\bigcup_x\mathbf W(\mu,x)\). The only additional step is the
measurable selection of signals and receiver best responses delivering the
selected value \(v\in\mathbf V^S(\mu)\).}

This extension clarifies why payoff-relevant signals can make mediation more
powerful. In the baseline model, the mediator moves only across receiver
posteriors. In a signaling game, it can also move vertically within the feasible
sender payoffs at a posterior by choosing the signal. This makes it easier to
create the high- and low-payoff realizations needed for zero covariance.

To illustrate, return to the platform example of Section
\ref{Subsec: Illustrative Example}, but now let the seller choose the price
\(x\ge 0\). Given posterior \(\mu\), type-\(1\) buyers buy whenever
\(\mu\ge x\), while type-\(0\) buyers buy whenever \(1-\mu\ge x\). The upper envelope of $\mathbf{W}(\mu,x)$ is thus
\begin{equation*}
    x\left(\frac{1}{3}\mathbb{I}[\mu \geq x]+\frac{2}{3}\mathbb{I}[1-\mu \geq x]\right),
\end{equation*}
where indifferent buyers purchase. The lower envelope is obtained by replacing the weak inequalities with strict ones.
Direct signaling
requires the seller's payoff to be constant across all posterior-price pairs
used in equilibrium. A mediating platform instead needs only
\(\Cov(v,\mu)=0\) with $v\in\bigcup_{x\in X}\mathbf{W}(\mu,x)$ almost surely. Thus prices can be used to lower the seller's payoff
after some posteriors to make more favorable posterior-price pairs
credible after others. For instance, when $p \in (2/3, 1)$, mediation strictly improves on direct signaling: $1/2 \in \mathbf{W}(1/2,1/2)$ and $0 \in \mathbf{W}(0,0)$ with $0 < 1/2 < p$ and $1/2 > \oV_{CT}(p) = 1/3 > 0$. This improvement is unavailable with a fixed price $1/3$ in Section \ref{Subsec: Illustrative Example}.

Beyond signaling games, our results also have implications for repeated games
with asymmetric information and transparent motives, as in
\cite{hart1985nonzero} and \cite{aumann2003long}. Results of
\cite{forges1985correlated} and \cite{habu2024knowing} relate Nash and
correlated-equilibrium payoffs in these dynamic environments to equilibrium
payoffs of the corresponding static communication games. Hence our comparison
between cheap talk and mediation also translates into a comparison between Nash
and correlated equilibria in the associated repeated games. We defer the formal
statements to Appendix~\ref{app:repeated}.

\begingroup
\small
\begin{spacing}{0.9}
\bibliographystyle{ecta}
\bibliography{reference}
\end{spacing}
\endgroup

\appendix
\section{Appendix}

\subsection{Preliminaries} \label{app: prelim}

The proof of the next ancillary lemma is standard and relegated to Appendix \ref{App: Lemmas}.

\begin{lemma}\label{Lem: Obedience}
    An outcome distribution $\pi \in \D(\W \times A)$ satisfies Obedience if and only if for every measurable $\tilde{a} : A \to A$, $\int u_R(\w,a) \de \pi(\w,a) \geq \int u_R(\w,\tilde{a}(a)) \de \pi(\w,a)$.
\end{lemma}

Given two measurable spaces $(X,\Sigma), (X',\Sigma')$, a measure $\eta$ on $(X,\Sigma)$, and a measurable function $f:X\to X'$, we let
$(f)_{\#}\eta$ denote the pushforward measure of $\eta$ under $f$. A probability kernel from $X$ to $X'$ is a mapping $\nu: X\times \Sigma' \to \R_+$ such that $\nu(\cdot, S)$ is $\Sigma$-measurable for fixed $S \in \Sigma'$ and $\nu(x, \cdot)$ is a probability measure on $X'$ for fixed $x$. We let $\eta \circ \nu$ denote the composition of a probability measure with a kernel, which is a probability measure in $\Delta(X')$ defined as
\[
(\eta\circ \nu)(S) = \int_X \nu(x,S) \de \eta(x)
\]
for every $S \in \Sigma'$. We use $\mathbb{I}$ to denote the indicator function.

Given prior $p \in \D(\W)$, we use $\TT_{BP}(p), \TT_{MD}(p)$ and $\TT_{CT}(p)$ to denote the set of distributions feasible under persuasion, mediation, and cheap talk, respectively.

\subsection{Proofs}
\label{app:proofs}
\begin{proof}[Proof of Theorem \ref{Thm: Implementability}]
    We first show the only if direction. Suppose that $\eta \in \D(\D(\W)\times \R)$ is induced by some communication equilibrium outcome $\pi\in \D(\W\times A)$. Note that $\eta$ is the pushforward measure of $\pi$ under the measurable function $\phi: \W\times A \to \D(\W) \times \R$, where $\phi_1(\w,a) = \pi^a$ is a version of the conditional probability over $\W$ given $a$, and $\phi_2(\w,a) = u_S(a)$. For every $\w\in \W$,
    \begin{align*}
        \int_{\D(\W)\times \R} \mu(\w) \de \eta(\mu, v) =& \int_{\W\times A} \phi_1(\tilde{\w},a)(\w) \de \pi(\tilde{\w}, a) = \int_{\W\times A} \pi^a(\w) \de \pi(\tilde{\w},a)\\ 
        =& \int_{\W \times A} \mathbb{I}[\tilde{\w}=\w] \de \pi(\tilde{\w}, a) = p(\w),
    \end{align*}
    where the first equality follows because $\eta = (\phi)_{\#} \pi$, the second equality follows by definition, the third equality follows by iterated expectations, and the last equality follows by Consistency of $\pi$. Hence, $\eta$ satisfies \eqref{Eq: Bayes-plausibility}.

    Note that by Obedience, for $\pi$-almost all $a \in A$, $u_S(a) \in \BV(\pi^a)$, so $\phi(\w,a) = (\pi^a, u_S(a)) \in \Gr(\BV)$. Hence, $\eta(\Gr(\BV)) = \pi(\phi^{-1}(\Gr(\BV))) = 1$ because $\eta = (\phi)_\# \pi$, so \eqref{Eq: Obedience} is satisfied.

    By Honesty of $\pi$ and the fact that $u_S$ does not depend on $\w$, we have $\mathbb{E}_{\pi^\w}[u_S] = \mathbb{E}_\pi[u_S]$ for every $\w\in\W$. By \eqref{Eq: Bayes-plausibility} we have shown before, for every $\w \in \W$, 
    \begin{align*}
        \Cov_\eta[v,\mu(\w)] =& \int_{\D(\W)\times \R} v \, \mu(\w) \de \eta(\mu, v) - p(\w) \int_{\D(\W)\times \R} v  \de \eta(\mu, v) \\
        =& \int_{\W\times A} u_S(a) \pi^a(\w) \de \pi(\tilde{\w}, a) - p(\w) \int_{\W\times A} u_S(a) \de \pi(\tilde{\w},a)\\
        =&\; p(\w) \, (\mathbb{E}_{\pi^\w}[u_S] - \mathbb{E}_\pi [u_S])  = 0,
    \end{align*}
    where the second equality follows from $\eta = (\phi)_\# \pi$, and the third equality holds because $\frac{\de \pi^\w}{\de \marg_A \pi }(a) = \frac{\pi^a(\w)}{p(\w)}$ by Consistency of $\pi$. Hence, \eqref{eq:zero_cov} holds.

    Next, we show by construction that for every $\eta\in \D(\D(\W)\times \R)$ that satisfies \eqref{Eq: Bayes-plausibility}, \eqref{Eq: Obedience} and \eqref{eq:zero_cov}, there exists a communication equilibrium outcome $\pi$ with $\mathbb{E}_{\eta}[v] = \mathbb{E}_\pi[u_S]$. By \eqref{Eq: Obedience}, the conditional mean $\mathbb{E}_\eta[v \mid \mu]$ is a measurable selector of $\BV$. Hence, Lemma 2 of \cite{LR20} implies that there exists a measurable $\lambda:\D(\W)\to \D(A)$ such that for every $\mu\in \D(\W)$, $\lambda(\mu) \in \argmax_{\alpha\in \D(A)}\mathbb{E}_{\mu\times \alpha}[u_R(\w,a)]$ is a mixed best response for the receiver with posterior $\mu$, and $\int_A u_S(a) \de \lambda(\mu)(a) = \mathbb{E}_{\eta}[v\mid \mu ]$.

    Let $\t = \marg_{\D(\W)}\eta$. Define a probability kernel $\kappa: \Delta(\W) \to \W\times A$ by $\kappa(\mu,\cdot) = \mu \times \lambda(\mu)$, which is the product measure of $\mu\in\D(\W)$ and $\l(\mu)\in \D(A)$. By Lemma 3.1 of \cite{kallenberg2021}, $\kappa$ is well-defined since $\lambda$ is measurable. Let $\pi \coloneqq \tau \circ \kappa$, we show that $\pi$ is a desired communication equilibrium outcome. By construction, for every bounded measurable $u:\W\times A\to \R$,
    \begin{align} \label{Eq: change of var}
        \mathbb{E}_{\pi}[u] = \int_{\D(\W)} \int_{\W \times A} u(\w,a) \de \kappa(\mu, \w, a) \de \t(\mu) 
        =& \int_{\D(\W)} \mathbb{E}_{\mu \times \lambda(\mu)}[u(\w,a)]\de \t(\mu).
    \end{align}

    Therefore, $\mathbb{E}_{\eta}[v] = \mathbb{E}_\pi[u_S]$. By \eqref{Eq: change of var}, for every $\w\in \W$, $\pi(\w, A) = \int_{\D(\W)} \mu(\w) \lambda(\mu)(A) \de \t(\mu) = \int_{\D(\W)\times \R} \mu(\w)\de \eta(\mu,v) = p(\w)$, where the last equality follows from \eqref{Eq: Bayes-plausibility}. Hence, $\pi$ satisfies Consistency. 
    
    To see Obedience, take any measurable $\tilde{a}:A \to A$, by definition of $\lambda$, we have $\mathbb{E}_{\mu \times \lambda(\mu)}[u_R(\w,a)] \geq \mathbb{E}_{\mu\times (\tilde{a})_\# \lambda(\mu)}[u_R(\w,a)]$ for any $\mu \in \D(\W)$. Taking expectation with respect to $\t$ as in \eqref{Eq: change of var}, we have $\int u_R(\w,a) \de \pi(\w,a) \geq \int u_R(\w,\tilde{a}(a)) \de \pi(\w,a)$, and $\pi$ satisfies Obedience by Lemma \ref{Lem: Obedience}.

    Finally, $\pi$ satisfies Honesty since for every $\w \in \W$,
    \begin{align*}
        \mathbb{E}_{\pi^{\w}}[u_S] = \frac{1}{p(\w)} \int_{\D(\W)} \mu(\w) \mathbb{E}_{\lambda(\mu)}[u_S(a)] \de \t(\mu) = \frac{1}{p(\w)} \int_{\D(\W)} \mu(\w) \mathbb{E}_{\eta}[v\mid \mu] \de \t(\mu) = \mathbb{E}_{\pi}[u_S],
    \end{align*}
    where the first equality follows from \eqref{Eq: change of var} and $\pi(\w,A) = p(\w)$, the second follows from definition of $\lambda$, and the last one follows from \eqref{Eq: Bayes-plausibility} and \eqref{eq:zero_cov}. 
\end{proof}

\begin{proof}[Proof of Theorem \ref{Thm: BP = MD iff BP = CT}]
    (Only if) It is immediate. (If) We prove the contrapositive. If $\max \oV - \min \uV=0$, then the statement is obvious. Thus, we now assume that  $B := \max \oV - \min \uV>0$.
    If verifiability has no value, then there exists $\eta \in \TT_{MD}(p)$ with $\mathbb{E}_{\eta}[v] = \VV_{BP}(p)$. Let $\bar v = \mathbb{E}_{\eta}[v]$. Fix $\e \in (0, 1/B)$, define $\eta_\e \in \D(\D(\W)\times \R)$ by 
    \[
    \frac{\de \eta_\e}{\de \eta}(\mu,v) = 1 + \e(v - \bar v) > 0
    \]
    which is a well-defined probability distribution since the Radon-Nikodym derivative $1+\e(v-\bar v)$ is positive $\eta$-almost surely and integrates to one. By construction, $\eta_\e$ has the same support as $\eta$, so \eqref{Eq: Obedience} still holds. By construction,
    \[
    \mathbb{E}_{\eta_\e}[\mu] = \mathbb{E}_{\eta}[\mu] + \e \mathbb{E}_{\eta}[(v-\bar v)\mu] = p + \e \Cov_\eta[v, \mu] = p,
    \]
    where the last equality follows from \eqref{eq:zero_cov}. Therefore, $\eta_\e \in \TT_{BP}(p)$ is feasible for persuasion, with an expected value
    \[
    \mathbb{E}_{\eta_\e}[v] =  \mathbb{E}_{\eta}[v] + \e \mathbb{E}_{\eta}[(v-\bar v)v] =  \mathbb{E}_{\eta}[v] + \e \Var_\eta[v].
    \]
    Therefore, if $\Var_\eta[v] > 0$, then $\mathbb{E}_{\eta_\e}[v]> \mathbb{E}_{\eta}[v]$, contradicting the optimality of $\eta$. Hence, $\Var_\eta[v] = 0$ and $\eta$ is feasible under cheap talk.
\end{proof}

\begin{proof}[Proof of Proposition \ref{Prop: sufficient cond for BP > MD}]
    Suppose there exists $\mu \in H^*(p)$ such that $\oV_{CT}(\mu) > \oV_{CT}(p)$, then there exists $\eta \in \TT_{CT}(\mu)$ with $\mathbb{E}_\eta[v] > \oV_{CT}(p)$. As $\mu \in H^*(p)$, there exists $\mu_0 \in \D(\W)$ such that $\oV_{CT}(p) \in \BV_{CT}(\mu_0)$ and $p = \alpha \mu_0 + (1-\alpha) \mu$ for some $\alpha \in (0,1)$. Hence, there exists $\eta_0 \in \TT_{CT}(\mu_0)$ such that $\mathbb{E}_{\eta_0}[v] = \oV_{CT}(p)$, and $\eta' = \alpha \eta_0 + (1-\alpha) \eta \in \TT_{BP}(p)$. It follows that $\mathbb{E}_{\eta'}[v] = \alpha \mathbb{E}_{\eta_0}[v]  + (1-\alpha) \mathbb{E}_{\eta}[v] > \oV_{CT}(p)$. By Theorem \ref{Thm: BP = MD iff BP = CT}, this implies $\VV_{BP}(p) > \VV_{MD}(p)$.

    Conversely, suppose $\oV_{CT}(p) \geq \max \oV$. Since $\VV_{CT}(p) \leq \VV_{BP}(p) \leq \max \oV$, we have $\VV_{CT}(p) = \VV_{MD}(p)= \VV_{BP}(p)$, so verifiability is not valuable at $p$. 

    Finally, if the full-dimensionality condition holds at $p$, then $\max_{\mu \in H^*(p)} \oV_{CT}(\mu) = \max_{\mu \in \D(\W)} \oV_{CT}(\mu) = \max \oV$. So the necessary and sufficient conditions coincide.
\end{proof}

Recall that $\BV_{CT}: \D(\W)\rightrightarrows \R$ is the correspondence of the sender's payoff under some cheap-talk equilibrium with prior $\mu \in \D(\W)$. 
By Corollary 3 and Section C.2.1 of \cite{LR20}, $\BV_{CT}$ is non-empty, convex, and compact-valued. 
Moreover, by Theorem 1 in \cite{LR20}, $s \ge \oV(p)$ is attainable under cheap talk if and only if $p \in \co\left\{\oV\ge s\right\}$, where we use $\left\{\oV\ge s\right\}$ to denote $\{\mu\in\D(\W): \oV(\mu)\geq s\}$.

We start with a useful lemma that extends Theorem 1 in \cite{LR20}.\footnote{Theorem 1 of \cite{LR20} establishes the weak inequality versions of the first equivalence in Lemma \ref{Lem: Envelope and cvx hull}. We extend this result to strict inequalities.}

\begin{lemma} \label{Lem: Envelope and cvx hull}
For every $s \in \R$, $\oV_{CT}(p) > s$ if and only if $p \in \co\set{\oV > s}$, and $\uV_{CT}(p) < s$ if and only if $p \in \co\set{\uV < s}$.
\end{lemma}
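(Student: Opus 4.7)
The plan is to bootstrap from the weak-inequality version of the result, which is exactly Theorem 1 in \cite{LR20}: for every $s \in \R$, $\oV_{CT}(p) \geq s$ if and only if $p \in \co\set{\oV \geq s}$, and symmetrically $\uV_{CT}(p) \leq s$ if and only if $p \in \co\set{\uV \leq s}$. Since $\W$ is finite, the upper-semicontinuity of $\oV$ and lower-semicontinuity of $\uV$ (noted right after the definition of $\BV$) are enough to invoke this weak version, as remarked in the excerpt.

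I focus on the first equivalence; the second follows by the symmetric argument applied to $-\uV$. For the ``only if'' direction, assume $\oV_{CT}(p) > s$ and pick any $s' \in (s, \oV_{CT}(p))$. By the weak-inequality version, $p \in \co\set{\oV \geq s'}$. Since $s' > s$, the inclusion $\set{\oV \geq s'} \subseteq \set{\oV > s}$ is immediate, so $p \in \co\set{\oV > s}$, as desired.

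For the ``if'' direction, assume $p \in \co\set{\oV > s}$. Since $\D(\W) \subseteq \R^n$ is finite-dimensional, Carath\'eodory's theorem gives a representation $p = \sum_{i=1}^{n+1} \lambda_i \m_i$ with $\lambda_i \geq 0$, $\sum_i \lambda_i = 1$, and $\oV(\m_i) > s$ for each $i$. Setting $s' \coloneqq \min_{i : \lambda_i > 0} \oV(\m_i)$, we have $s' > s$ because the minimum is over finitely many strictly greater values. Each $\m_i$ with $\lambda_i > 0$ lies in $\set{\oV \geq s'}$, so $p \in \co\set{\oV \geq s'}$, and the weak-inequality version then yields $\oV_{CT}(p) \geq s' > s$.

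The whole argument is essentially two lines modulo invoking LR20, and the only real subtlety is that the set $\set{\oV > s}$ need not be closed even when $\oV$ is upper semicontinuous, which is precisely why one cannot apply the weak-inequality version directly at the threshold $s$; the fix is the slight perturbation to an intermediate $s'$ combined with Carath\'eodory to upgrade ``strict but maybe not closed'' into ``weak and closed.'' I do not foresee any substantive obstacle beyond correctly citing the semicontinuity regime under which the LR20 envelope characterization is stated.
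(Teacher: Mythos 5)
Your proposal is correct and takes essentially the same approach as the paper: bootstrap from LR20's weak-inequality characterization by perturbing the threshold from $s$ to a slightly larger $s'$ in each direction (using a finite Carath\'eodory representation to define $s'$ in the ``if'' direction). The paper's write-up additionally splits on $s\geq\oV(p)$ versus $s<\oV(p)$, but your unified argument covers both cases without loss.
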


This lemma implies that there exists a cheap-talk equilibrium that attains a \emph{strictly} higher (lower) value than $s$ if and only if the prior is in the convex hull of posteriors with highest (lowest) value strictly above (below) $s$. The proof is relegated to Appendix \ref{App: Lemmas}.

\begin{proof}[Proof of Theorem \ref{Thm: MD vs CT}] Let $s = \oV_{CT}(p)$ in this proof. \leavevmode
    \paragraph{First Statement.} We show this statement by an explicit construction. Suppose cheap talk is hull-directionally improvable at $p$. By definition, there exists $(\mu^+,v^+),(\mu^-,v^-) \in \Gr(\BV_{CT})$ such that $\mu^+ \in (p, \mu^-)$ and $v^+ > s > v^-$. Therefore, there exists $\lambda\in(0,1)$ such that $\mu^+ = \l \mu^- + (1-\l) p$, $\eta^+ \in \TT_{CT}(\mu^+)$ that attains value $v^+$, and $\eta^-\in \TT_{CT}(\mu^-)$ that attains value $v^-$. 

    Let $\xi \coloneqq \tfrac{\tfrac{1}{\lambda}(s-v^-)}{v^+-s + \tfrac{1}{\lambda}(s-v^-)}$. Then,
    \begin{align*}
        \mathbb{E}_{(\xi \eta^+ + (1-\xi) \eta^-)}\left[(v - s) (\mu - p) \right]
        = \xi (v^+-s) (\m^+-p) - (1-\xi) (s-v^-) (\m^- - p)
        = \mathbf{0}.
    \end{align*}
    Moreover, since $\mu^- \in H^*(p)$, there exists $\mu_0\in \D(\W)$ such that $s \in \BV_{CT}(\mu_0)$ and $p\in (\mu_0, \mu^-]$. As $\mu^- \neq p$, there exists $\alpha \in (0,1)$ such that $p = (1-\alpha)  \mu_0 + \alpha (\xi \mu^+ +(1-\xi) \mu^-)$. Since $s \in \BV_{CT}(\mu_0)$, there exists $\eta_0 \in \TT_{CT}(\mu_0)$ that attains $s$.

    Next, consider $\tilde{\eta} \coloneqq (1-\alpha) \eta_0 + \alpha \xi \eta^+ + \alpha(1-\xi)\eta^-$. By construction, $\tilde{\eta}$ satisfies \eqref{Eq: Bayes-plausibility} and \eqref{Eq: Obedience}. It also satisfies \eqref{eq:zero_cov} since 
    \begin{align*}
        \mathbb{E}_{\tilde{\eta}}\left[v(\mu - p)\right] = s \mathbb{E}_{\tilde{\eta}}\left[\mu - p\right] + \alpha \mathbb{E}_{(\xi \eta^+ + (1-\xi) \eta^-)}\left[(v - s) (\mu - p) \right] = \mathbf{0},
    \end{align*}
    where the last equality is by \eqref{Eq: Bayes-plausibility} and our construction of $\xi$. With this, we have
    \begin{equation*}
    \label{eq:bound MD-CT}
\mathbb{E}_{\tilde{\eta}}[v] = s + \alpha \xi (v^+-s) + \alpha (1-\xi) (v^--s)  = s + \alpha(\tfrac{1}{\lambda}-1) \tfrac{(v^+-s) (s-v^-)}{v^+-s + \tfrac{1}{\lambda} (s-v^-)} > s.
    \end{equation*}

\paragraph{Second statement.}
Let $s=\oV_{CT}(p)$, and suppose mediation is valuable at $p$. Then there exists $\eta \in \TT_{MD}(p)$ such that $\mathbb{E}_\eta [v] > s$. By Remark \ref{Rmk: Existence}, we may take $\eta$ to have finite support. Let $H\coloneqq \{(\mu,v) \in \D(\W)\times \R: v > s\}$ and $L\coloneqq \{(\mu,v) \in \D(\W)\times \R: v < s\}$.

Define
\[
A=\int_H (v-s)\,\de\eta(\mu,v), \qquad
B=\int_L (s-v)\,\de\eta(\mu,v) .
\]
Since $\mathbb{E}_\eta[v]>s$, we have $A>B\geq 0$. 

We first show that $B>0$. Suppose not, then
$B=0$, so $v\geq s$ $\eta$-almost surely and $A>0$. By \eqref{eq:zero_cov},
\[
\mathbf{0} = \int_{\D(\W)\times \R} (v-s) (\m-p)\de \eta(\mu, v) = \int_H (v-s) (\m-p)\de \eta(\mu, v).
\]
Therefore,
\[
p = 
\frac{1}{A}\int_H (v-s)\mu\,\de\eta(\mu,v).
\]
For every $(\mu,v)\in H\cap\supp(\eta)$, \eqref{Eq: Obedience} implies
$v\in \BV(\mu)$, and hence $\oV(\mu)>s$. Since $\eta$ has finite support, we have 
$p \in \co\{\oV>s\}$. By Lemma \ref{Lem: Envelope and cvx hull},
this implies $\oV_{CT}(p)>s$, contradicting $s=\oV_{CT}(p)$. Hence, $B>0$.

Now define
\[
\bar\mu^+
\coloneqq
\frac{1}{A}\int_H (v-s)\mu\,\de\eta(\m,v), \qquad
\bar\mu^-
\coloneqq
\frac{1}{B}\int_L (s-v)\mu\,\de\eta(\m,v) .
\]
By \eqref{Eq: Bayes-plausibility} and \eqref{eq:zero_cov}
\[
\mathbf 0=\int_{\D(\W)\times \R} (v-s)(\mu-p)\,\de\eta(\m,v)
  =A(\bar\mu^+-p)-B(\bar\mu^- -p).
\]
Therefore
\[
\bar\mu^+-p=\frac{B}{A}(\bar\mu^- -p).
\]
As before, since $\eta$ has finite support, $\bar \mu^+$ is a convex combination of posteriors $\m$ with $\oV(\mu) > s$. Lemma \ref{Lem: Envelope and cvx hull} then implies $\oV_{CT}(\bar\m^+) > s$. Similarly, $\uV_{CT}(\bar\mu^-)<s$. Moreover, $\bar \mu^+ \neq p$ as otherwise it contradicts $s = \oV_{CT}(p)$. Since $0 < B/A < 1$, we have $\bar\mu^+\in (p,\bar\mu^-)$ and thereby cheap talk is directionally improvable at $p$.
\end{proof}

\begin{proof}[Proof of Proposition \ref{prop: opt MD binary}]
    Normalize $\oV_{CT}(p) = 0$, we prove the result in two steps.

\textbf{Step 1.} There is an optimal solution $\eta^* = \sum_{i=1}^3 \eta_i^* \delta_{(\mu_i^*,v_i^*)}$ such that $\mu_1^* < p < \mu_2^* < \mu_3^*$ and $v_3^* < v_1^* \leq 0 < v_2^*$.

Since $\W$ is binary, Remark \ref{Rmk: Existence} implies that an optimal solution exists with support size at most three; atoms with the same posterior can be merged because $\BV$ is convex-valued. Any feasible distribution in \eqref{Eq: MD problem} with at most two distinct posteriors is feasible under cheap talk: if its support is $\{(\mu_1,v_1),(\mu_2,v_2)\}$ with $\mu_1<p<\mu_2$, then \eqref{Eq: Bayes-plausibility} and \eqref{eq:zero_cov} imply $\mathbb E[(v-v_2)(\mu-p)]=0$, hence $v_1=v_2$. Since $\VV_{MD}(p)>\VV_{CT}(p)$, every optimal solution has at least three distinct posteriors.

Let $\eta^* = \sum_{i=1}^3 \eta_i^* \delta_{(\mu_i^*,v_i^*)}$ be an optimal solution with three distinct posteriors $\mu_1^*<\mu_2^*<\mu_3^*$. By \eqref{Eq: Bayes-plausibility}, $\mu_1^*<p<\mu_3^*$. If $\mu_2^*\leq p$, then the normalization in subsection \ref{ssec: opt MD binary state} implies $v_1^*,v_2^*\leq 0$. Since the solution yields a value above $0$, we must have $v_3^*>0$. But then $v_i^*(\mu_i^*-p)\geq 0$ for all $i$, with strict inequality for $i=3$, contradicting \eqref{eq:zero_cov}. Hence, $\mu_1^*<p<\mu_2^*<\mu_3^*$.

\eqref{Eq: Bayes-plausibility} and \eqref{eq:zero_cov} imply
\begin{align} \label{eq: system binary state}
    \eta_1^*(p-\mu_1^*)=\eta_2^*(\mu_2^*-p)+\eta_3^*(\mu_3^*-p), \quad v_1^*\eta_1^*(p-\mu_1^*)
    =
    v_2^*\eta_2^*(\mu_2^*-p)+v_3^*\eta_3^*(\mu_3^*-p).
\end{align}
Thus, $v_1^*$ is a weighted average of $v_2^*$ and $v_3^*$. Since the solution is not feasible under cheap talk, the values are not all equal, so either $v_3^*<v_1^*<v_2^*$ or $v_2^*<v_1^*<v_3^*$.

The second case is impossible. If $v_2^*<v_1^*<v_3^*$, then $v_1^*\le 0$ because $\mu_1^*<p$. By \eqref{eq: system binary state},
\begin{align*}
    v_1^* = v_2^* + (v_3^* - v_2^*) \frac{\eta_3^* (\mu_3^*-p)}{\eta_2^*(\mu_2^*-p) + \eta_3^*(\mu_3^*-p)} > v_2^* + (v_3^* - v_2^*) \frac{\eta_3^*}{\eta_2^*+ \eta_3^*},
\end{align*}
where the strict inequality follows from $\mu_3^*-p>\mu_2^*-p >0$. Therefore, $\sum_{i=1}^3\eta_i^* v_i^* < v_1^*\leq 0,$ contradicting strict improvement. Hence $v_3^*<v_1^* \leq 0 <v_2^*$. 

\textbf{Step 2.} An optimal solution with the form obtained in Step 1 also satisfies the frontier requirements in the proposition.

For any $((\mu_i,v_i))_{i=1}^3 \in ([0,1]\times\R)^3$ with $\mu_1 < p < \mu_2 < \mu_3$ and $v_3 < v_1 < v_2$, there exists a unique solution $(\eta_i)_{i=1}^3$ with $\sum_i \eta_i = 1$ to the system \eqref{eq: system binary state}. This pins down a distribution $\sum_{i=1}^3 \eta_i \delta_{(\mu_i,v_i)}$ which is feasible under \eqref{Eq: MD problem} if each $(\mu_i,v_i)\in \Gr(\BV)$.

Let $S((\mu_i,v_i)_{i=1}^3)\coloneqq \sum_{i=1}^3\eta_i v_i$ denote the induced objective value. A direct calculation, reported in Appendix \ref{sec: OLA Binary}, shows that, on this domain,
\[
    \frac{\partial S}{\partial \mu_1}<0,\qquad
    \frac{\partial S}{\partial \mu_2}<0,\qquad
    \frac{\partial S}{\partial \mu_3}>0,
\]
and
\[
    \frac{\partial S}{\partial v_1}>0,\qquad
    \frac{\partial S}{\partial v_2}>0,\qquad
    \frac{\partial S}{\partial v_3}<0.
\]

Take the optimal solution $\eta^*$ from Step 1. If $v_1^*<\oV(\mu_1^*)$, then replacing
$v_1^*$ by $\oV(\mu_1^*)$ preserves the ordering of values and strictly increases $S$,
a contradiction. Thus $v_1^*=\oV(\mu_1^*)$. Similarly, $v_2^*=\oV(\mu_2^*)$ and $v_3^*=\uV(\mu_3^*).$

If there were
$\mu<\mu_1^*$ with $\oV(\mu)\ge v_1^*$, then replacing
$(\mu_1^*,v_1^*)$ by $(\mu,\oV(\mu))$ would strictly increase $S$, since
$S$ decreases in $\mu_1$ and increases in $v_1$. Hence, $\oV(\mu)<v_1^*$ for all $\mu < \mu_1^*$. Similarly, $\oV(\mu)<v_2^*$ for all $\mu\in(p,\mu_2^*)$ and $\uV(\mu)>v_3^*$ for all $\mu>\mu_3^*$.
\end{proof}

We next state and prove a lemma deriving the payoff correspondence $\BV$ considered in the lobbying application of Section \ref{ssec:lobbying} in the main text.

\begin{lemma}
\label{lm:lobbying}
    Action $i > 0$ is a best response for the receiver if and only if $\mu \in D_i = \co(\{\delta_i\}\cup I_i)$. The receiver is indifferent between $i$ and $0$ if and only if $\mu \in I_i$.
\end{lemma}
\begin{proof}
    The receiver is indifferent between actions $i$ and $0$ at $\mu \in \Delta(\W)$ if and only if $\lambda = \mathbb{E}_\mu [\phi(|\w-i|)]$, and the indifference set is pinned down by the intersection of this hyperplane and $\Delta(\W)$. By construction, the receiver is indifferent between $i$ and $0$ at every binary belief $\beta_{i}^k$ with $k\neq i$. This collection of $n-1$ affinely independent beliefs spans $I_i$, hence it is exactly where the receiver is indifferent between $i$ and $0$. Direct calculation shows that the receiver weakly prefers $i$ to $0$ if and only if $\mu \in D_i$.

    Next, we show that for any $\mu \in D_i$, the receiver strictly prefers $i$ to $j \neq 0$ as well. That is, $\mathbb{E}_{\mu}[\phi(|\w-i|)] < \mathbb{E}_{\mu}[\phi(|\w-j|)]$. It suffices to show each extreme point of $D_i$ satisfies this. $\delta_i$ clearly is. For every $\beta_{i}^k$, $\mathbb{E}_{\beta_{i}^k}[\phi(|\w-i|)] = \lambda$ and 
    \begin{align*}
        \mathbb{E}_{\beta_{i}^k}[\phi(|\w-j|)] =&\, \tfrac{\lambda}{\phi(|i-k|)}\phi(|k-j|) + (1-\tfrac{\lambda}{\phi(|i-k|)}) \phi(|i-j|) \\
        > & \, \tfrac{\lambda}{\phi(|i-k|)}\phi(|k-j|) + (1-\tfrac{\lambda}{2\lambda}) \phi(|i-j|) \\ 
        > & \, \tfrac{\lambda}{\phi(|i-k|)}\phi(|k-j|) + \tfrac{1}{2} 2\lambda \geq \lambda,
    \end{align*}
    where the inequalities hold because $i\neq j$ and $i\neq k$, so $\phi(|i-j|) \geq \phi(1) > 2\lambda$ and $\phi(|i-k|) > 2\lambda$. 

    The preceding arguments show that $i$ is the receiver's best response if $\mu \in D_i$. The only if direction then follows from the fact that if $\mu \notin D_i$, then $0$ strictly dominates $i$. 
\end{proof}

We now prove the main result of Section \ref{ssec:lobbying}.

\begin{proof}[Proof of Proposition \ref{prop: lobbying}]
  If the prior $p \in S_n$, no disclosure leads to the global maximum value $n$, so $\mathcal{V}_{BP}(p) = \mathcal{V}_{MD}(p) =\mathcal{V}_{CT}(p)$. If $p \in S_{n-1} \setminus S_n$ is non-boundary, then $\oV_{CT}(p) = n-1$ and the full-dimensionality condition holds at $p$. For any $(\mu^+,v^+), (\mu^-,v^-) \in \Gr(\BV)$ with $v^+ > \oV_{CT}(p)$ and $\mu^+ \in (p,\mu^-)$, we have $\mu^+ \in S_n$ and $v^+ = n$, and thereby $\mu^-$ is in  $S_n \setminus I_n$, so $v^- = n$. Therefore, cheap talk is not directionally improvable at $p$, so $\mathcal{V}_{BP}(p) > \mathcal{V}_{MD}(p) =\mathcal{V}_{CT}(p)$.

    To show (iii), suppose $p \in \D(\W) \setminus S_{n-1}$ is non-boundary, so the full-dimensionality condition holds and $\oV_{CT}(p) < n-1$. Let $\mu^-\in \D(\W)$ be the belief supported on $\{n-1, n\}$ with $\mu^-(n-1) = \mu^-(n) = 1/2$. Since $\phi(1) > 2\l$, we have $\beta_{n-1}^n(n) = \frac{\l}{\phi(1)} < \mu^-(n)$ and $\beta_{n}^{n-1}(n-1) = \frac{\l}{\phi(1)} < \mu^-(n-1)$. Hence, $\mu^-$ is not in $D_{n-1}$ or $D_n$ (and $\mu^-(i)=0$ rules out $D_i$ for $i\leq n-2$), and thereby $\uV_{CT}(\mu^-) = 0 <  \oV_{CT}(p)$. Direct calculation shows that there exists a small enough $\alpha \in (0,1)$ such that $\mu^+ = \alpha p + (1-\alpha)\mu^-$ lies in
    $S_{n-1}$.
    Therefore, $\oV_{CT}(\mu^+) = n-1 > \oV_{CT}(p)$, and cheap talk is directionally improvable at $p$. By Proposition \ref{Prop: sufficient cond for BP > MD} and Theorem \ref{Thm: MD vs CT}, we have $\mathcal{V}_{BP}(p) > \mathcal{V}_{MD}(p) >\mathcal{V}_{CT}(p)$. 
\end{proof}

\begin{proof}[Proof of Proposition \ref{pro:accep_Pareto}]
First, observe that $\psi(z)\coloneqq H(G^{-1}(z))$ is convex. By definition
    \begin{equation*}
        \psi(z)=H(G^{-1}(z))= \int_0^1 \max\left\{G^{-1}(t), G^{-1}(z)\right\} \de t=G^{-1}(z)z+\int_z^1G^{-1}(t)dt,
    \end{equation*}
  hence that
  \begin{align*}
      \psi'(z) = z (G^{-1})'(z) = \frac{z}{g(G^{-1}(z))}=\frac{G(G^{-1}(z))}{g(G^{-1}(z))}=\frac{G}{g} \circ G^{-1}(z), 
  \end{align*}
  where $g$ is the density of $G$. By log-concavity of $G$, $\tfrac{G}{g}$ is increasing. It follows that $\psi' $ is the composition of two increasing functions, hence it is an increasing function. Moreover, $\psi$ is also strictly increasing since it is a composition of two strictly increasing functions.
    
    Since $G$ is strictly increasing, $v(x) = G(R(x))$ is strictly quasiconvex and minimally edge non-monotone given $T$. Because $\min_{x\in X_T} R(x) < \max_{x\in X}R(x)$, we have the following observation, the proof of which is given in Online Appendix \ref{ssec:mean_meas}.
    \begin{claim}\label{claim: existence of full-dim region}
        There exists an $(n-1)$--simplex $\tilde{\D} \subseteq \D(\W)$ such that for all $p \in \interior \tilde{\D}$, $\VV_{MD}(p) > \VV_{CT}(p)$.
    \end{claim}
    
    Fix a prior $p \in \interior \tilde{\D}$, and let $\tau^* \in \D(\D(\W))$ be a sender-optimal feasible distribution over posteriors under mediation. Then, for every $\tau$ feasible under cheap talk,
    \[
    \mathbb{E}_{\t^*}[v_R\circ T] = \mathbb{E}_{\t^*}[\psi \circ v\circ T] \geq \psi(\mathbb{E}_{\t^*}[v\circ T]) > \psi(\mathbb{E}_{\t}[v\circ T]) = \mathbb{E}_{\t}[\psi \circ v\circ T] = \mathbb{E}_{\t}[v_R\circ T].
    \]
    The first inequality follows from Jensen's inequality, the strict inequality follows from the strict monotonicity of $\psi$ and $\VV_{MD}(p) > \VV_{CT}(p)$, and the second equality from the fact that $v\circ T$ must be constant over the support of $\t$ since it is feasible under cheap talk. Hence, mediation is strictly (ex-ante) Pareto improving at every $p \in \interior \tilde{\D}$.
\end{proof}

\begin{proof}[Proof of Proposition \ref{prop: project investment}]
    (i) Recall that $R(x) = \frac{1}{2}\|x\|^2$, so $\underline{\w} \in \arg\min_{\w\in \W} \|\w\|$ if and only if $\underline{\w} \in \arg\min_{x\in X_{T}}R(x)$. For any $\underline{\w} \in \arg\min_{\w\in \W}\|\w\|$ and $\w \neq \underline{\w}$, the one-dimensional function
    \begin{align*}
        \hat{R}_{\w}(\l) = R(\l \w + (1-\l) \underline{\w}) = \frac{1}{2}(\|\w - \underline{\w}\|^2 \l^2 + 2(\langle \w, \underline{\w}\rangle - \|\underline{\w}\|^2)\l + \|\underline{\w}\|^2),
    \end{align*}
    is non-monotone in $\l \in [0,1]$ if and only if $(\|\underline{\w}\|^2 - \langle\w , \underline{\w}\rangle) /\|\w - \underline{\w}\|^2 \in (0,1)$, which is equivalent to $\langle \w, \underline{\w}\rangle < \|\underline{\w}\|^2$. By definition, $R$ is minimally edge non-monotone if and only if there exists $\underline{\w} \in \arg\min_{\w\in\W} \|\w\|$ such that $\langle \w, \underline{\w}\rangle < \|\underline{\w}\|^2$ for every state $\w \neq \underline{\w}$. (ii) then follows immediately from Proposition \ref{pro:accep_Pareto}. To see (iii), note that suppose each $\w \in \W$ is a scalar multiple of some coordinate vector and $\underline{\w} \in \arg\min_{\w\in\W} \|\w\|$ is not Pareto dominated, then no other state has its positive coordinate in the same project as $\underline{\w}$. Therefore, $\langle \w, \underline{\w} \rangle = 0$ for every $\w \neq \underline{\w}$, so (i) holds.
\end{proof}

\begin{proof}[Proof of Theorem \ref{th:impossibility}]
    The if direction is immediate. For the only if direction, suppose a feasible $\eta \in \D(\D(\W)\times \R)$ is Pareto optimal and ex-ante fair. Let $\bar v = \mathbb{E}_{\eta}[v]$ and we focus on the nontrivial case $B = \max \oV - \min \uV > 0$. 
    As in the proof of Theorem \ref{Thm: BP = MD iff BP = CT}, fix $\e \in (0, 1/B)$, and define $\eta_\e \in \D(\D(\W)\times \R)$ by $ \frac{\de \eta_\e}{\de \eta}(\mu,v) = 1 + \e(v - \bar v) > 0$.
    This is a feasible allocation: it satisfies \eqref{Eq: Bayes-plausibility} and \eqref{Eq: Obedience}. The payoff change for trait $\w$ is:   
    \begin{align*}
        U_\w(\eta_\e) - U_\w(\eta) = &\;\mathbb{E}_{\eta_\e}\left[v\frac{\mu(\w)}{p(\w)}\right] - \mathbb{E}_{\eta}\left[v\frac{\mu(\w)}{p(\w)}\right] 
        = \frac{\e}{p(\w)} \mathbb{E}_{\eta}\left[(v-\bar v)v\mu(\w)\right] \\
        = &\; \frac{\e}{p(\w)} \mathbb{E}_{\eta}\left[(v-\bar v)^2\mu(\w)\right] + \frac{\e \bar v}{p(\w)} \mathbb{E}_{\eta}\left[(v-\bar v)\mu(\w)\right],
    \end{align*}
    where the first equality follows from the definition, the second equality is by construction of $\eta_\e$, and the last one is simply rewriting. Note that $\mathbb{E}_{\eta}\left[(v-\bar v)\mu(\w)\right] = 0$ by ex-ante fairness \eqref{eq:zero_cov}. Therefore,
    \[
        U_\w(\eta_\e) - U_\w(\eta) = \frac{\e}{p(\w)} \mathbb{E}_{\eta}\left[(v-\bar v)^2\mu(\w)\right] \geq 0.
    \]
    Moreover, if $\eta$ does not satisfy ex-post fairness, then
    \[
     \mathbb{E}_p[U_\w(\eta_\e) - U_\w(\eta)] = \e \mathbb{E}_{\eta}\left[(v-\bar v)^2\right] > 0, 
    \]
    so at least one trait strictly benefits, and $\eta_\e$ Pareto dominates $\eta$, a contradiction. 
\end{proof}

\section{Optimal mediation with binary states} \label{sec: OLA Binary}
In this section, we provide some calculation details for the proof of Proposition \ref{prop: opt MD binary} and an additional first-order necessary condition when $\BV$ is singleton-valued and differentiable. Take any triple of posterior-value pairs $\{(\mu_i,v_i)\}_{i=1}^3 \in ([0,1]\times\R)^3$ with $\mu_1 < p < \mu_2 < \mu_3$ and $v_3 < v_1 < v_2$. Consider the following system of equations
\begin{align*}
        \underbrace{\begin{pmatrix}
            1 & 1 & 1\\
            \m_1 & \m_2 & \m_3 \\
            v_1(\m_1-p) & v_2 (\m_2-p) & v_3(\m_3-p)
        \end{pmatrix}}_{A}
        \begin{pmatrix}
            \eta_1 \\
            \eta_2 \\
            \eta_3
        \end{pmatrix}
        =
        \begin{pmatrix}
            1 \\
            p \\
            0
        \end{pmatrix}
\end{align*}
Note that $\det A = (v_3 - v_1)(\m_3-p)(\m_2 - \m_1) - (v_2 - v_1)(\m_2 - p) (\m_3 - \m_1) \neq 0$ since $v_3 - v_1$ and $v_2 - v_1$ are of different signs. Cramer's rule yields a unique solution: 
    \begin{align*}
        \eta_1 = \frac{(v_3 - v_2)(\m_3 - p)(\m_2 - p )}{\det A}, 
        \eta_2 = \frac{(v_1 - v_3)(\m_1 - p)(\m_3 - p )}{\det A}, 
        \eta_3 = \frac{(v_2 - v_1)(\m_2 - p)(\m_1 - p )}{\det A}.
    \end{align*}
Since $v_3 < v_1 < v_2$, $\eta_i > 0$ for all $i=1,2,3$. Therefore, each $\{(\mu_i,v_i)\}_{i=1}^3 \in ([0,1]\times\R)^3$ with $\mu_1 < p < \mu_2 < \mu_3$ and $v_3 < v_1 < v_2$ induces a unique distribution $\sum_{i=1}^3 \eta_i \delta_{(\mu_i,v_i)}$ which is feasible under \eqref{Eq: MD problem} if each $(\mu_i,v_i)\in \Gr(\BV)$.
The corresponding expected value is $S((\mu_i,v_i)_{i=1}^3) = \sum_{i=1}^3 \eta_i v_i$. Substituting the solution $\eta_i$ above into $S$, we have 
\begin{align*}
        \frac{\partial S}{\partial \mu_i} = \frac{- (v_1-v_2)(v_1-v_3)(v_2-v_3)(p-\m_{i+1})(p-\m_{i+2})(\m_{i+1}-\m_{i+2})}{\left[(v_1 - v_2)(\m_1 \m_2 + p \, \m_3) + (v_3 - v_1)(\m_1 \m_3 + p \, \m_2) + (v_2 - v_3)(\m_2 \m_3 + p \, \m_1) \right]^2}\\
        \frac{\partial S}{\partial v_i} = \frac{(v_{i+1}-v_{i+2})^2 (\m_i - \m_{i+1})(\m_i - \m_{i+2}) (p-\m_{i+1}) (p-\m_{i+2})}{\left[(v_1 - v_2)(\m_1 \m_2 + p \, \m_3) + (v_3 - v_1)(\m_1 \m_3 + p \, \m_2) + (v_2 - v_3)(\m_2 \m_3 + p \, \m_1) \right]^2}
\end{align*}
for $i = 1,2,3$, where the subscripts are considered in mod 3. This implies the desired signs of the partial derivative in Step 2 of the proof of Proposition \ref{prop: opt MD binary}.

Recall the maintained assumption in subsection \ref{ssec: opt MD binary state}, that is, $\oV(\mu) \leq \oV_{CT}(p)$ for all $\mu \in [0,p]$ and mediation strictly improves on cheap talk. If we further assume that $\BV = V$ is singleton-valued and differentiable, the solution of \eqref{Eq: MD problem} also satisfies the following first-order necessary conditions.
\begin{proposition} \label{Prop: opt MD cont}
    \eqref{Eq: MD problem} admits a solution $\eta^*$ with $\supp \eta^* = \{(\mu_i, V(\m_i))\}_{i=1}^3$ such that
    \begin{enumerate}
        \item $\mu_1 < p < \mu_2 < \mu_3$ and $V(\m_3) < V(\m_1) \leq \oV_{CT}(p) < V(\m_2)$;
        \item for every $i \in \{1,2,3\}$ and $\{j,k\} = \{1,2,3\} \setminus \{i\}$ with $j<k$,
        \[
            (\mu_k - \mu_j) (V(\m_i) - V(\m_k)) (V(\m_i) - V(\m_j)) + (\m_j - \m_i)(\m_i - \m_k) V'(\m_i)(V(\m_k)-V(\m_j)) = 0
        \]
        with the boundary modifications that ``$\,=0$'' is replaced by ``$\,\leq 0$'' when $\mu_1=0$ (for $i=1$) and ``$\,\ge 0$'' when $\mu_3=1$ (for $i=3$).
    \end{enumerate} 
\end{proposition}
\begin{proof}
    Part 1 follows from Proposition \ref{prop: opt MD binary}. Following the same calculation as before, each tuple $\{\mu_i\}_{i=1}^3 \in [0,1]^3$ with $\mu_1 < p < \mu_2 < \mu_3$ and $V(\m_3) < V(\m_1) \leq \oV_{CT}(p) < V(\m_2)$ induces a unique distribution $\sum_{i=1}^3 \eta_i\delta_{(\mu_i, V(\mu_i))}$ that is feasible under \eqref{Eq: MD problem}, where
    \[
    \eta_i = \frac{(V(\mu_{i+2}) - V(\mu_{i+1}))(\mu_{i+2} - p)(\mu_{i+1}-p)}{(V(\mu_3)-V(\mu_1)) (\mu_3-p) (\mu_2-\mu_1) - (V(\mu_2)-V(\mu_1))(\mu_2-p)(\mu_3-\mu_1)}
    \]
    and the subscripts are considered in mod 3. For every $i\in \{1,2,3\}$, differentiating $S$ after substituting $v_i = V(\mu_i)$, we obtain 
    \[
    \frac{\partial \mathbb{E}_\eta[V]}{\partial \mu_i} = \frac{\partial S}{\partial \mu_i} + V'(\mu_i) \frac{\partial S}{\partial v_i} = c_i F_i
    \]
    where $c_i \in \R$ and $F_i$ denotes the expression in part 2. Direct calculation shows that $c_1, c_3 > 0 > c_2$. Hence, at an interior optimum, $F_i = 0$ for each $i$. If $\mu_1 = 0$, optimality requires $\frac{\partial \mathbb{E}_\eta[V]}{\partial \mu_1} \leq 0$, equivalently, $F_1 \leq 0$. If $\mu_3 = 1$, optimality requires $F_3 \geq 0$.
\end{proof}
To see how the first-order conditions are used, consider a smooth version of the example in Section \ref{Subsec: Illustrative Example}. Let \(\BV=V\) be singleton-valued and differentiable, with \(V(0)=2\) and \(V(1)=1\). Suppose that \(V(\mu)<2\) on \((0,\bar\mu)\), crosses above \(2\) at \(\bar\mu\), increases up to a unique maximizer \(\hat\mu\), and then decreases toward \(V(1)=1\). For any \(p\in(0,\bar\mu)\), Propositions \ref{prop: opt MD binary} and \ref{Prop: opt MD cont} imply that an optimal solution can be chosen with support $\{(0,2),(\mu^+,V(\mu^+)), (1,1)\}$, where \(\mu^+\in(\bar\mu,\hat\mu]\) is pinned down by the first-order condition.

\newpage
\begin{bibunit}[ecta]

\setcounter{page}{1}
{\renewcommand{\thefootnote}{\fnsymbol{footnote}}
\begin{center}
    \Large{\textbf{Online Appendix to ``The Bounds of Mediated Communication''}} \\
    \large{\textbf{Roberto Corrao\footnote[1]{Department of Economics, Stanford, \href{mailto:rcorrao@stanford.edu}{rcorrao@stanford.edu}} and Yifan Dai\footnote[2]{Department of Economics, MIT, \href{mailto:yfdai@mit.edu}{yfdai@mit.edu}}}\\
    FOR ONLINE PUBLICATION ONLY}
\end{center}}

\section{Omitted proofs of technical lemma} \label{App: Lemmas}

\begin{proof}[Proof of Lemma \ref{Lem: Obedience}]
    The only if direction follows from the law of iterated expectations and the definition of Obedience. For every measurable $\tilde{a} : A\to A$, we have
    \begin{align*}
    \int_{\W\times A} u_R(\w,a) \de \pi(\w,a) &= \int_A \mathbb{E}_{\pi^a}[u_R(\w,a)] \de\marg_A \pi(a) \\
    &\geq \int_A \mathbb{E}_{\pi^a}[u_R(\w,\tilde{a}(a))] \de\marg_A \pi(a) =\int_{\W\times A} u_R(\w,\tilde{a}(a)) \de \pi(\w,a).
    \end{align*}
    For the if direction, suppose Obedience is not satisfied, then there exists a measurable $S \subseteq A$ with $\marg_A \pi (S) > 0$ such that $ \mathbb{E}_{\pi^a}[u_R(\w,a)] < \max_{a'\in A}  \mathbb{E}_{\pi^a}[u_R(\w,a')]$ for all $a \in S$. By the measurable maximum theorem \cite[Theorem 18.19]{aliprantis06}, there exists a measurable $\hat{a}:A \to A$ with $\hat{a}(a) \in \argmax_{a'\in A}  \mathbb{E}_{\pi^a}[u_R(\w,a')]$. We then have $\int u_R(\w,a) \de \pi(\w,a) < \int u_R(\w,\hat{a}(a)) \de \pi(\w,a)$, contradiction.
\end{proof}

\begin{proof}[Proof of Lemma \ref{Lem: Envelope and cvx hull}]
    For any $s \geq \oV(p)$, the first equivalence follows from Theorem 1 of \cite{LR20}. For the only if direction, suppose $\oV_{CT}(p) > s$, then there exists $\eta\in \TT_{CT}(p)$ that attains a value $s' > s$. Theorem 1 of \cite{LR20} implies that $p \in \co\set{\oV \geq s'} \subseteq \co \set{\oV > s}$. For the if direction, suppose that $p \in \co \set{\oV > s}$. Then there exists finitely many points $\set{\mu_i}_{i=1}^{k} \subseteq \set{\oV > s}$ such that $p = \sum_{i=1}^k \alpha_i \mu_i$ for some $\set{\alpha_i}_{i=1}^k \subseteq  [0,1]$, $\sum_{i=1}^k \alpha_i =1$. Let $\hat{s} \coloneqq \min_i \oV(\mu_i)$, so that $p \in \co \set{\oV \geq \hat{s}}$. Theorem 1 of \cite{LR20} then implies that $\oV_{CT}(p) \geq \hat{s} > s$.
    For any $s < \oV(p)$, the first equivalence is true as both $\oV_{CT}(p) \geq \oV(p) > s$ and $p\in \co\set{\oV > s}$ are true. The second equivalence follows from a symmetric argument.\footnote{See footnote 15 of \cite{LR20}.} 
\end{proof}

\section{Existence and value of optimal mediation}\label{App: existence and value}
Let $g \in \mathbb{R}^n$ denote an arbitrary Lagrange multiplier for \eqref{eq:zero_cov} such that $\langle g,p\rangle=1$  and define the corresponding \emph{virtual} indirect value function of the sender as 
$V^g(\mu):= \max_{v \in \BV(\mu)}v\langle g,\mu\rangle$.
Each $V^g(\mu)$ is the belief-based version of the \emph{virtual utility} in \cite{myerson1997game} and \cite{salamanca2021value} and, like those, takes into account a fixed shadow price $g$ of the constraint \eqref{eq:zero_cov}.\footnote{Recall that the virtual utilities in both \cite{myerson1997game} and \cite{salamanca2021value} are defined on outcomes as opposed to beliefs.}
We next use these objects to characterize the optimal value of mediation. For any measurable function $U:\D(\W) \to \R$, let $\cav(U)$ denote its concavification.

\begin{proposition} \label{Prop: value of MD}
The sender's optimal value under mediation is
    \begin{equation*}
        \VV_{MD}(p)=  \inf_{g \in \mathbb{R}^{n}:\langle g,p\rangle=1} \cav(V^g)(p).
    \end{equation*}
and \eqref{Eq: MD problem} admits a solution $\eta^*$ supported on no more than $2n-1$ points. 
\end{proposition}
\begin{proof}
    Let $I = [\min_{\mu\in\D(\W)} \uV(\mu), \max_{\mu\in \D(\W)} \oV(\mu)]$, where the min and max are well-defined by semi-continuity of $\oV$ and $\uV$. Since every feasible \(\eta\) is supported on \(\Gr(\BV)\), it is supported on \(\D(\W)\times I\). Thus it is enough to work on the compact space \(\D(\W)\times I\).

    We first show that the feasible set $\TT_{MD}(p)$ is closed in $\D(\D(\W)\times I)$ in the weak topology. Take a sequence $\{\eta_n\}$ in $\TT_{MD}(p)$ that converges weakly to $\eta$. By \eqref{Eq: Bayes-plausibility} and the continuity of the integrand, we have $p = \int \mu \de \eta_n \to \int \mu \de \eta$. By \eqref{Eq: alternative zerocov}, for every $\w \in \W$,
    \begin{align*}
        0 = \Cov_{\eta_n}[v, \mu(\w)] =  \int_{\D(\W)\times \R} v (\mu(\w) -p(\w)) \de \eta_n(\mu,v) \\
    \to \int_{\D(\W)\times \R} v (\mu(\w) -p(\w)) \de \eta(\mu,v) = \Cov_{\eta}[v, \mu(\w)],
    \end{align*}
    where the weak convergence follows from the continuity of the integrand. Since $\BV$ is upper hemi-continuous and closed-valued, $\Gr(\BV)$ is closed and hence $1 = \limsup_{n} \eta_n(\Gr(\BV)) \leq \eta(\Gr(\BV))$ by the Portmanteau Theorem. Therefore, $\eta(\Gr(\BV)) = 1$, and $\eta \in \TT_{MD}(p)$. Therefore \(\TT_{MD}(p)\) is closed. Since it is a closed subset of the compact space \(\D(\D(\W)\times I)\), it is compact. The maximum of \eqref{Eq: MD problem} is attained as the objective is continuous.

    Since the feasible set is also convex, Bauer's maximum principle implies that there exists a solution $\eta'$ which is an extreme point. Theorem 2.1 of \cite{Win88} then implies the size of the support of $\eta'$ is bounded by the number of linearly independent moment constraints plus one, that is,  $|\supp(\eta')|\leq 2(n-1)+1 = 2n-1$.

    Finally, rewrite the value of the problem using a Lagrange multiplier $g \in \R^n$ on the truth-telling constraint\footnote{For the Lagrangian approach to constrained information design, see \cite{doval2018constrained}.}
    \begin{equation}
    \label{eq:sup_inf}
         \sup_{\eta \in \TT_{BP}(p)} \inf_{g\in \R^n} \int_{\D(\W)\times \R} v (1+\langle g, \mu - p\rangle) \de \eta(\m,v) = \sup_{\eta \in \TT_{BP}(p)} \inf_{g\in \R^n: \langle g, p \rangle = 1} \int_{\D(\W)\times \R} v \langle g, \mu\rangle \de \eta(\m,v),
    \end{equation}
    where the equality holds because $\langle g + c \mathbf{1}, \mu - p\rangle = \langle g, \mu - p\rangle$ for any $g \in \R^n$, $\mu \in \D(\W)$ and $c \in \R$.\footnote{$\mathbf{1} \in \R^n$ denotes the vector with all entries equal to 1.}
    Let $M(\eta, g) \coloneqq \int v \langle g,\mu \rangle\de \eta$. It is continuous in $\eta$ for every $g$ and continuous in $g$ for every $\eta$. It is also affine in both variables. Note that the set $\TT_{BP}(p)$ is compact and convex, so we can apply Sion's minimax theorem to change the sup and inf in \eqref{eq:sup_inf}. Therefore, the value can be rewritten as $ \inf_{g\in \R^n: \langle g, p \rangle = 1} \sup_{\eta \in \TT_{BP}(p)} \int v \langle g, \mu\rangle \de \eta = \inf_{g\in \R^n: \langle g, p \rangle = 1} \cav(V^g) (p)$, where $V^g(\mu) = \max_{v\in \BV(\mu)} v \langle g, \mu \rangle$, and the last equality follows from \cite{kamenica2011bayesian}.
\end{proof}

\section{Moment-measurable mediation: Quasiconvex case} \label{ssec:mean_meas}
In this section, we apply the results from Section \ref{Sec: MD vs CT} to \emph{moment-measurable mediation}. For $1\leq k \leq n-1$, a $k$-dimensional moment is a linear function $T:\D(\W)\to \R^k$ such that the set of relevant moments $X = T(\D(\W))$ has dimension $k$. We assume that $\BV(\mu)= \{v(T(\mu))\}$ is singleton-valued for some continuous $v:\R^k \to \R$. Here, we focus on the multidimensional case ($k>1$) under the assumption that $v(x)$ is strictly quasiconvex. This is the main case considered in past works on multidimensional cheap talk under transparent motives \citep{chakraborty2010persuasion,LR20}.\footnote{Quasiconvex sender's utilities play an important role also in the informed information design model of \cite{koessler2021information}.} Throughout this section, we write $V = v\circ T$. Since $\BV$ is singleton-valued, it suffices to consider distributions over posteriors in $\D(\D(\W))$.

When $v(x)$ is strictly quasiconvex and the full-dimensionality condition holds at $p$, only two extreme cases can happen: 
\begin{theorem} \label{Thm: multidim quasiconvex}
    Assume that $\BV(\mu) = \{v(T(\mu))\}$ for some $k$-dimensional moment $T$ ($k\geq 2$) and continuous and strictly quasiconvex $v(x)$. If the full-dimensionality condition holds at $p$, then exactly one of these cases holds: 
    \begin{description}
        \item[(1)] $\max V = \VV_{BP}(p) = \VV_{MD}(p) = \VV_{CT}(p) > V(p)$;
        \item[(2)] $\max V > \VV_{BP}(p) > \VV_{MD}(p) >
        \VV_{CT}(p) > V(p)$.
    \end{description}
\end{theorem}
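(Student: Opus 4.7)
My plan is to dichotomize based on whether $\VV_{CT}(p)$ equals $\max V$; strict quasiconvexity forces the split to be clean. First I would note that strict quasiconvexity rules out $v$ being constant, and since $q := T(p)$ lies in the relative interior of $X$, the Chakraborty--Harbaugh-style construction invoked in \cite{chakraborty2010persuasion} and \cite{LR20} produces a hyperplane through $q$ with equal $v$-maxima $m$ on the two half-spaces; strict quasiconvexity then makes $m > v(q) = V(p)$, so the associated two-message distribution is a cheap talk equilibrium delivering $\VV_{CT}(p) > V(p)$. If $\VV_{CT}(p) = \max V$, the sandwich $\max V \geq \VV_{BP}(p) \geq \VV_{MD}(p) \geq \VV_{CT}(p) = \max V$ yields case (1). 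Otherwise I set $s := \VV_{CT}(p) < \max V$ and establish the four strict inequalities of case (2) in turn.

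For $\VV_{BP}(p) < \max V$, I would argue by contradiction: if equality held, the concavification characterization of \cite{kamenica2011bayesian} writes $p$ as a convex combination of posteriors in $\argmax V$, and this very combination is a cheap talk equilibrium (with constant $V \equiv \max V$ on its support), contradicting $\VV_{CT}(p) < \max V$. For $\VV_{BP}(p) > \VV_{MD}(p)$, it suffices to note that any maximizer $\mu^*$ of $V$ satisfies $\oV(\mu^*) = \max V > s = \oV_{CT}(p)$, so Corollary \ref{cor:full-dim_BP vs MD} together with full dimensionality delivers the gap.

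The hard part will be $\VV_{MD}(p) > \VV_{CT}(p)$. By Corollary \ref{Cor: MD vs CT with single value, full dim, ND subopt} (applicable since $\oV_{CT}(p) > V(p)$ from step 1), it suffices to exhibit $\mu_0 \in \D(\W)$ with $\oV_{CT}(\mu_0) > s > \uV_{CT}(\mu_0)$. Strict quasiconvexity of $v$ makes $\{V < s\} = T^{-1}(\{v < s\}) \cap \D(\W)$ convex, so by Lemma \ref{Lem: Envelope and cvx hull} the two conditions collapse to $V(\mu_0) < s$ and $\mu_0 \in \co\{V > s\}$. Existence of such a $\mu_0$ is equivalent to the set $\{v > s\}$ being non-convex in $X$.

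The main obstacle is thus this geometric non-convexity claim, which I would prove by contradiction. If $\{v > s\}$ were convex, then so would be its closure $\{v \geq s\}$. By step 1 the optimal cheap talk equilibrium at $p$ cannot be degenerate, and by full dimensionality its support spans $\D(\W)$, so $\{v = s\}$ contains at least two distinct points $a, b$. Strict convexity of the sublevel set $\{v \leq s\}$ forces the open segment $(a, b) \subseteq \{v < s\}$, while the hypothesized convexity of $\{v \geq s\}$ would force $[a, b] \subseteq \{v \geq s\}$---a contradiction. With non-convexity in hand, I perturb $a, b$ slightly into $a', b' \in \{v > s\}$ so that $(a' + b')/2 \in \{v < s\}$ by continuity, and choose any $\mu_1 \in T^{-1}(a') \cap \D(\W)$, $\mu_2 \in T^{-1}(b') \cap \D(\W)$ (nonempty because $T$ is surjective onto $X$). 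The midpoint $\mu_0 := (\mu_1 + \mu_2)/2 \in \D(\W)$ then satisfies $V(\mu_0) = v((a' + b')/2) < s$ and, being a convex combination of posteriors in $\{V > s\}$, lies in $\co\{V > s\}$, completing the construction.
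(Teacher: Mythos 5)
Your decomposition matches the paper's at the high level: both arguments route case (2) through Corollary \ref{cor:full-dim_BP vs MD} for the persuasion--mediation gap and through Corollary \ref{Cor: MD vs CT with single value, full dim, ND subopt} plus Lemma \ref{Lem: Envelope and cvx hull} for the mediation--cheap-talk gap, reducing the latter to a geometric intersection claim about $\{v>s\}$ and $\{v<s\}$. The steps up to the ``hard part'' are essentially sound. But your proof of the geometric claim has a genuine gap.

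The problematic step is ``If $\{v>s\}$ were convex, then so would be its closure $\{v\ge s\}$.'' The closure of a convex set is indeed convex, but $\cl\{v>s\}$ need not equal $\{v\ge s\}$: a point $z$ with $v(z)=s$ need not be a limit of points where $v>s$. For strictly quasiconvex $v$ on a polytope $X=T(\D(\W))$, this equality can fail precisely at extreme points of $X$ that are local (but not global) maxima of $v$. A concrete instance: $X=\co\{(0,0),(4,0),(0,4)\}$, $v(x,y)=(x-1)^2+(y-1)^2$, $s=2$; then $(0,0)\in\{v\ge 2\}$ but every nearby point of $X$ has $v<2$, so $(0,0)\notin\cl\{v>2\}$. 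The same issue recurs in your final perturbation, where you move $a,b\in\{v=s\}$ ``slightly into'' $\{v>s\}$: this is exactly the unjustified membership $a,b\in\cl\{v>s\}$, and your $a,b$ come from $T(\supp\t)$, which may well consist of such bad extreme points. The paper sidesteps this entirely: it picks two interior points $x_1,x_2$ of the \emph{open} set $\{v>s\}$, not colinear with $T(p)$, and uses the unique crossing of each segment $[T(p),x_i]$ with $\{v=s\}$; strict quasiconvexity then makes the midpoint of the two crossings lie strictly inside the open set $\{v<s\}$, and a small perturbation towards $x_1,x_2$ (which works because of the uniqueness of the crossing) delivers the required configuration without any claim about $\cl\{v>s\}$. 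You would need to replace your contradiction-from-convexity argument with a construction of this kind anchored at $T(p)$ and at an interior point of $\{v>s\}$.
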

\begin{proof}
    By Corollary 6 of \cite{LR20}, when $T$ is multi-dimensional and $v$ strictly quasiconvex, no disclosure is suboptimal under cheap talk. Suppose the full-dimensionality condition holds at $p$, by Proposition \ref{Prop: sufficient cond for BP > MD}, $\VV_{BP}(p) = \VV_{MD}(p)$ if and only if $\set{V > \VV_{CT}(p)} = \emptyset$, which means that cheap talk attains the global maximum value. This leads to the dichotomy in the theorem statement: If $\max V = \VV_{CT}(p)$, then (1) holds trivially. It suffices to show  $\max V > \VV_{CT}(p)$ implies (2). 
    
    Note that if $\VV_{BP}(p) = \max V$, it must be the case that $V(\mu) = \max V$ for all $\mu$ in the support of any optimal distribution over posteriors under Bayesian persuasion, which implies $\VV_{BP}(p) = \VV_{CT}(p)$, yielding a contradiction. Hence, what remains to show is that $\VV_{MD}(p) > \VV_{CT}(p)$. 

    Let $D_+ = \set{x\in X: v(x) > \VV_{CT}(p)}$ and $D_- = \set{x\in X: v(x) < \VV_{CT}(p)}$, both are open by continuity of $v$. Since $\max V > \VV_{CT}(p)$, we have  $D_+ \neq \emptyset$. Take any open ball in $D_+$, there exist two points $x_1, x_2$ in this open ball such that $x_1, x_2$, and $T(p)$ are not colinear. Note that by strict quasiconvexity, no disclosure is suboptimal under cheap talk, so $T(p) \in D_-$. Moreover, there exists a unique $\lambda_i \in (0,1)$ such that $v(\lambda_i x_i + (1-\lambda_i) T(p)) = \VV_{CT}(p)$ for $i=1,2$ since $v$ is continuous and strictly quasiconvex. Here, existence follows from the intermediate value theorem, whereas strict quasiconvexity implies uniqueness. By strict quasiconvexity, $\tfrac{1}{2}(\lambda_1 x_1 + \lambda_2 x_2) + (1 - \tfrac{1}{2}(\lambda_1+\lambda_2)) T(p) \in D_-$. Since $D_-$ is open, there exists $\e > 0$ such that $\tfrac{1}{2}(\lambda_1(1+\e) x_1 + \lambda_2 (1+ \e) x_2) + (1 - \tfrac{1}{2}(\lambda_1+\lambda_2)(1 + \e)) T(p) \in D_-$.

    Let $x_i' = \lambda_i (1+\e) x_i + (1-\lambda_i (1+ \e)) T(p)$.
    Take any $\mu_i \in \D(\W)$ such that $T(\mu_i) = x_i'$ for $i = 1,2$. By construction, $T(\tfrac{1}{2}\mu_1 + \tfrac{1}{2}\mu_2) = \tfrac{1}{2}x_1' + \tfrac{1}{2}x_2' \in D_-$, so $V(\tfrac{1}{2}\mu_1 + \tfrac{1}{2}\mu_2) < \VV_{CT}(p)$. Let 
    \[
    \tilde{\mu}_i = \frac{1+\e/2}{1 + \e} \mu_i + \frac{\e/2}{1+\e} p
    \]
    for each $i = 1,2$. By linearity, $T(\tilde{\mu}_i) = \lambda_i (1+\e/2) x_i + (1-\lambda_i (1+ \e/2)) T(p) \in D_+$. Therefore, $V(\tilde{\mu}_i) > \VV_{CT}(p)$ for $i=1,2$ and hence $\oV_{CT}(\tfrac{1}{2}\tilde{\m}_1 + \tfrac{1}{2}\tilde{\m}_2) > \VV_{CT}(p)$ by Lemma \ref{Lem: Envelope and cvx hull}. It follows that cheap talk is directionally improvable at $p$, so $\VV_{MD}(p) > \VV_{CT}(p)$ by Theorem \ref{Thm: MD vs CT} and the full-dimensionality condition.
\end{proof}

While Theorem \ref{Thm: multidim quasiconvex} dramatically simplifies the comparison among communication protocols in the present setting, it still relies on the full-dimensionality condition. The minimally edge non-monotonicity condition we introduced in Definition \ref{def: minimally edge non-mon} ensures the existence of a non-trivial set of priors that satisfy full dimensionality when $v$ is strictly quasiconvex. Recall that $X_T = \left\{T(\delta_{\w}) \in \mathbb{R}^k: \w \in \W\right\}$.

\begin{proposition}
 \label{Pro: Suff cond for full-dim multidim quasi-cvx}
    Assume that $\BV(\mu) = \{v(T(\mu))\}$ for some $k$-dimensional moment $T$ ($k\geq 2$) and that $v(x)$ is continuous, strictly quasiconvex, and minimally edge non-monotone given $T$. Then there exists an $(n-1)$--simplex $\tilde{\D} \subseteq \D(\W)$ such that the full-dimensionality condition holds for all $p \in \interior\tilde{\D}$. For every such $p$, point (2) of Theorem \ref{Thm: multidim quasiconvex} holds if and only if $\min_{x\in X_T}v(x) < \max v$.
\end{proposition}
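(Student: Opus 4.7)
The plan is to build an $(n-1)$-simplex $\tilde{\D}$ around the vertex $\delta_{\underline{\w}}$ (where $T(\delta_{\underline{\w}})=\underline{x}$) on which $\oV_{CT}$ is identically equal to $v(\underline{x})$; full dimensionality will then follow from the ``locally constant'' criterion in Lemma~\ref{Lem: full dim cond}.

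Strict quasiconvexity combined with minimal edge non-monotonicity implies that, for each $\w\neq\underline{\w}$, the one-dimensional restriction $\hat v_{T(\delta_\w)}$ is strictly single-dipped with boundary values $\hat v(0)=v(\underline{x})\le \hat v(1)=v(T(\delta_\w))$ and interior minimum strictly below $v(\underline{x})$, so the equation $\hat v_{T(\delta_\w)}(\beta)=v(\underline{x})$ has a unique nonzero solution $\beta_\w\in(0,1]$. Set $\mu_\w^{\ast\ast}:=\beta_\w\delta_\w+(1-\beta_\w)\delta_{\underline{\w}}$, so that $V(\mu_\w^{\ast\ast})=v(\underline{x})$. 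The $n$ points $\delta_{\underline{\w}}$ and $\{\mu_\w^{\ast\ast}\}_{\w\ne\underline{\w}}$ lie on distinct edges emanating from $\delta_{\underline{\w}}$ and are therefore affinely independent; let $\tilde{\D}$ denote the relative interior of their convex hull.

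The central step is to show $\oV_{CT}\equiv v(\underline{x})$ on $\tilde{\D}$. The lower bound $\oV_{CT}\ge v(\underline{x})$ holds everywhere on $\D(\W)$: since $V(\delta_\w)\ge v(\underline{x})$ for every $\w$, we get $\co\{V>v(\underline{x})-\e\}=\D(\W)$ for all $\e>0$, and Lemma~\ref{Lem: Envelope and cvx hull} then yields $\oV_{CT}(p)>v(\underline{x})-\e$ at every $p$. For the reverse inequality, let $L=\{p:\langle\nu,p\rangle=c\}$ be the affine hyperplane spanned by $\{\mu_\w^{\ast\ast}\}_{\w\ne\underline{\w}}$, normalized so that $a:=\langle\nu,\delta_{\underline{\w}}\rangle>c$. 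Solving $(1-\beta_\w)a+\beta_\w\langle\nu,\delta_\w\rangle=c$ yields $\langle\nu,\delta_\w\rangle-c=(1-\beta_\w)(c-a)/\beta_\w\le 0$, so $\tilde{\D}=\{\langle\nu,p\rangle\ge c\}\cap\D(\W)$. Since each vertex of the closed simplex has $V$-value $v(\underline{x})$, quasiconvexity of $v$ composed with the linearity of $T$ forces $\tilde{\D}\subseteq\{V\le v(\underline{x})\}$; hence $\{V>v(\underline{x})\}$ is contained in the convex set $\{\langle\nu,p\rangle<c\}\cap\D(\W)$, and taking convex hulls preserves this inclusion. Consequently $\tilde{\D}\cap\co\{V>v(\underline{x})\}=\emptyset$, and Lemma~\ref{Lem: Envelope and cvx hull} delivers $\oV_{CT}(p)\le v(\underline{x})$ on $\tilde{\D}$. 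Local constancy of $\oV_{CT}$ on $\tilde{\D}$ invokes Lemma~\ref{Lem: full dim cond} and yields the full-dimensionality condition throughout $\tilde{\D}$.

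For the iff statement I would appeal to Theorem~\ref{Thm: multidim quasiconvex}. On $\tilde{\D}$ we have $\VV_{CT}(p)=v(\underline{x})=\min_{T(\W)}v$, while strict quasiconvexity of $v$ on the simplex $X=\co T(\W)$ forces $\max V=\max_X v=\max_{T(\W)}v$ (the maximum of a strictly quasiconvex function on $X$ is achieved at a vertex of $T(\W)$). Case (1) of the theorem, which demands $\max V=\VV_{CT}(p)$, is therefore equivalent to $\min_{T(\W)}v=\max_X v$; this is precisely the situation in which $v$ is constant on $T(\W)$ and full disclosure already constitutes a cheap talk equilibrium attaining the global maximum. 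The remaining possibility, case (2), thus corresponds exactly to $\min_{T(\W)}v<\max_X v$. The main technical point is the separating-hyperplane step; some care is needed in the degenerate cases where some $\beta_\w=1$ (i.e.\ $v(T(\delta_\w))=v(\underline{x})$), so that $\mu_\w^{\ast\ast}=\delta_\w$ and a face of $\tilde{\D}$ coincides with a face of $\D(\W)$, but the sign computation above still produces the correct weak separation.
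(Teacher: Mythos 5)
Your proof is correct and takes essentially the same approach as the paper: pick $\underline{\w}$ minimizing $v$ on $T(\W)$, use strict quasiconvexity and minimal edge non-monotonicity to pin down unique nonzero roots $\beta_\w$ on each edge from $\delta_{\underline{\w}}$, build the simplex $\tilde{\D}$ from the vertex $\delta_{\underline{\w}}$ and the points $\mu^{\ast\ast}_\w$, use quasiconvexity of $V = v\circ T$ to show $V\le v(\underline{x})$ on $\tilde{\D}$, invoke Lemma~\ref{Lem: Envelope and cvx hull} for the upper bound on $\oV_{CT}$, and conclude via Theorem~\ref{Thm: multidim quasiconvex}. Your half-space computation (introducing the hyperplane $L$ through $\{\mu^{\ast\ast}_\w\}$ and checking the signs of $\langle\nu,\delta_\w\rangle - c$) is a useful piece of explicit detail: it justifies the paper's unstated assertion that $\D(\W)\setminus\tilde{\D}$ is convex and also handles the degenerate $\beta_\w=1$ faces gracefully. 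One small internal inconsistency: you first define $\tilde{\D}$ as the relative interior of the convex hull of the $n$ vertices, but then write $\tilde{\D}=\{\langle\nu,p\rangle\ge c\}\cap\D(\W)$, which is the \emph{closed} simplex; the latter is what the remainder of your argument (and the paper's statement, which asserts the result for an $(n-1)$-simplex) actually uses, so the ``relative interior'' in the initial definition should simply be dropped.
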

\begin{proof}
    Since $v$ is minimally edge non-monotone, there exists a state $\underline{\w} \in \argmin_{\w \in \W} V(\delta_{\w})$ such that for any $\w \in \W \setminus \set{\underline{\w}}$, $f_{\w}(\lambda) \coloneqq V(\lambda \delta_{\w} + (1-\lambda) \delta_{\underline{\w}})$ is not weakly increasing in $\lambda\in [0,1]$. 
    
    We show that $f_{\w}$ is strictly quasiconvex on $[0,1]$. Note that for any $\lambda \neq \lambda'\in [0,1]$
    \begin{align*}
        f_\w(\alpha \lambda + (1-\alpha) \lambda')
        &= v(\alpha T(\mu) + (1-\alpha)T(\mu'))\\
        &\leq \max \set{v(T(\mu)), v(T(\mu'))} = \max \set{f_{\w}(\lambda), f_{\w}(\lambda')},
    \end{align*}
    where $\mu = \lambda \delta_{\w} + (1-\lambda) \delta_{\underline{\w}}$, $\mu' = \lambda' \delta_{\w} + (1-\lambda') \delta_{\underline{\w}}$. The first equality is by definition and linearity of $T$, the inequality is by (strict) quasiconvexity of $v$, and the last equality is by definition. The inequality is strict if and only if $T(\mu) \neq T(\mu')$. Suppose $T(\mu) = T(\mu')$, then by linearity of $T$, $T(\delta_{\w}) = T(\delta_{\underline{\w}})$, which means $f_{\w}$ is a constant on $[0,1]$. This contradicts with the assumption that $f_{\w}$ is non-monotone, hence $T(\mu) \neq T(\mu')$ and $f_{\w}$ is strictly quasiconvex.

    As $f_\w$ is strictly quasiconvex and non-monotone, there must be a unique $\lambda_{\w} \in (0,1]$ such that $f_{\w} (\lambda_\w) = f_{\w}(0)$. Suppose $f_\w(\lambda) > f_\w(0)$ for all $\lambda > 0$, then there exists $\lambda_2 > \lambda_1 >0$ such that $f_{\w}(\lambda_1) > f_{\w}(\lambda_2) > f_{\w}(0)$ (otherwise $f_\w$ is weakly increasing). But $\lambda_1 \in (0,\lambda_2)$, so $f_{\w}(\lambda_1) > f_{\w}(\lambda_2) > f_{\w}(0)$ violates the strict quasiconvexity, yielding a contradiction. So there must be a $\hat{\lambda}_\w \in (0,1]$ such that $f_\w(\hat{\lambda}_{\w}) \leq f_{\w}(0)$. By continuity of $v$, there exists $\lambda_\w \in [\hat{\lambda}_\w, 1]$ such that $f_\w(\lambda_\w) = f_{\w}(0)$. The uniqueness is by strict quasiconvexity.

    The argument above holds for any $\w \in \W \setminus\set{\underline{\w}}$. Let $\mu_{\w} \coloneqq \lambda_\w \delta_{\w} + (1-\lambda_\w) \delta_{\underline{\w}}$, we have $V(\mu_{\w}) = V(\delta_{\underline{\w}})$ for any $\w \in \W \setminus\set{\underline{\w}}$. Set $\tilde{\D} \coloneqq \co \set{\delta_{\underline{\w}}, \set{\mu_{\w}: \w \in \W \setminus\set{\underline{\w}}}}$. This is an ($n-1$)--simplex as $\set{\delta_{\underline{\w}}, \set{\mu_{\w}: \w \in \W \setminus\set{\underline{\w}}}}$ is affinely independent with $n$ points. Moreover, for any $p \in \interior\tilde{\D}$, there is a cheap talk distribution over posterior beliefs that supports on $\set{\delta_{\underline{\w}}, \set{\mu_{\w}: \w \in \W \setminus\set{\underline{\w}}}}$ that attains $V(\delta_{\underline{\w}})$, so $\VV_{CT}(p) \geq V(\delta_{\underline{\w}})$. Since $v(\cdot)$ is strictly quasiconvex, the composition $V = v\circ T$ is quasiconvex, hence $V(\mu) \leq V(\delta_{\underline{\w}})$ for any $\mu \in \tilde{\D}$. This shows that $\set{V >  V(\delta_{\underline{\w}})}$ is contained in $\D(\W) \setminus \tilde{\D}$, which is convex by construction. By Lemma \ref{Lem: Envelope and cvx hull}, $\VV_{CT}(p) \leq V(\delta_{\underline{\w}})$ for any $p\in \tilde{\D}$. Therefore, the full-dimensionality condition holds for all priors $p \in \interior\tilde{\D}$ as $\VV_{CT}$ is locally constant.
    Moreover, if $ V(\delta_{\underline{\w}}) < \max_{\mu\in\D(\W)} V(\mu)$, then for any $p\in \interior\tilde{\D}$, $\VV_{CT}(p) < \max V$. As the full-dimensionality condition holds, Theorem \ref{Thm: multidim quasiconvex} yields that  $\max V > \VV_{BP}(p) > \VV_{MD}(p) > \VV_{CT}(p) > V(p)$.
\end{proof}

\begin{proof}[Proof of Claim \ref{claim: existence of full-dim region}]
    Since $G$ is strictly increasing and $R$ is strictly quasiconvex and minimally edge non-monotone given $T$, the composition $v = G \circ R$ is strictly quasiconvex and minimally edge non-monotone given $T$. Since the trivial case was excluded ($\min_{x\in X_T} R(x) < \max_{x\in X} R(x)$), the claim then follows from Proposition \ref{Pro: Suff cond for full-dim multidim quasi-cvx}.
\end{proof}

\section{Correlated equilibria in long cheap talk and repeated games}
\label{app:repeated}
In this appendix, we discuss the implications of our results for the comparison of correlated and Nash equilibria in long cheap talk and repeated games with asymmetric information where the sender's payoff is state independent.

Fix a finite set of states $\W$, a finite action set $A$, and utility functions $u_R(\w,a)$ and $u_S(a)$ for the receiver and the sender respectively. Following the notation in \cite{forges2020games}, let $DP_0(p)$ denote the basic decision problem described by the previous primitive objects.

The long cheap talk game is an extension of the basic decision problem $DP_0(p)$ by allowing the sender and receiver to exchange messages simultaneously for several rounds before the receiver takes an action. Formally, let two finite sets $M_S$ and $M_R$ be the sender's and receiver's message spaces, respectively. Following \cite{LR20}'s notation, we let $H_{<\infty} \coloneqq \bigsqcup_{t=0}^{\infty} (M_S \times M_R)^t$ and $H_{\infty} \coloneqq (M_S \times M_R)^\mathbb{N}$.  The sender observes the realized state $\w \in \W$ at $t=0$. Then at each time $t=1,2,\ldots$, the sender sends message $m_t \in M_S$ and the receiver sends $\tilde{m}_t \in M_R$ simultaneously. Finally, after seeing the sequence of messages $h_{\infty} \in H_{\infty}$, the receiver chooses an action $a \in A$. A strategy for the sender is a measurable function $\sigma: \W\times H_{<\infty}\to \D M_S$ and a strategy for the receiver is a pair of measurable functions $\tilde{\sigma}: H_{<\infty}\to \D M_R$ and $\rho: H_{\infty}\to \D A$. We denote the long cheap talk game as $CT_{\infty}(p)$.

Under transparent motives, Proposition 4 of \cite{LR20} shows that every sender payoff attainable in a Nash equilibrium of $CT_{\infty}(p)$ is also attainable in a perfect Bayesian equilibrium of the one-shot cheap-talk game. Therefore, the highest sender's expected payoff that is induced by a Nash equilibrium of $CT_{\infty}(p)$ coincides with the one-shot highest cheap talk value $\VV_{CT}(p)$. A correlated equilibrium of $CT_{\infty}(p)$ is a Nash equilibrium of an extension of $CT_{\infty}(p)$ where the players privately receive correlated signals before the beginning of the game. \cite{forges1985correlated} shows that the set of correlated equilibrium payoffs of the long cheap talk game $\mathcal{C}(CT_{\infty}(p))$ is the same as the set of all communication equilibrium payoffs of the basic decision problem $\mathcal{M}(DP_0(p))$. Therefore, the highest sender's expected payoff induced by a correlated equilibrium of $CT_{\infty}(p)$ coincides with the sender's payoff induced by the sender's preferred communication equilibrium $\VV_{MD}(p)$.

A different class of games we consider is a simplified version of the infinitely repeated sender-receiver game introduced in \cite{hart1985nonzero}. There are two action sets $A_S, A_R$ for the sender and receiver, respectively. The sender observes the realized state $\w \in \W$ at $t=0$. Then at each time $t=1,2,\ldots$, the sender chooses action $a_t \in A_S$ and the receiver chooses $\tilde{a}_t \in A_R$ simultaneously. The action of the receiver is the only one that is payoff-relevant, and the sender's payoff does not depend on the state. That is, the sender's payoff at time $t$ is $u_S(\tilde{a}_t)$ and the receiver's payoff at time $t$ is $u_R(\w, \tilde{a}_t)$. The actions are observed every period, and players have perfect recall. The players' overall payoffs are defined as the liminf of the expected time average of the one-period payoffs. That is, $U_S \coloneqq \liminf_{T\to\infty} \mathbb{E}[\tfrac{1}{T}\sum_{t=1}^T u_S(\tilde{a}_t)]$ and $U_R \coloneqq \liminf_{T\to\infty} \mathbb{E}[\tfrac{1}{T}\sum_{t=1}^T u_R(\w, \tilde{a}_t)]$. This is the transparent-motive case of the repeated games of \emph{pure information transmission} as defined in \cite{forges2020games}, and we denote it as $\Gamma_\infty(p)$.

The correlated equilibria of $\Gamma_\infty(p)$ are defined similarly, and \cite{forges1985correlated} shows that the set of correlated equilibrium payoffs of this game $\mathcal{C}(\Gamma_{\infty}(p))$ coincides with the set of communication equilibrium payoffs of the basic decision problem $\mathcal{M}(DP_0(p))$. Therefore, the highest sender's expected payoff induced by a correlated equilibrium of $\Gamma_{\infty}(p)$ is the same as  the sender's payoff in a sender's preferred communication equilibrium $\VV_{MD}(p)$. Moreover, Proposition 1 of \cite{habu2024knowing} implies that the set of sender’s Nash-equilibrium payoff of $\Gamma_{\infty}(p)$ coincides with the set of sender’s payoff in a one-stage cheap talk equilibrium.

\section{Additional analysis} \label{App: Examples}
\subsection{Beyond transparent motives and multiple receivers} \label{subApp: trilemma}
Theorem \ref{Thm: BP = MD iff BP = CT} implies that the following three properties cannot hold at the same time: (1) The mediator publicly communicates with multiple receivers (i.e., there exists a unique common posterior); (2) The payoff of the sender is state-independent; (3) Mediation is optimal under persuasion and strictly better than cheap talk. This subsection gives two examples in which property (3) holds once one of the first two assumptions is relaxed. To present these examples, we first slightly extend our setting to deal with the sender's state-dependent preferences and multiple receivers with private posterior beliefs.

\paragraph{Beyond transparent motives}
The main analysis focused on the case of the state-independent sender's payoff function. Without this assumption, it is still possible to express the Honesty constraint purely in terms of the unconditional distribution of beliefs. 

Suppose that the sender's indirect payoff at state $\w$ and the receiver's posterior $\m$ is uniquely given by $V(\m,\w)$. It is easy to show (see for example \cite{doval2018constrained}) that the truth-telling constraint can be written as
\begin{equation}
\label{eq:TT_stat_dep}
    \int_{\D(\W)} V(\m,\w)\left(\frac{\m(\w)}{p(\w)}-\frac{\m(\w')}{p(\w')}\right)\de \t(\m)\ge 0 \qquad \forall \w,\w' \in \W.
\end{equation}
These are $n(n-1)$ moment constraints, hence there exists an optimal mediation plan with no more than $n^2$ signals.
Equivalently, using Bayes plausibility,
\[
\Cov_{\tau}\!\left(
V(\mu,\omega),
\frac{\mu(\omega)}{p(\omega)}
-
\frac{\mu(\omega')}{p(\omega')}
\right)
\geq 0
\qquad \forall \omega,\omega'\in\Omega .
\]
Thus, state-dependent preferences replace the zero-covariance condition in the
main text by a system of statewise moment inequalities. For each true
state \(\omega\), the covariance between the state-\(\omega\) value \(V(\cdot,\w)\) and the
posterior likelihood ratio of the truthful report must be weakly larger than the
corresponding covariance under any deviation report \(\omega'\).

We now provide an example that Theorem \ref{Thm: BP = MD iff BP = CT} may fail with a state-dependent sender's payoff. Consider a binary state space $\W = \set{0,1}$ and the prior on $\w = 1$ is $p = 1/2$. The sender's payoff is state-dependent and singleton-valued $V(\mu, \w) = G(\mu) - \tfrac{\w}{\mu}$, where
\begin{align*}
    G(\mu) = 
    \begin{cases}
        4\mu & \text{if } \mu \in [0, 1/4)\\
        -2\mu + 3/2 & \text{if } \mu \in [1/4, 1/2)\\
        2\mu - 1/2 & \text{if } \mu \in [1/2, 3/4)\\
        -4\mu + 4  & \text{if } \mu \in [3/4, 1]
    \end{cases}
\end{align*}

We show that $\tilde{\t} = \tfrac{1}{2}\delta_{1/4} + \tfrac{1}{2} \delta_{3/4}$ is feasible under mediation, that it solves Bayesian persuasion at  $p$, and that cheap talk is strictly worse than mediation.

The distribution $\tilde{\t}$ is optimal if it solves 
\[
\max_{\t \in \TT_{BP}(p)} p \int_0^1 V(\mu, 1) \de \t^1(\mu) + (1-p) \int_0^1 V(\mu, 0) \de \t^0(\mu).
\]
Bayes-plausibility implies that the objective function becomes $\int_0^1 G(\mu) \de \t - 1$, hence $\tilde{\t}$ is the unique solution of this maximization problem because it is supported on the global maximum of $G$.

Note that $\int_0^1 \tfrac{1}{\mu} \de \tilde{\t}^0(\mu) = 10/3 > 2 = \int_0^1 \tfrac{1}{\mu} \de \tilde{\t}^1(\mu)$ and $\int_0^1 G(\mu) \de \tilde{\t}^0(\mu) = \int_0^1 G(\mu) \de \tilde{\t}^1(\mu) = 1$. The truth-telling constraints for mediation $\int V(\mu, 0) \de \tilde{\t}^0(\mu) \geq \int V(\mu, 0) \de \tilde{\t}^1(\mu)$ and $\int V(\mu, 1) \de \tilde{\t}^1(\mu) \geq \int V(\mu, 1) \de \tilde{\t}^0(\mu)$ are satisfied. So $\tilde{\t}$ is implementable under mediation. However, cheap talk with state-dependent utility requires $V(\mu,\w) = V(\mu',\w)$ for all $\w\in \set{0,1}$ and $\mu,\mu' \in \supp(\tilde{\t}^{\w})$. So $\tilde{\t}$ is not feasible under cheap talk because $V(1/4, 1) = -3 \neq -1/3 = V(3/4, 1)$. As $\tilde{\t}$ is the unique solution of persuasion and $\tilde{\t}$ is not feasible under cheap talk, cheap talk attains a strictly lower value than mediation.

\paragraph{Multiple receivers and private communication.}
We next show that the conclusion can also fail under transparent motives when
private communication generates different posterior beliefs for different
receivers. We focus on a simple separable case: there are two states $\W=\{0,1\}$ and two receivers, each solves an isolated
decision problem, and the sender's payoff is additively separable in the two
receivers' posterior beliefs. Let \(\mu_i\in[0,1]\) denote receiver \(i\)'s
posterior on state \(1\), and let \(\tau_i\) denote the marginal distribution of
\(\mu_i\). Both marginals are Bayes-plausible, so $\mathbb{E}_{\tau_i}[\mu_i] = p = 1/2$ for $i=1,2$.

Suppose the sender's indirect payoff is $V(\mu_1,\mu_2)=V_1(\mu_1)+V_2(\mu_2)$.
The same Bayes-rule argument as in Theorem \ref{Thm: Implementability} gives the
aggregate truth-telling condition
\begin{equation}
\label{eq:tt_mult_rec_general}
    \int_0^1 V_1(\mu_1)(\mu_1-p)\,d\tau_1(\mu_1)
    +
    \int_0^1 V_2(\mu_2)(\mu_2-p)\,d\tau_2(\mu_2)
    =0 .
\end{equation}
Unlike the public-posterior case, the two covariance terms need not vanish
separately. Private communication allows the mediator to use the posterior of
one receiver to offset the sender's incentive generated by the posterior of the
other receiver.

Now consider an example where $V_1$ is strictly increasing and strictly convex with $V_1(1)-V_1(0)=1$, and $V_2(\mu_2) = -\rho \mu_2$ with $\rho > 1$. Under Bayesian persuasion, the sender optimally fully discloses the
state to receiver \(1\), while any Bayes-plausible information policy for
receiver \(2\) is optimal because the second term is linear. Hence the
persuasion value can be attained by taking $\tau_1^*=\frac12\delta_0+\frac12\delta_1$
and any Bayes-plausible \(\tau_2\).

For mediation, condition \eqref{eq:tt_mult_rec_general} becomes
\begin{equation}
\label{eq:tt_mult_rec}
    \int_0^1 V_1(\mu_1)(\mu_1-\tfrac12)\,d\tau_1(\mu_1)
    -
    \rho\int_0^1\mu_2(\mu_2-\tfrac12)\,d\tau_2(\mu_2)=0 .
\end{equation}
Given \(\tau_1^*\), the first term equals \(1/4\). Moreover, $\int_0^1\mu_2(\mu_2-\tfrac12)\,d\tau_2(\mu_2)$
ranges from \(0\), under no disclosure, to \(1/4\), under full disclosure. Since
the set of Bayes-plausible distributions is convex, there exists a
Bayes-plausible \(\tau_2^*\) for which this term equals \(1/(4\rho)\). Thus
\((\tau_1^*,\tau_2^*)\) satisfies \eqref{eq:tt_mult_rec} and attains the
Bayesian persuasion value.

This outcome cannot be sustained by private cheap talk. If receiver \(1\) is
fully informed, there must be messages inducing posterior \(0\) and posterior
\(1\) for receiver \(1\). Fix any message of receiver \(2\) inducing posterior
\(\mu_2\). After a message inducing posterior \(0\) for receiver \(1\), the
sender can deviate only in the message sent to receiver \(1\), replacing it by a
message inducing posterior \(1\), while keeping receiver \(2\)'s message fixed.
This deviation changes the sender's payoff from \(V_1(0)-\rho\mu_2\) to
\(V_1(1)-\rho\mu_2\), which is strictly higher because \(V_1\) is strictly
increasing. Hence full disclosure to receiver \(1\) is not implementable under
private cheap talk, while it is implementable under mediation. Therefore
mediation attains the persuasion value and strictly improves on cheap talk.

\subsection{Receiver's utility and informativeness} \label{subApp: informativeness}

Mediation need not have a
monotone implication for receiver welfare or informativeness. In some
applications, such as the acceptance-game application in Section
\ref{ssec:accepta_pareto}, the sender's gain from mediation is also an ex-ante
Pareto improvement. In general, however, receiver welfare is governed by a
different payoff function from the one appearing in the sender's truth-telling
constraint. Informativeness is likewise ambiguous: optimal mediation may be more
informative than cheap talk, as in the illustrative example when $p\in(0,1/3)$, but the reverse can
also occur.
For instance, consider a binary state space $\W = \set{0,1}$, $A = \{-1,0,1\}$, and let $\mu \in [0,1]$ denote the posterior belief on state $1$. The sender's payoff equals $a$. The receiver's payoff is such that $a = 0$ is optimal when $\mu \in [0,1/4] \cup [3/4,1]$, $a=1$ is optimal when $\mu \in [1/4,1/2]$, and $a=-1$ is optimal when $\mu \in [1/2,3/4]$. For priors \(p\in(0,1/4)\), full disclosure is optimal under cheap talk, while Theorem \ref{Thm: MD vs CT} implies that mediation is valuable, so full disclosure is suboptimal under optimal mediation.

\subsection{Nonattainment in the dual problem}
\label{app:fail_dual}

Suppose $\W = \{0,1\}$ and let $\mu \in [0,1]$ denote the posterior on state $1$. Assume that $\BV=V$ is singleton-valued. The dual problem of \eqref{Eq: MD problem} is:
\begin{align*}
    &\inf_{f_0,f_1,g\in\R}  f_1 p+f_0\\
    \textnormal{subject to:}&\; f_1\mu+f_0 \geq (1+g(\mu-\tfrac{1}{2}))V(\mu) \qquad \forall \mu\in[0,1].
\end{align*}

Suppose $V(\mu) = 4\mu(\mu-1/2)+1/4$ and $p=1/2$, the corresponding dual problem of mediation does not have a solution. Let $V^g(\mu)= (1+g(\mu-\tfrac{1}{2}))V(\mu)$. Note that when $g<0$, the lowest line above $V^g$ is a tangent line of $V^g$ at $\mu^*=\tfrac{1}{2} - \tfrac{1}{2g}$ that passes through $(0,V^g(0))$. That is, $f_1 = \tfrac{g}{4} - \tfrac{1}{g}$ and $f_0 = \tfrac{1}{4}(1-\tfrac{g}{2}) = V^g(0)$. Then the value $f_1/2+f_0 = \tfrac{1}{4} - \tfrac{1}{2g} \downarrow \tfrac{1}{4}$ as $g\to -\infty$. Also observe that $g\geq 0$ is never an optimal solution of the dual, since $(V^g(0) + V^g(1))/2 = \tfrac{5}{4} + \tfrac{g}{2} \geq \tfrac{5}{4}$. Therefore, the infimum value of this dual problem cannot be attained.

\section{Infinite state space}
\label{Sec: infinite state}
In this appendix, we show our main theorems extend to the case when $\W$ is a compact metric space endowed with its Borel sigma algebra. The other parts of the model are the same as Section \ref{sec:model}. 

We first introduce some useful mathematical notions. Let $\mathbb{M}(\W)$ denote the space of finite signed countably additive measures over $\W$. Let $\D(\W)$ denote the space of Borel probability measures over $\W$ and endow it with the topology of weak convergence, and let $C(\W)$ denote the space of continuous real functions over $\W$. Recall that a function $\psi : \D(\W) \times \R \to \mathbb{M}(\W)$ is Gelfand integrable with respect to some $\eta \in \D(\D(\W) \times \R)$ if and only if, for every continuous function $h \in C(\W)$ the map $(\m,v) \mapsto \langle h, \psi(\mu,v) \rangle$ is Lebesgue integrable with respect to $\eta$. In this case, the Gelfand integral of $\psi$ with respect to $\eta$  is denoted by $\int_{\D(\W)\times \R}\psi(\m,v) \de \eta(\m,v)$ and corresponds to the unique element of $\mathbb{M}(\W)$ such that
\begin{equation*}
    \Big \langle h,  \int_{\D(\W)\times \R}\psi(\m,v) \de \eta(\m,v) \Big \rangle =\int_{\D(\W)\times \R} \langle h, \psi(\mu,v) \rangle \de \eta(\m,v)
\end{equation*}
for all $h \in C(\W)$, where the integral on the right-hand side is a Lebesgue integral. For more details on the Gelfand integral, see Section 11.9 of \cite{aliprantis06}.

We now go back to our model. Let $p \in \D(\W)$ denote the common prior over $\W$ and assume that  $\supp(p)=\W$. By Revelation Principle, we focus on the CE outcomes $\pi \in \D(\W\times A)$ that satisfy Consistency, Obedience, and Honesty. The first two properties are defined analogously to the main text. In the case of Honesty, recall that in the main text this property is equivalent to impose that the conditional expectation $\mathbb{E}_{\pi^\w}[u_S(a)]$ is equal to the same constant for all $\w \in \W$. We extend the notion of Honesty in the current infinite setting by requiring that there exists a constant $K \in \mathbb{R}$ such that for $p$-almost-all $\w \in \W$, $\mathbb{E}_{\pi^\w}[u_S(a)] = K$, where $\pi^{\w}$ is a version of the conditional probability of given $\w  \in \W$. We define the indirect value correspondence $\BV: \D(\W) \rightrightarrows \R$ as in Section \ref{sec:imple}, which is upper hemi-continuous, compact, convex, and non-empty
valued, and the upper (lower) envelopes are denoted as $\oV$ ($\uV$). 
As Definition \ref{def: induced by CE}, we say a distribution $\eta\in\D(\D(\W)\times \R)$ is induced by some CE outcome $\pi \in \D(\W\times A)$ if $\eta = (\phi^{\pi})_{\#} \pi$, where $\phi^{\pi}: \W\times A \to \D(\W) \times \R$, with $\phi^{\pi}_1(\w,a) = \pi^a$ is a version of the conditional probability over $\W$ given $a$, and $\phi^{\pi}_2(\w,a) = u_S(a)$.

Theorem \ref{Thm: Implementability} in the main text can be extended as follows:
\begin{thmp}{1*}
    If $\eta\in\D(\D(\W) \times \R)$ is induced by some CE outcome, then it satisfies 
    \begin{itemize}
        \item [(i)]  Consistency*:
            \[
            \int_{\D(\W)\times \R} \m \de \eta(\m, v)=p;
            \]
            \item [(ii)] Obedience*: $\eta(\Gr(\BV)) = 1$;
            \item [(iii)] Honesty*:
            \[
            \int_{\D(\W) \times \R} v(\mu - p) \de \eta(\m, v) = \mathbf{0},
            \]
    where $ \mathbf{0} \in \mathbb{M}(\W)$ denotes the measure that is identically zero.\footnote{The integrals in (i) and (iii) are Gelfand integrals. Since $\W$ is compact, any $h \in C(\W)$ is bounded. Moreover, since $u_S$ is bounded, there exists $\bar v\in \R$ such that $|v|\leq \bar v$ for every $(\mu,v)\in \Gr(\BV)$. It follows that $(\mu,v)\mapsto \langle h,\mu\rangle$ and $(\mu,v)\mapsto v\langle h,\mu-p\rangle$ are bounded functions on $\Gr(\BV)$, and hence Lebesgue integrable with respect to $\eta\in \D(\Gr(\BV))$.}
    \end{itemize}
    Conversely, if $\eta$ satisfies (i), (ii), and (iii), then there exists a CE outcome $\pi\in \D(\W\times A)$ such that $\mathbb{E}_{\eta}[v] = \mathbb{E}_{\pi}[u_S]$.    
\end{thmp}
\begin{proof}
    Suppose $\eta \in \D(\D(\W)\times \R)$ is induced by some communication equilibrium outcome $\pi\in \D(\W\times A)$. For every $h\in C(\W)$,
    \begin{align*}
        \int_{\D(\W)\times \R} \langle h, \mu \rangle \de \eta(\mu,v) =& \int_{\W\times A} \langle h, \phi_1^{\pi}(\w,a)\rangle \de\pi(\w, a) = \int_{\W\times A} \langle h, \pi^a \rangle \de \pi(\w, a)\\ 
        =& \int_{\W \times A} h(\w) \de \pi(\w, a) = \langle h, p \rangle.
    \end{align*}
    The first equality is by $\eta = (\phi^{\pi})_{\#} \pi$, the second equality is by definition, the third one is by the law of iterated expectations, and the last one is by Consistency of $\pi$. Hence, $\eta$ satisfies Consistency*.

    By Obedience of $\pi$ and $\eta = (\phi^{\pi})_{\#}\pi$, $\eta(\Gr(\BV)) = \pi((\phi^{\pi})^{-1}(\Gr(\BV))) = 1$, so Obedience* is satisfied.

    By Honesty of $\pi$ and the fact that $u_S$ does not depend on $\w$, we have $\mathbb{E}_{\pi^\w}[u_S] = \mathbb{E}_{\pi}[u_S]$ $p$-almost surely. For all $h \in C(\W)$,
    \begin{align*}
        &\int_{\W\times A} u_S(a)h(\w) \de \pi(\w, a) =  \int_{\W}h(\w) \mathbb{E}_{\pi^{\w}}[u_S] \de p(\w) = \langle h, p \rangle \mathbb{E}_{\pi}[u_S] = \langle h, p \rangle \int_{\D(\W)\times \R} v \de \eta(\mu,v),
    \end{align*}
    where the first equality is by iterated expectation and Consistency, the second one follows from Honesty, and the last one is by the fact that $\eta$ is induced by $\pi$. We also have
    \begin{align*}
        &\int_{\W\times A} u_S(a)h(\w) \de \pi(\w, a) =  \int_{\W\times A} u_S(a) \langle h, \pi^a\rangle \de \pi(\w, a) = \int_{\D(\W)\times \R} v \langle h, \mu\rangle \de \eta(\mu, v) 
    \end{align*}
    where the first equality is by iterated expectation and the second one is by $\eta = (\phi^{\pi})_{\#} \pi$. Therefore, 
    \[
    \int_{\D(\W)\times \R} v \langle h, \mu - p\rangle \de \eta(\mu,v) = 0
    \]
    for every $h \in C(\W)$, so Honesty* holds.
    
    Next, we show by construction that for any $\eta\in \D(\D(\W)\times \R)$ that satisfy Consistency*, Obedience* and Honesty*, there exists a communication equilibrium outcome $\pi$ with $\mathbb{E}_{\eta}[v] = \mathbb{E}_\pi[u_S]$. 
    By Obedience*, the conditional mean $\mathbb{E}_{\eta}[v\mid \mu]$ is a measurable selector of $\BV$. Hence, Lemma 2 of \cite{LR20} implies that there exists a measurable $\lambda:\D(\W)\to \D(A)$ such that for all $\mu\in \D(\W)$, $\lambda(\mu) \in \argmax_{\alpha\in \D(A)}\mathbb{E}_{\mu\times \alpha}[u_R(\w, a)]$ is a mixed best response for the receiver with posterior $\mu$, and  $\mathbb{E}_{\eta}[v\mid \mu] = \int_A u_S(a) \de \lambda(\mu)(a)$.

    Let $\t = \marg_{\D(\W)}\eta$. Define a probability kernel $\kappa: \D(\W) \to \W \times A$ by $\kappa(\mu,\cdot) = \mu \times \l(\mu)$, which is the product measure of $\mu \in \D(\W)$ and $\l(\m) \in \D(A)$. Since $\l$ is measurable, $\kappa$ is well-defined by Lemma 3.1 of \cite{kallenberg2021}. Let $\pi = \tau \circ \kappa$, we show that $\pi$ is a desired communication equilibrium outcome. By construction, for every bounded measurable $u: \W\times A \to \R$,
    \begin{align} \label{Eq: change of var infinite}
        \mathbb{E}_{\pi}[u] = \int_{\D(\W)} \int_{\W \times A} u(\w,a) \de \kappa(\mu, \w, a) \de \t(\mu) 
        =& \int_{\D(\W)} \mathbb{E}_{\mu \times \lambda(\mu)}[u(\w,a)]\de \t(\mu).
    \end{align}

    This implies that $\mathbb{E}_{\pi}[u_S] = \mathbb{E}_{\eta}[v]$.
    The same argument as in the finite-state case (in the proof of Theorem \ref{Thm: Implementability}) shows that $\pi$ satisfies Consistency and Obedience.
    It remains to verify Honesty. Let $\pi^\w$ be a version of the conditional distribution of $\pi$ given $\w$. For every Borel $W \subseteq \W$, 
    \[
    \int_{W} \mathbb{E}_{\pi^\w}[u_S] \de p(\w) = \int_{\D(\W)} \mu(W) \mathbb{E}_{\lambda(\mu)}[u_S(a)] \de \t(\mu) = \int_{\D(\W)\times \R}\mu(W) v \de \eta(\mu,v),
    \]
    where the first equality follows from \eqref{Eq: change of var infinite} and iterated expectation, the second follows from the definition of $\l$ and iterated expectation. By Honesty*, the Gelfand integral $\int_{\D(\W)\times \R} v(\mu-p) \de \eta(\m,v)$ is the zero measure. Hence, evaluating this signed measure at the Borel set $W$,
    \[
    \int_{\D(\W)\times \R}\mu(W) v \de \eta(\mu,v) = p(W)\int_{\D(\W)\times \R} v \de \eta(\mu,v) = p(W) \mathbb{E}_{\pi}[u_S].
    \]
    This then implies that 
    \[
    \int_{W} (\mathbb{E}_{\pi^{\w}}[u_S] - \mathbb{E}_{\pi}[u_S]) \de p(\w) = 0
    \]
    for every Borel $W \subseteq \W$. Hence, $\mathbb{E}_{\pi^{\w}}[u_S] = \mathbb{E}_{\pi}[u_S]$ $p$-almost surely, $\pi$ satisfies Honesty.
\end{proof}

The attainment part in Proposition \ref{Prop: value of MD} extends as follows: Fix any sequence of feasible $\eta_n$ that converges weakly to $\eta$, for every $h \in C(\W)$, $0 = \int \langle h, \mu - p \rangle \de \eta_n(\mu, v) \to \int \langle h, \mu - p \rangle \de \eta(\mu, v)$ and $0 = \int v \langle h, \mu - p \rangle \de \eta_n(\mu, v) \to \int v \langle h, \mu - p \rangle \de \eta(\mu, v)$. So the feasibility set of the auxiliary program is compact.

Next, we extend Theorem \ref{Thm: BP = MD iff BP = CT}, Proposition \ref{Prop: sufficient cond for BP > MD}, and the first statement of Theorem \ref{Thm: MD vs CT} to a continuum of states. Recall the cheap talk hull at $p$ is defined as
\[
H^*(p) = \{\mu \in \D(\W):  \exists \mu_0 \in \D(\W) \text{ such that }  \oV_{CT}(p) \in \BV_{CT}(\mu_0) \text{ and } p\in (
        \mu_0,\mu]\}.
\]

\begin{thmp}{2*} \label{Thm: BP vs MD infinite}
   $\VV_{BP}(p) = \VV_{MD}(p) $ if and only if $ \VV_{BP}(p) = \VV_{CT}(p)$. 
\end{thmp}
\begin{proof}
    It suffices to show the only if direction. If $\max \oV - \min \uV=0$, then the statement is obvious. Thus, we now assume that  $B := \max \oV - \min \uV>0$.
    If verifiability has no value, then by the generalized attainment result, there exists $\eta \in \TT_{MD}(p)$ with $\mathbb{E}_{\eta}[v] = \VV_{BP}(p)$. Let $\bar v = \mathbb{E}_{\eta}[v]$. Fix $\e \in (0, 1/B)$, define $\eta_\e \in \D(\D(\W)\times \R)$ by 
    \[
    \frac{\de \eta_\e}{\de \eta}(\mu,v) = 1 + \e(v - \bar v) > 0
    \]
    which is a well-defined probability distribution since the Radon-Nikodym derivative $1+\e(v-\bar v)$ is positive $\eta$-almost surely and integrates to one. By construction, $\eta_\e$ has the same support as $\eta$, so Obedience* still holds. For every $h\in C(\W)$,
    \begin{align*}
        \int_{\D(\W)\times \R} (v-\bar v) \langle h, \mu \rangle \de \eta(\mu,v) =&\; \int_{\D(\W)\times \R} v \langle h, \mu \rangle \de \eta(\mu,v) - \bar v \int_{\D(\W)\times \R} \langle h, \mu \rangle \de \eta(\mu,v) \\
    =& \;\bar v \langle h, p\rangle - \bar v \langle h, p\rangle = 0,
    \end{align*}
    where the second equality follows from Honesty* and Consistency* of $\eta$. Therefore, for every $h\in C(\W)$,
    \[
    \int_{\D(\W)\times \R}  \langle h, \mu \rangle \de \eta_\e(\mu,v) = \int_{\D(\W)\times \R}  \langle h, \mu \rangle \de \eta(\mu,v)+ \e\int_{\D(\W)\times \R} (v-\bar v) \langle h, \mu \rangle \de \eta(\mu,v) = \langle h, p \rangle,
    \]
    so $\eta_\e$ satisfies Consistency* and thereby $\eta_\e \in \TT_{BP}(p)$. Finally, the expected sender payoff under $\eta_\e$ is
\[
    \mathbb{E}_{\eta_\e}[v] =  \mathbb{E}_{\eta}[v] + \e \mathbb{E}_{\eta}[(v-\bar v)v] =  \mathbb{E}_{\eta}[v] + \e \Var_\eta[v].
    \]
    Therefore, if $\Var_\eta[v] > 0$, then $\mathbb{E}_{\eta_\e}[v]> \mathbb{E}_{\eta}[v]$, contradicting the optimality of $\eta$. Hence, $\Var_\eta[v] = 0$ and $\eta$ is feasible under cheap talk.
\end{proof}

Proposition \ref{Prop: sufficient cond for BP > MD} is extended by the same proof as in Appendix \ref{app:proofs}, except for the parts about full-dimensionality, which we discuss in the next subsection. 

The definition of hull-directional improvability of cheap talk in Definition \ref{def:improv} remains the same in the current setting.

\begin{thmp}{3*}  \label{Thm: MD vs CT infinite}
    If cheap talk is hull-directionally improvable at $p$, then $\VV_{MD}(p) > \VV_{CT}(p)$.
\end{thmp}
\begin{proof}
    By the same construction as in the proof of the first statement of Theorem \ref{Thm: MD vs CT}.
\end{proof}

\subsection{Full-dimensionality under infinite state space}
A candidate for the notion of full dimensionality for the current infinite-state setting is $H^*(p)=\D(\W)$, which corresponds to the full-dimensionality notion when $\W$ is finite. However, this condition implies that \emph{for all} $\m \in \D(\W)$ there exists $\alpha>1$ such that $\alpha p +(1-\alpha)\m$ is a probability measure, a quite demanding condition.\footnote{Indeed, this is impossible when $\W$ is uncountable. Suppose for all $\m \in \D(\W)$ there exists $\alpha>1$ such that $\alpha p +(1-\alpha)\m$ is a probability measure. For every measurable $W \subseteq \W$, $p(W) = 0$ implies $\mu(W) = 0$. It follows that every singleton $\{\w\}$ is an atom for $p$, so $\W$ cannot be uncountable.} In this section, we propose a weaker notion of full dimensionality for the infinite-state setting that still collapses to the full-dimensionality notion in the main text when $\W$ is finite and allows us to extend  Theorem \ref{Thm: multidim quasiconvex}.

For every full support $p \in \D(\W)$, let $\D_p^*(\W)$ denote the set of Borel probability measures $\m$ that admit a \emph{continuous} Radon-Nikodym density $\frac{\de \m}{\de p}(\w)$ with respect to $p$. Define the continuous cheap talk hull at $p$ as
\begin{equation*}
\tilde{H}^*(p) = \{\m \in \D_p^*(\W): \exists \mu_0 \in \D(\W) \text{ such that }  \oV_{CT}(p) \in \BV_{CT}(\mu_0) \text{ and } p\in (\mu_0,\mu]\}.
\end{equation*}
Clearly, $\tilde{H}^*(p)\subseteq H^*(p)$ and the two sets coincide when $\W$ is finite. Moreover, we remark the following property of $\D_p^*(\W)$ which motivates our definition of $\tilde{H}^*(p)$.
\begin{lemma}
    Fix any $p \in \D(\W)$ with full support. Then for all $\m \in \D_p^*(\W)$, there exists $\alpha>1$ such that $\alpha p +(1-\alpha)\m \in \D(\W)$.
\end{lemma}
\begin{proof}
    Fix any $\m \in \D_p^*(\W)$ distinct from $p$ and define $M=\max_{\w \in \W}\frac{\de \m}{\de p}(\w)$. Moreover, fix any $\alpha \in \left(1,\frac{M}{M-1}\right)$. Next, observe that $\alpha p +(1-\alpha)\m \in \mathbb{M}(\W)$ and that $(\alpha p +(1-\alpha)\m)(\W)=1$. Therefore, we are left to show that $\alpha p +(1-\alpha)\m$ is positive. It is sufficient to show that $\int_{\W}h\de(\alpha p +(1-\alpha)\m) \ge 0$ for every weakly positive continuous function $h\in C(\W)$. Fix one such $h$ and observe that 
    \begin{align*}
        \int_{\W}h(\w)\de(\alpha p +(1-\alpha)\m)(\w) &=\int_{\W}h(\w)\left(\alpha+(1-\alpha)\frac{\de \m}{\de p}(\w)\right)\de p(\w) \\
        &\ge \int_{\W}h(\w)\left(\alpha+(1-\alpha)M\right)\de p(\w) \ge 0,
    \end{align*}
    where the first inequality follows from the definition of $M$ and the fact that $\alpha>1$, while the second inequality follows from the fact that $h(\w) \ge0$ for all $\w \in \W$ and $\alpha+(1-\alpha)M >0$ by construction. Given that $h$ was arbitrarily chosen, the desired result follows.
\end{proof}
We now extend the full-dimensionality condition to the current setting.

\begin{defp}{2*}
    The full-dimensionality condition holds at $p$ if $\tilde{H}^*(p)=\D_p^*(\W)$.
\end{defp}

To extend Theorem \ref{Thm: multidim quasiconvex}, we first extend our definition of moments. Let $T$ be a continuous linear map from $\D(\W)$ to a locally convex space. We say $T$ is $k$-dimensional if $X = T(\D(\W))$ has  dimension $k$.

\begin{thmp}{5*}
\label{th:quasi_star}
 Assume that $\BV(\mu) = \{v(T(\mu))\}$ for some $k$-dimensional moment $T$ ($k\geq 2$) and continuous and strictly quasiconvex $v: X \to \R$. If the full-dimensionality condition holds at $p$, then exactly one of these cases holds:
    \begin{description}
        \item[(1)] $\max V = \VV_{BP}(p) = \VV_{MD}(p) = \VV_{CT}(p) > v(T(p))$;
        \item[(2)] $\max V > \VV_{BP}(p) > \VV_{MD}(p) >
        \VV_{CT}(p) > v(T(p))$.
    \end{description}
\end{thmp}
\begin{proof}
    As $T$ is multidimensional and $v$ is strictly quasiconvex, Corollary 6 of \cite{LR20} implies no disclosure is suboptimal under cheap talk, that is $\VV_{CT}(p) > V(p)$. Because $T$ is linear and continuous and $v$ is continuous, $V(\m)\coloneqq v(T(\m))$ is also continuous. By Theorem \ref{Thm: BP vs MD infinite} and the fact that $\tilde{H}^*(p) = \D_p^*(\W)$, $\VV_{BP}(p) = \VV_{MD}(p)$ if and only if  $\{V > \VV_{CT}(p) \} = \emptyset$. To see this, if there exists $\mu\in \D(\W)$ such that  $V(\mu) > \VV_{CT}(p)$, by Theorem 3.1 of \cite{cerreia2019characterization}, there exists a sequence $\{\mu_m\}_{m \in \mathbb{N}} \subseteq \D_p^*(\W)$ such that $\mu_m \to \mu$. By continuity of $V$, there exists $\mu_k \in \tilde{H}^*(p) \subseteq H^*(p)$ in this sequence such that $\oV_{CT}(\mu_k) \geq V(\mu_k) > \oV_{CT}(p)$. It follows that $\VV_{BP}(p) > \VV_{CT}(p)$ by Theorem \ref{Thm: BP vs MD infinite}. Conversely, if $\max V = \VV_{CT}(p)$, then $\VV_{BP}(p) = \VV_{CT}(p) = \max V$ trivially holds. 
    
    This leads to the dichotomy in the theorem statement: If $\max V = \VV_{CT}(p)$, then (1) holds trivially. It suffices to show  $\max V > \VV_{CT}(p)$ implies (2). Note that if $\VV_{BP}(p) = \max V$, it must be the case that $V(\mu) = \max V$ for all $\mu$ in the support of any optimal distribution over posteriors under Bayesian persuasion, which implies $\VV_{BP}(p) = \VV_{CT}(p)$, yielding a contradiction. Hence, what remains to show is that $\VV_{MD}(p) > \VV_{CT}(p)$.

    To show that $\VV_{MD}(p) > \VV_{CT}(p)$, we follow the same construction as in the proof of Theorem \ref{Thm: multidim quasiconvex}. In particular, there exists $\m_1, \m_2 \in \D(\W)$ such that $T(\mu_1),T(\mu_2) \in D_+$ and $T(\tfrac{1}{2}\m_1+ \tfrac{1}{2}\m_2) \in D_-$, where $D_+ = \{x \in X: v(x) > \oV_{CT}(p)\}$ and $D_- = \{x \in X: v(x) < \oV_{CT}(p)\}$. By continuity of $v$, $D_+$ and $D_-$ are open, and so is $\co D_+$. Therefore, there exists $\e > 0$ such that for any $\mu_1' \in B_{\e}(\m_1)$, an $\e$-ball around $\m_1$ and any $\mu_2' \in B_{\e}(\m_2)$, we have $T(\tfrac{1}{2}\mu_1' + \tfrac{1}{2}\mu_2') \in D_-$. By Theorem 3.1 of \cite{cerreia2019characterization} and the full-dimensionality condition, there exists $\hat\mu_i \in \tilde{H}^*(p) \cap B_{\e}(\m_i)$ for $i = 1,2$ and $\hat{\mu} \coloneqq \tfrac{1}{2}\hat\mu_1 + \tfrac{1}{2}\hat\mu_2$ has moment $T(\hat\mu) \in D_-$. As 
    $\tilde{H}^*(p) $ is convex, $\hat{\mu} \in \tilde{H}^*(p) $. Because $\co D_+$ is open, there exists $\lambda \in (0,1)$ such that $T(\lambda \hat{\mu} + (1-\l) p) \in \co D_+$. We may partially extend Lemma \ref{Lem: Envelope and cvx hull} to show that for every $\mu \in \D(\W)$, $T(\mu) \in \co D_+$ implies $\oV_{CT}(\mu) >\oV_{CT}(p)$ and $T(\mu) \in \co D_-$ implies $\uV_{CT}(\mu) < \oV_{CT}(p)$. Therefore, $\oV_{CT}(\lambda \hat{\mu} + (1-\l) p) > \oV_{CT}(p) > \uV_{CT}(\hat{\mu})$. As $\hat{\mu} \in \tilde{H}^*(p) $, it follows that 
    $\VV_{MD}(p) > \VV_{CT}(p)$ by Theorem \ref{Thm: MD vs CT infinite}.
\end{proof}

As for the finite state case, it would be important to establish when the full-dimensionality condition holds in this infinite-dimensional setting. For example, in the context of Theorem \ref{th:quasi_star}, full dimensionality holds at $p$ if and only if, for all $\m \in \D_p^*(\W)$, there exists $\alpha>1$ such that $\overline{V}_{CT}(\alpha p+(1-\alpha)\m)\ge \overline{V}_{CT}(p)$, which is the case when the sender's cheap talk value is constant within an $\varepsilon$-ball of $p$ with respect to the Kullback-Leibler divergence. We leave the analysis of more primitive conditions such as minimal edge non-monotonicity in the infinite-dimensional setting to future research.

\section{Signaling games with transparent motives}
\label{app:signaling-proofs}

This section proves the signaling analogues of Theorems
\ref{Thm: Implementability}, \ref{Thm: BP = MD iff BP = CT}, and
\ref{Thm: MD vs CT}. Let \(X\) be a compact metric space of payoff-relevant
signals. Payoffs are \(u_S(x,a)\) and \(u_R(\omega,x,a)\), with \(u_S\)
independent of \(\omega\). For each \((\mu,x)\in\D(\W)\times X\), define
\[
    \mathbf W(\mu,x)
    :=
    \co\left(
    u_S\left(x,\argmax_{a\in A}\mathbb E_\mu[u_R(\omega,x,a)]\right)
    \right),
\]
and let
\[
    \mathbf V^S(\mu):=\bigcup_{x\in X}\mathbf W(\mu,x),
    \qquad
    \overline V^S(\mu):=\max\mathbf V^S(\mu),
    \qquad
    \underline V^S(\mu):=\min\mathbf V^S(\mu).
\]
A mediated signaling outcome is represented by
\(\pi\in\D(\W\times X\times A)\). Let \(\pi^{x,a}\) be a version of the
conditional distribution of \(\omega\) given \((x,a)\). The outcome \(\pi\)
induces \(\gamma^\pi\in\D(\D(\W)\times X\times\R)\) by the pushforward $(\omega,x,a)\mapsto \left(\pi^{x,a},x,u_S(x,a)\right)$,
and we write \(\eta^\pi:=\marg_{\D(\W)\times\R}\gamma^\pi\).

\begin{theorem}[Signaling analogue of Theorem \ref{Thm: Implementability}]
\label{thm:signaling-implementability-app}
If \(\gamma\in\D(\D(\W)\times X\times\R)\) is induced by a mediated signaling
communication-equilibrium outcome at prior \(p\), then, for
\(\eta=\marg_{\D(\W)\times\R}\gamma\),
\[
    \mathbb E_\eta[\mu]=p,\qquad
    \gamma(\Gr(\mathbf W))=1,\qquad
    \Cov_\eta[v,\mu]=\mathbf 0.
\]
Conversely, if \(\gamma\in\D(\D(\W)\times X\times\R)\) satisfies these three
conditions, then there is a mediated signaling communication-equilibrium
outcome \(\pi\in\D(\W\times X\times A)\) such that $\mathbb E_\pi[u_S(x,a)]=\mathbb E_\gamma[v]$.
\end{theorem}

\begin{proof}
The proof is the proof of Theorem \ref{Thm: Implementability} with \(x\) added
as a payoff-relevant coordinate.

Suppose first that \(\gamma=\gamma^\pi\). For each \(\omega\in\W\),
\[
    \int \mu(\omega)\,d\eta(\mu,v)
    =
    \int \pi^{x,a}(\omega)\,d\pi(\tilde\omega,x,a)
    =
    \int \mathbf 1_{\{\tilde\omega=\omega\}}\,d\pi(\tilde\omega,x,a)
    =
    p(\omega),
\]
so the analogue of \eqref{Eq: Bayes-plausibility} holds. Obedience implies
that, for \(\pi\)-almost every \((x,a)\), the action \(a\) is optimal at
posterior \(\pi^{x,a}\) given signal \(x\); hence
\(u_S(x,a)\in\mathbf W(\pi^{x,a},x)\), so the analogue of
\eqref{Eq: Obedience} holds.

Finally, Honesty and transparent motives imply
\(\mathbb E_{\pi^\omega}[u_S(x,a)]=\mathbb E_\pi[u_S(x,a)]\) for every
\(\omega\). Since
\(d\pi^\omega/d\marg_{X\times A}\pi(x,a)=\pi^{x,a}(\omega)/p(\omega)\),
\[
\begin{aligned}
    \Cov_\eta[v,\mu(\omega)]
    &=
    \int v\mu(\omega)\,d\eta(\mu,v)
    -
    p(\omega)\int v\,d\eta(\mu,v) \\
    &=
    \int u_S(x,a)\pi^{x,a}(\omega)\,d\pi(\tilde\omega,x,a)
    -
    p(\omega)\int u_S(x,a)\,d\pi(\tilde\omega,x,a)\\
    &=
    p(\omega)\left(
    \mathbb E_{\pi^\omega}[u_S(x,a)]-\mathbb E_\pi[u_S(x,a)]
    \right)
    =
    0.
\end{aligned}
\]
Thus the analogue of \eqref{eq:zero_cov} holds.

Conversely, let \(\tau=\marg_{\D(\W)\times X}\gamma\). Since
\(\gamma(\Gr(\mathbf W))=1\), the conditional mean
\(\bar v(\mu,x):=\mathbb E_\gamma[v\mid \mu,x]\) belongs to
\(\mathbf W(\mu,x)\) for \(\tau\)-almost every \((\mu,x)\). By the same
measurable-selection argument used in the proof of Theorem
\ref{Thm: Implementability}, there is a measurable kernel
\(\lambda:\D(\W)\times X\to\D(A)\) such that \(\lambda(\mu,x)\) is supported on
receiver best replies to \((\mu,x)\) and
\[
    \int_A u_S(x,a)\,d\lambda(\mu,x)(a)=\bar v(\mu,x).
\]
Define \(\pi\in\D(\W\times X\times A)\) by
\[
    \int f(\omega,x,a)\,d\pi
    =
    \int_{\D(\W)\times X}
    \int_{\W\times A} f(\omega,x,a)\,d\mu(\omega)d\lambda(\mu,x)(a)
    \,d\tau(\mu,x)
\]
for every bounded measurable \(f\). Then
\(\mathbb E_\pi[u_S(x,a)]=\mathbb E_\gamma[v]\), and
\(\marg_\W\pi=p\) follows from \(\mathbb E_\eta[\mu]=p\). Obedience follows
from the fact that \(\lambda(\mu,x)\) is supported on best replies. Finally,
for each \(\omega\),
\[
\begin{aligned}
    \mathbb E_{\pi^\omega}[u_S(x,a)]
    =
    \frac{1}{p(\omega)}
    \int \mu(\omega)\bar v(\mu,x)\,d\tau(\mu,x)
    =
    \frac{1}{p(\omega)}
    \int \mu(\omega)v\,d\gamma(\mu,x,v)
    =
    \mathbb E_\gamma[v],
\end{aligned}
\]
where the last equality uses \(\mathbb E_\eta[\mu]=p\) and
\(\Cov_\eta[v,\mu]=\mathbf 0\). Hence Honesty holds.
\end{proof}

Let \(\mathcal T_{BP}^S(p)\), \(\mathcal T_{MD}^S(p)\), and
\(\mathcal T_{CT}^S(p)\) denote the feasible distributions over
\((\mu,v)\) under, respectively, verifiable signaling, mediated signaling, and
direct signaling with cheap talk. Thus \(\mathcal T_{BP}^S(p)\) imposes
Bayes plausibility and \(v\in\mathbf V^S(\mu)\), while
\(\mathcal T_{MD}^S(p)\) additionally requires a lift satisfying
\(\Cov[v,\mu]=0\), as in Theorem
\ref{thm:signaling-implementability-app}. Direct signaling imposes payoff
flatness across on-path posterior-signal-message realizations. By the
belief-based characterization of \cite{koessler2024belief}, under transparent
motives the sender-preferred direct-signaling value is the quasiconcave envelope
of \(\overline V^S\). Let the corresponding values be
\(\mathcal V_{BP}^S(p)\), \(\mathcal V_{MD}^S(p)\), and
\(\mathcal V_{CT}^S(p)\).

\begin{theorem}[Signaling analogue of Theorem \ref{Thm: BP = MD iff BP = CT}]
\label{thm:signaling-bp-md-ct-app}
For every prior \(p\), $\mathcal V_{BP}^S(p)>\mathcal V_{MD}^S(p)$ if and only if $\mathcal V_{BP}^S(p)>\mathcal V_{CT}^S(p)$.
\end{theorem}

\begin{proof}
The only-if direction follows from
\(\mathcal V_{BP}^S(p)\ge \mathcal V_{MD}^S(p)\ge \mathcal V_{CT}^S(p)\).

For the converse, suppose \(\mathcal V_{BP}^S(p)=\mathcal V_{MD}^S(p)\), and
let \(\eta\in\mathcal T_{MD}^S(p)\) attain the persuasion value. The statement is obvious when $\max \oV^S = \min \uV^S$, so we assume that $B\coloneqq \max \oV^S - \min \uV^S > 0$. Let $\bar v = \mathbb{E}_\eta[v]$ and fix $\e \in (0,1/B)$. Consider the same construction as in the proof of Theorem \ref{Thm: BP = MD iff BP = CT}, define $\eta_\e \in \D(\D(\W)\times \R)$ by $\frac{\de \eta_\e}{\de \eta}(\mu,v) = 1+\e(v-\bar v) > 0$. By construction, $\eta_\e$ has the same support as $\eta$, so it satisfies the signaling analogue of \eqref{Eq: Obedience}. Since $\eta$ is feasible under mediation, $\Cov_\eta[v,\mu]=0$. Therefore,
\[
\mathbb{E}_{\eta_\e}[\mu] = \mathbb{E}_{\eta}[\mu] + \e \mathbb{E}_{\eta}[(v-\bar v)\mu] = p + \e \Cov_\eta[v,\mu] = p,
\]
so $\eta_\e$ satisfies the signaling analogue of \eqref{Eq: Bayes-plausibility}. It follows that $\eta_\e \in \TT_{BP}^S(p)$, with an expected value $\mathbb{E}_{\eta_\e}[v] = \mathbb{E}_\eta[v]+\e \Var_\eta[v]$. Therefore, if $\Var_\eta[v] >0$, then $\mathbb{E}_{\eta_\e}[v] > \mathbb{E}_\eta[v]$, contradicting the optimality of $\eta$. Hence, $v$ is constant over the support of $\eta$. 
By the flatness
characterization of \cite{koessler2024belief}, $\eta$ is feasible under direct signaling with cheap talk. Thus
\(\mathcal V_{CT}^S(p)=\mathcal V_{BP}^S(p)\).
\end{proof}

Let \(\mathbf V_{CT}^S\) be the direct-signaling cheap-talk correspondence
generated by the flat convexification of \(\mathbf V^S\), and let
\(\overline V_{CT}^S\) and \(\underline V_{CT}^S\) be its upper and lower
selections. The strict hull lemma used in the proof of Theorem
\ref{Thm: MD vs CT} applies to \(\mathbf V^S\): for every \(q\) and \(s\),
\[
    \overline V_{CT}^S(q)>s
    \quad\Longleftrightarrow\quad
    q\in\co\{\mu:\overline V^S(\mu)>s\},
\]
and
\[
    \underline V_{CT}^S(q)<s
    \quad\Longleftrightarrow\quad
    q\in\co\{\mu:\underline V^S(\mu)<s\}.
\]
This follows from the same proof as Lemma \ref{Lem: Envelope and cvx hull},
with a caveat that $\BV^S(\mu)$ need not be convex-valued. This is addressed by the truncation argument in Claim 2 of \cite{koessler2024belief} and its symmetric counterpart. For the upper-envelope equivalence, we work with 
\[
\BV^S_{up}(\mu) \coloneqq \bigcup_{x\in X}\left(\mathbf W(\mu,x)\cap \left[\max_{x'\in X}\underline{\mathbf W}(\mu,x'),+\infty\right)\right),
\]
which is interval-valued and has the same upper selection as $\BV^S$. For the lower-envelope equivalence, we use
\[
\BV^S_{down}(\mu) \coloneqq \bigcup_{x\in X}\left(\mathbf W(\mu,x)\cap \left(-\infty,\min_{x'\in X}\overline{\mathbf W}(\mu,x')\right]\right),
\]
which is interval-valued and has the same lower selection as $\BV^S$. Applying the original proof to these two truncated correspondences gives the two equivalences.

Define the signaling cheap-talk hull at \(p\) by
\[
    H^{S,*}(p)
    :=
    \left\{
    \mu\in\D(\W):
    \exists \mu_0\in\D(\W)\text{ such that }
    \overline V_{CT}^S(p)\in\mathbf V_{CT}^S(\mu_0)
    \text{ and }p\in(\mu_0,\mu]
    \right\}.
\]
Full dimensionality holds if \(H^{S,*}(p)=\D(\W)\). Direct signaling is
directionally improvable at \(p\) if there are
\((\mu^+,v^+),(\mu^-,v^-)\in\Gr(\mathbf V_{CT}^S)\) such that
\[
    \mu^+\in(p,\mu^-),
    \qquad
    v^+>\overline V_{CT}^S(p)>v^-.
\]
It is hull-directionally improvable if, in addition, \(\mu^-\in H^{S,*}(p)\).

\begin{theorem}[Signaling analogue of Theorem \ref{Thm: MD vs CT}]
\label{thm:signaling-md-ct-app}
The following hold.
\begin{enumerate}
    \item If direct signaling is hull-directionally improvable at \(p\), then
    mediation is valuable.
    \item If mediation is valuable at \(p\), then direct signaling is
    directionally improvable at \(p\).
\end{enumerate}
Moreover, if full dimensionality holds at \(p\), then mediation is valuable at
\(p\) if and only if direct signaling is directionally improvable at \(p\).
\end{theorem}

\begin{proof}
Let \(s=\overline V_{CT}^S(p)\).

For the first statement, suppose there are
\((\mu^+,v^+),(\mu^-,v^-)\in\Gr(\mathbf V_{CT}^S)\) such that
\(\mu^+\in(p,\mu^-)\), \(v^+>s>v^-\), and \(\mu^-\in H^{S,*}(p)\). Choose
\(\lambda\in(0,1)\) with
\(\mu^+=\lambda\mu^-+(1-\lambda)p\), and let
\(\eta^+\in\mathcal T_{CT}^S(\mu^+)\) and
\(\eta^-\in\mathcal T_{CT}^S(\mu^-)\) attain \(v^+\) and \(v^-\). Set $\xi
    :=
    \frac{\frac{1}{\lambda}(s-v^-)}
    {v^+-s+\frac{1}{\lambda}(s-v^-)}$.
Then $\mathbb E_{\xi\eta^+ +(1-\xi)\eta^-}
    [(v-s)(\mu-p)]
    =
    \mathbf 0$.
Since \(\mu^-\in H^{S,*}(p)\), there exist \(\mu_0\) and
\(\alpha\in(0,1)\) such that
\(\overline V_{CT}^S(p)\in\mathbf V_{CT}^S(\mu_0)\) and
\[
    p=(1-\alpha)\mu_0+\alpha\big(\xi\mu^+ +(1-\xi)\mu^-\big).
\]
Let \(\eta_0\in\mathcal T_{CT}^S(\mu_0)\) attain \(s\), and define
\[
    \tilde\eta
    :=
    (1-\alpha)\eta_0+\alpha\xi\eta^+
    +\alpha(1-\xi)\eta^-.
\]
Then \(\tilde\eta\) satisfies the signaling analogues of
\eqref{Eq: Bayes-plausibility}, \eqref{Eq: Obedience}, and
\eqref{eq:zero_cov}, and hence is mediation-feasible by Theorem
\ref{thm:signaling-implementability-app}. Its payoff is
\[
    \mathbb E_{\tilde\eta}[v]
    =
    s+
    \alpha\left(\frac{1}{\lambda}-1\right)
    \frac{(v^+-s)(s-v^-)}
    {v^+-s+\frac{1}{\lambda}(s-v^-)}
    >
    s,
\]
so \(\mathcal V_{MD}^S(p)>\mathcal V_{CT}^S(p)\).

For the second statement, suppose mediation is valuable. By the finite-support
argument in Remark \ref{Rmk: Existence}, there is
\(\eta\in\mathcal T_{MD}^S(p)\) with finite support and
\(\mathbb E_\eta[v]>s\). Let
\(H=\{(\mu,v):v>s\}\) and \(L=\{(\mu,v):v<s\}\), and define
\[
    A=\int_H (v-s)\,d\eta(\mu,v),
    \qquad
    B=\int_L (s-v)\,d\eta(\mu,v).
\]
Then \(A>B\ge 0\). We first show \(B>0\). If \(B=0\), then \(v\ge s\)
\(\eta\)-almost surely and, using \eqref{eq:zero_cov},
\[
    \mathbf 0
    =
    \int (v-s)(\mu-p)\,d\eta(\mu,v)
    =
    \int_H (v-s)(\mu-p)\,d\eta(\mu,v).
\]
Thus
\[
    p=\frac{1}{A}\int_H (v-s)\mu\,d\eta(\mu,v).
\]
Since \(\eta\) has finite support and every point in \(H\cap\supp(\eta)\)
satisfies \(\overline V^S(\mu)>s\), the strict hull lemma implies
\(\overline V_{CT}^S(p)>s\), a contradiction. Hence \(B>0\).

Define
\[
    \bar\mu^+
    :=
    \frac{1}{A}\int_H (v-s)\mu\,d\eta(\mu,v),
    \qquad
    \bar\mu^-
    :=
    \frac{1}{B}\int_L (s-v)\mu\,d\eta(\mu,v).
\]
By Bayes plausibility and zero covariance,
\[
    A(\bar\mu^+-p)-B(\bar\mu^- -p)=\mathbf 0,
\]
so $\bar\mu^+-p=\frac BA(\bar\mu^- -p)$.
Since \(0<B/A<1\), \(\bar\mu^+\in(p,\bar\mu^-)\). Moreover,
\(\bar\mu^+\in\co\{\mu:\overline V^S(\mu)>s\}\) and
\(\bar\mu^-\in\co\{\mu:\underline V^S(\mu)<s\}\). By the strict hull lemma, $\overline V_{CT}^S(\bar\mu^+)>s$ and $\underline V_{CT}^S(\bar\mu^-)<s.$
Hence there are
\((\bar\mu^+,v^+),(\bar\mu^-,v^-)\in\Gr(\mathbf V_{CT}^S)\) with
\(v^+>s>v^-\) and \(\bar\mu^+\in(p,\bar\mu^-)\). Direct signaling is
directionally improvable.

Under full dimensionality, directional improvability and hull-directional
improvability coincide, so the two statements imply the final equivalence.
\end{proof}

\begingroup
\small
\begin{spacing}{0.9}
\setlength{\bibsep}{0pt plus 0.2ex}
\renewcommand{\bibsection}{%
  \section*{References for the Online Appendix}
  \vspace{-0.5em}
}
\putbib[reference]
\end{spacing}
\endgroup

\end{bibunit}

\end{document}